\numberwithin{equation}{section}
\theoremstyle{plain}
\newtheorem{theorem}{Theorem}[section]
\newtheorem{lemma}[theorem]{Lemma}
\newtheorem{proposition}[theorem]{Proposition}
\theoremstyle{definition}
\newtheorem{definition}[theorem]{Definition}
\newtheorem{remark}[theorem]{Remark}
\newcommand{\nsc}{\mathrm{nsc}}
\newcommand{\dg}{\mathrm{eu}}
\newcommand{\Indi}{\mathbbm{1}}
\newcommand{\AC}{\mathcal{AC}}
\newcommand{\Fa}{\mathsf{F}}
\newcommand{\gr}{\mathsf{gr}}
\newcommand{\F}{\mathcal{F}}
\newcommand{\M}{\mathcal M}
\newcommand{\N}{\mathbb{N}}
\newcommand{\Z}{\mathbb{Z}}
\newcommand{\R}{\mathbb{R}}
\newcommand{\clos}{\mathrm{clos}}
\newcommand{\C}{\mathcal{C}}
\newcommand{\defe}{\mathrm{def}}
\newcommand{\E}{\mathcal{E}}
\newcommand{\supp}{\mathrm{supp}\;}
\newcommand{\ud}{\,\mathrm{d}}
\renewcommand\tilde{\widetilde}
\newcommand{\ii}{\bar\imath}
\newcommand{\res}{\mathop{\hbox{\vrule height 7pt width .5pt depth 0pt
\vrule height .5pt width 6pt depth 0pt}}\nolimits}
\newcommand{\T}{\mathcal T}
\newcommand{\weakly}{\rightharpoonup}
\newcommand{\weakstar}{\overset{\ast}{\rightharpoonup}}
\newcommand{\Fsq}{\mathsf{F}^{\diamondsuit}}
\newcommand{\defl}{\mathrel{\mathop:}=}
\newcommand{\bad}{\mathrm{bad}} %bad faces
\newcommand{\Add}{\mathsf{Add}} %additional edges
\newcommand{\G}{\mathsf{G}}
\newcommand{\Fbad}{\mathsf{F}^{\mathrm{bad}}}
\newcommand{\eu}{\mathrm{Euler}}
\def\Om{\Omega}
\def\E{\mathcal{E}}
\def\F{\mathcal{F}}
\def\dist{\textup{dist}}
\def\1{\mathbf{1}}
\def\loc{\mathrm{loc}}
\def\Int{\mathrm{int}}
\def\XXint#1#2#3{{\setbox0=\hbox{$#1{#2#3}{\int}$ }
\vcenter{\hbox{$#2#3$ }}\kern-.57\wd0}}
\def\E{\mathcal{E}}
\newcommand{\I}{\mathcal{I}}
\newcommand{\Ed}{\mathsf{Ed}}
\newcommand{\Wi}{\mathrm{wire}}
\newcommand{\ext}{\mathrm{ext}}
\newcommand{\bow}{\mathrm{bow}}
\newcommand{\inte}{\mathrm{int}}
\newcommand{\X}{\mathsf{X}}
\newcommand{\Y}{\mathsf{Y}}
\newcommand{\Per}{\mathrm{Per}}
\def\red#1{\textcolor{red}{#1}}
\title[Vectorial crystallization] {Vectorial crystallization problems  \\
and collective behavior}
\author[L. De Luca]
{L. De Luca}
\address[Lucia De Luca]{
IAC-CNR, Via dei Taurini, 19 I-00184 Rome, Italy
}
\email[L. De Luca]{lucia.deluca@cnr.it}
\author[A. Ninno]
{A. Ninno}
\address[Angelo Ninno]{
Dipartimento di Matematica ``G. Castelnuovo'', Sapienza Universit\`a di Roma, Piazzale A. Moro 2, I-00185, Rome, Italy
}
\email[A. Ninno]{angelo.ninno@uniroma1.it}
\author[M. Ponsiglione]
{M. Ponsiglione}
\address[Marcello Ponsiglione]{Dipartimento di Matematica ``G. Castelnuovo'', Sapienza Universit\`a di Roma, Piazzale A. Moro 2, I-00185, Rome, Italy
}
\email[M. Ponsiglione]{ponsigli@mat.uniroma1.it}
\begin{document}
 \maketitle

%\begin{abstract}
%We propose and analyze a class of vectorial crystallization problems, with applications to crystallization of anisotropic molecules and collective behavior such as birds flocking and fish schooling. We focus on  two-dimensional systems of particles endowed with  an orientation and we introduce a one-parameter-family of pairwise interaction energies, where the potential depends on the mutual position of the particles and on the angles between the segment joining the two particles and their orientations. Our variational principle consists in maximizing the number of bonds of the graph generated by admissible configurations and representing the ``convenient'' interactions between particles.
%%We exploit the graph structure induced by any and the
%%The angular parameter which defines the energy determines the interaction between the particles.
%
%Different ground states emerge by tuning the angular parameter in the potential, mimicking ducklings swimming  in a row formation and predicting as well,  for some specific values of the angular parameter,  the so-called {\it diamond fomation} in fish schooling. 
%\vskip5pt
%\noindent
%\textsc{Keywords: Crystallization; Collective behavior; Graph theory.} 
%\vskip5pt
%\noindent
%\textsc{AMS subject classifications:} \red{???}

\begin{abstract}
	We propose and analyze a class of vectorial crystallization problems, with applications to crystallization of anisotropic molecules and collective behavior such as birds flocking and fish schooling. 
	
	We focus on  two-dimensional systems of ``oriented'' particles: Admissible configurations are represented by vectorial empirical measures with density in $\mathcal S^1$. We endow such configurations with a graph structure, where the bonds represent
	the ``convenient'' interactions between particles, and 
	the proposed variational principle consists in maximizing their  number. 
	The class of bonds is determined by hard sphere type  pairwise potentials, depending both on the distance between the particles and on the angles between the segment joining two particles and their orientations, through threshold criteria. 
	%not only on the mutual position of the particles, but also  on the angles between the segment joining the two particles and their orientations. 
		
	Different ground states emerge by tuning the angular dependence in the potential, mimicking ducklings swimming  in a row formation and predicting as well,  for some specific values of the angular parameter,  the so-called {\it diamond formation} in fish schooling. 
	\vskip5pt
	\noindent
	\textsc{Keywords: Crystallization; Collective behavior; Graph theory; Variational methods.} 
	\vskip5pt
	\noindent
	\textsc{AMS subject classifications:} 70C20, 05C10,  49J45,  82D25.
\end{abstract}
%%%%%%%%%%%%%%%%%%%%%%%%%%%%%%%%%%%%%%
%%%%%%%%%%%%%%%%%%%%%%%%%%%%%%%%%%%%%%
%%%%%%%%%%%%%%%%%%%%%%%%%%%%%%%%%%%%%%
\tableofcontents
%%%%%%%%%%%%%%%%%%%%%%%%%%%%%%%%%%%%%%
%%%%%%%%%%%%%%%%%%%%%%%%%%%%%%%%%%%%%%
%%%%%%%%%%%%%%%%%%%%%%%%%%%%%%%%%%%%%%
\section*{Introduction}
The crystallization problem consists in understanding periodic configurations of atoms or molecules;   self organization in ordered configurations is a  general issue in nature and a central problem in biology \cite{PaKe}: birds flocking and fish schooling are typical examples of collective behavior and formation of ordered structures \cite{Rey, Fish}. 

Variational principles are transversal in nature, and in view of their predictive properties and ability to synthesize modeling perspectives, they have been a sound and fruitful  reading key of many complex phenomena; relevant configurations of particles in ordered structures can be detected as ground states of energy functionals:
%, describing the interactions among the particles. 
Classical potentials adopted in the variational formulations involve  pairwise interactions depending on the mutual distances between the particles; among them, short range repulsive/long range
 attractive potentials are very relevant, predicting in particular  regular triangular lattices \cite{Theil} (see also \cite{BDP} for a crystallization result in the square lattice).
%Simple pair-wise potentials only depending on the mutual distances between the particles
% are very popular and effective to describe 
% much relevant physical systems, governed by 
% short range repulsive/long range
% attractive interactions,   predicting in particular  regular triangular lattices \cite{Theil}. 

In this paper, we consider systems composed by elements, referred to as particles, endowed with an orientation that affects their  interactions, and in turn the specific ground states. This is the case of anisotropic, e.g., elongated molecules (as proteins often are), where the orientation is given by the shape of the molecule, as well as the case of collective behaviors, where each individual is oriented according with its visual cone, or (mostly equivalently) with its velocity. Such orientations may  induce highly anisotropic structures: Patchy particles with tetrahedral symmetry form diamond lattices \cite{NVDL} and diamond formations turn out to be energetically efficient for swimming
of schooled fishes \cite{Bre65,W73,JCLiao};  ducklings move in a row formation, while the so called {\it V-formation} is convenient for ducks as well as for many other birds migration \cite{Fish}.  

Although this kind of interactions, depending on mutual positions and orientations, attracted much attention in modeling and simulation issues, their mathematical analysis received until now only occasional interest.  Also the theory of {\it boids}  \cite{Rey},  developed  to replicate birds flight,  while having a  big impact also  for  entertainment purposes, to the best of our knowledge, still lacks of  a rigorous theoretical analysis.      

The purpose of this paper is to provide a solid mathematical ground to describe and attack this kind of problems, and to  propose and analyze in details a  basic simple model in two dimensions able to predict diamond formations.  The model is based on classical variational approaches to  crystallization problems, but dealing with vectorial (rather than scalar) empirical measures with densities on the unit circle $\mathcal S^1$, taking into account the orientation of the particles. 
The optimal configuration is assumed to maximize the number of convenient interactions between particles, where the notion of convenience is determined by the mutual position and orientation between pairs of particles. 
%It is convenient to represent these favorable interactions as bonds of a graph.
Loosely speaking, our model is based on representing the admissible configurations as graphs, where the bonds are determined by the vectorial empirical measures, while the variational principle consists just in maximizing the number of such (convenient) bonds.  A relevant issue  of the proposed  model is that, in defining the graph structure, and, in turn, the class of bonds, the positional and orientational variables are coupled; in this respect our model is also able to predict, rather than assume, alignment of orientations. Clearly, the specific rules determining the favorable interactions should rely on specific modeling considerations: for instance, in a pair of  caudal swimmers the energy spent by the follower depends on the vortices in the fluid produced by the leader, and hence its energetically saving position depends on the angle between the direction of swimming and that of the line joining the two swimmers \cite{Fish, Bre76}.  The terrific simplification we introduce in these modeling issues consists in assuming that both positional and orientational variables determine the class of convenient interactions only through threshold criteria, tuned by only two parameters, one acting on the mutual positions and the other one on both mutual positions and orientations. The positional short range repulsive/long range attractive behavior is mimicked by the hard sphere formalism as in the Heitmann-Radin model \cite{HeRa} (see also \cite{AFS, DLF1, DNP, KP, FKS}): particles are represented as discs with fixed radius that cannot overlap and that may interact only if they are tangent each other. Such interaction is switched on only if the angle formed by the orientations of both particles and the segment joining them is smaller than a given angular parameter, acting as a threshold. Therefore, renormalizing the hard disc radius, the model depends only on the angular parameter representing the amplitude of the visual cone of each particle.

Varying such a parameter, we find different ground states (see Subsection \ref{examples}). Clearly,  the $0$-angular visual cone enforces one-dimensional ground states where particles have all the same orientation and are aligned on a row oriented accordingly, mimicking cyclist and duckling formations.
% having the same direction as the orientations.
On the other hand, tuning off the dependence on the orientation variable, the proposed model gives back the Heitmann-Radin model and the corresponding regular triangular lattice. Remarkably, for a specific choice of the angular visual cone, we show that diamond configurations minimize the energy; furthermore, in the thermodynamic limit as the number of particles diverges we show that all configurations with suitable energy bounds from above, 
consist in patched configurations of diamond formations with bounded perimeter.

While  the specific model analyzed in details is very  basic,  the proposed methodology seems to be robust and 
 could be enriched and generalized in many directions.
As for instance, one could relax the threshold criteria seeking for other relevant configurations, like the V-formation. 
Further restrictions of our model are the two-dimensional setting and the fact that long range interactions are neglected.
Moreover, our purely variational and static approach could  be compared with reacher models \cite{ucc} accounting for 
topological rather than metric distances,
fluctuations and dynamical aspects.
%deserve further investigations and extensions to related approaches. 
In particular, we remark that the orientational variable in our model has only a purely ideal interpretation in terms of the velocity of the particles; a justification of such an interpretation and rigorous connections with true dynamical models could deserve further investigation. 
%As already mentioned, the model we propose is just a basic two dimensional model that could be enriched and generalized under many aspects. As for instance, one could relax our threshold criteria in order to obtain other relevant configurations, like the V formation. Moreover, 

The paper is organized as follows.
In Section \ref{prelimgra} we collect some preliminaries on planar graphs that will be used throughout the paper.
Section \ref{ilmodello} is devoted to the description of our model and to the analysis of the qualitative properties of the ground states.
In Section \ref{rad32sec} we prove the minimality of the diamond formation and the compactness properties of quasi-minimizers of the energy. 
Finally, some technical lemmas and their proofs are collected in Appendix \ref{aux2}.
%%%%%%%%%%%%%%%%%%%%%%%%%%%%%%%%%%%%%%
%%%%%%%%%%%%%%%%%%%%%%%%%%%%%%%%%%%%%%
%%%%%%%%%%%%%%%%%%%%%%%%%%%%%%%%%%%%%%
%%%%%%%%%%%%%%%%%%%%%%%%%
%%%%%%%%%%%%%%%%%%%%%%%%%
%%%%%%%%%%%%%%%%%%%%%%%%%
\section{Preliminaries on planar graphs}\label{prelimgra}

Here we collect some notions and notation on planar graphs that will be adopted in this paper.

Let $\X$ be a finite subset of $\R^2$ and let $\Ed$ be a given subset of $\mathsf{E}(\X)$\,, where
\begin{equation}\label{tuttibond}
\mathsf{E}(\X):= \{\{x,y\}\subset \R^2\,:\, x,y\in\X\,,\,x\neq y\}\,.
\end{equation}

 The pair $\G=(\X,\Ed)$ is called {\it graph};  $\X$ is called  the set of {\it vertices} of $\G$ and $\Ed$ is called the set of {\it edges} (or {\it bonds}) of $\G$\,.

Given $\X'\subset\X$ we denote by $\G_{\X'}$ the {\it subgraph} (or {\it restriction}) of $\G$ generated by $\X'$\,, defined by $\G_{\X'}=(\X',\Ed')$ where $\Ed':=\{\{x',y'\}\in\Ed\,:\, x',y'\in\X'\}$\,.

\begin{definition}\label{conncomp}
We say that two points $x,z\in\X$ are connected and we write $x\sim z$ if there exist $M\in\N$ and a {\it path} $x=y_0,\ldots,y_M=z$ such that $\{y_{m-1},y_m\}\in\Ed$ for every $m=1,\ldots,M-1$\,. We say that  $\G_{\X_1},\ldots,\G_{\X_K}$ with $K\in\N$ are the {\it connected components} of $\G$ if 
$\{\X_1,\ldots,\X_{K}\}$ is a partition of $\X$
and for every $k,k'\in\{1,\ldots,K\}$ with $k\neq k'$ it holds
\begin{align*}
x_k\sim y_k\qquad&\textrm{for every }x_k,y_k\in\X_k\,,\\
x_{k}\not\sim x_{k'}\qquad&\textrm{for every }x_k\in\X_k\,, x_{k'}\in\X_{k'}\,.
\end{align*}
If $\G$ has only one connected component we say that $\G$ is {\it connected}.
\end{definition}

We say that $\G$ is planar if for every pair of (distinct) bonds $\{x_1,x_2\}, \{y_1,y_2\}\in\Ed$, the (open) segments $(x_1,x_2)$ and $(y_1,y_2)$ have empty intersection.

From now on we assume that $\G=(\X,\Ed)$ is planar, so that we can introduce the notion of face (see also \cite{DLF1}).

By a face $f$ of $\G$ we mean any open, bounded, connected component of 
$
\R^2\setminus \big(\X\cup\bigcup_{\{x,y\}\in\Ed}[x,y]\big)$, which is also simply connected; here $[x,y]$ is the closed segment with extreme points $x$ and $y$. We denote by $\Fa^{\mathrm{sc}}(\G)$\,, or simply by $\Fa(\G)$\,, the set of faces of $\G$\,. Moreover, we denote by $\Fa^{\nsc}(\G)$ the set of open, bounded, connected components of 
$
\R^2\setminus \big(\X\cup\bigcup_{\{x,y\}\in\Ed}[x,y]\big)$, which are not simply connected.
We warn the reader that, in standard literature, also the elements of $\Fa^{\nsc}(\G)$ are called faces.
% for non-connected graphs, the definition above slightly differs from standard conventions because ring-shaped regions bounded by two cycles of bonds are, according to our definition, not faces. 
Moreover we set
\begin{equation*}%\label{defO}
O(\G):=\bigcup_{f\in\Fa(\G)}\clos(f)\,.
\end{equation*}
%%%%%%%%%%%%%%%%%%%%%%%%%%
%%%%%%%%%%%%%%%%%%%%%%%%%%
%%%%%%%%%%%%%%%%%%%%%%%%%%

With a little abuse of language we will say that an edge $\{x,y\}$ lies on a set $E\subset\R^2$ if the segment $[x,y]$ is contained in $E$\,.
We classify the edges in $\Ed$ in the following subclasses:
\begin{itemize}
\item $\Ed^{\Int}$ is the set of {\it interior edges}, i.e., of edges lying on the boundary of two (distinct) faces; 
\item $\Ed^{\Wi,\ext}$ is the set of {\it exterior wire edges}, i.e., of edges that do not lie on the boundary of any face;
\item $\Ed^{\Wi,\mathrm{int}}$ is the set of {\it interior wire edges}, i.e., of edges lying on the boundary of precisely one face but not on the boundary of its closure (or, equivalently, of $O(\G)$)\,;%the closure of the union of all faces); 
\item $\Ed^{\partial}$  is the set of {\it boundary edges}, i.e., of edges lying on $\partial O(\G)$\,. 
%the boundary of the closure of the union of all faces.
\end{itemize}
%Moreover, we set $\Ed^{\Wi}:=\Ed^{\Wi,\ext}\cup \Ed^{\Wi,\inte}$\,.
%

%We define the {\it graph-perimeter} of $\G$ as
%\begin{equation}\label{graphperG}
%\Per_{G}(\G):=\sharp\Ed^{\partial}+2\sharp\Ed^{\Wi,\ext}\,.
%\end{equation}

Analogously, for every face $f\in \Fa(\G)$ one can define the following subclasses of edges delimiting $f$:
\begin{itemize}
\item $\Ed^{\Wi,\inte}(f)$ is the set of edges lying on the boundary of $f$ but not on the boundary of the closure of $f$;
\item $\Ed^{\partial}(f)$ is the set of edges lying on the boundary of the closure of $f$.
\end{itemize}
%The perimeter of a face $f$ is given by the perimeter of its closure, i.e.,
%\begin{equation}\label{perimeterface}
%\Per(f):=\sharp \Ed^{\partial}(f);
%\end{equation}
%moreover, for every face $f$ we define the {\it graph-perimeter} of $f$ as
% \begin{equation}\label{graphgeoper}
%\Per_{\gr}(f):=\Per(f)+2\sharp \Ed^{\Wi}(f).
%\end{equation}
Finally, we define the Euler characteristic of the graph $\G=(\X,\Ed)$. To this purpose,
set $l_0(\G)=l^{\eu}_0(\G):=\sharp \X$\,, $l_1(\G)=l^{\eu}_1(\G):=\sharp \Ed$\,, $l_2(\G):=\sharp \Fa(\G)$\,, and $l^{\eu}_2(\G):=l_2(\G)+\sharp \Fa^{\nsc}(\G)$.  Then, we introduce the standard Euler characteristic $\chi^{\eu}(\G)$ of $\G$\,, together with a slight variant $\chi(\G)$ of it that will be useful for our purposes:
\begin{equation} \label{Euchar}
\chi^{\eu}(\G) := \sum_{k=0}^2 (-1)^k l^{\eu}_k(\G)\,,\qquad \chi(\G) := \sum_{k=0}^2 (-1)^k l_k(\G)\,.
\end{equation}
%%%%%%%%%%%%%%%%%%%%%%%%%%
%%%%%%%%%%%%%%%%%%%%%%%%%%
%%%%%%%%%%%%%%%%%%%%%%%%%%
In the next result we recall the classical Euler characteristic formula for planar graphs together with its analogous for $\chi(\G)$\,.
%%%%%%%%%%%%%%%%%%%%%%%%%%
%%%%%%%%%%%%%%%%%%%%%%%%%%
%%%%%%%%%%%%%%%%%%%%%%%%%%
\begin{lemma}\label{lm:bouchar}
For every planar graph $\G$ the Euler characteristic $\chi^{\eu}(\G)$ is equal to the number of connected components of $\G$\,. 
Moreover
\begin{equation}\label{bouchar}
\chi(\G)\ge 1\,,
\end{equation}
and the equality in \eqref{bouchar} holds true whenever $\G$ is connected.
\end{lemma}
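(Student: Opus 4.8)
The first assertion is the classical Euler relation for plane graphs, read on the sphere. Viewing the support $\X\cup\bigcup_{\{x,y\}\in\Ed}[x,y]$ of $\G$ as a subset of $S^{2}=\R^{2}\cup\{\infty\}$, the connected components of its complement in $S^{2}$ are exactly the bounded components of its complement in $\R^{2}$ — of which there are, by the very definition of $l^{\eu}_{2}$, precisely $l^{\eu}_{2}(\G)=l_{2}(\G)+\sharp\Fa^{\nsc}(\G)$ — together with the single unbounded one, to which $\infty$ is now added; hence there are $l^{\eu}_{2}(\G)+1$ of them. The generalized Euler formula $l_{0}(\G)-l_{1}(\G)+(\text{number of faces of }\G\text{ in }S^{2})=1+c(\G)$, where $c(\G)$ denotes the number of connected components of $\G$ — which one proves by induction on $\sharp\Ed$, removing one edge at a time and noting that a bridge leaves the set of spherical faces unchanged while disconnecting one component, whereas a non-bridge edge bounds two distinct faces, which then merge — rearranges at once into $\chi^{\eu}(\G)=c(\G)$.

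Turning to the second assertion, the definitions of $l_{2}$ and $l^{\eu}_{2}$ yield $\chi^{\eu}(\G)=\chi(\G)+\sharp\Fa^{\nsc}(\G)$, so by the first part
\begin{equation}\label{eq:chiVScomp}
\chi(\G)=c(\G)-\sharp\Fa^{\nsc}(\G)\,.
\end{equation}
In particular $\chi(\G)\le c(\G)$, so it suffices to prove $\chi(\G)\ge1$: this is precisely \eqref{bouchar}, and if $\G$ is connected then, combined with \eqref{eq:chiVScomp} in the case $c(\G)=1$, it forces $\chi(\G)=1$ (and, incidentally, $\sharp\Fa^{\nsc}(\G)=0$).

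To prove $\chi(\G)\ge1$ I would enlarge $\G$ by adding edges realised as simple arcs embedded in the interiors of faces, so that the resulting objects are topological planar graphs, for which all the notions above — and the first assertion — remain valid. The elementary point is the following: if one adds an arc through the interior of a bounded non–simply connected face $R$ joining two vertices lying on two distinct connected components of $\partial R$, then $l_{0}$ is unchanged, $l_{1}$ increases by one, and $R$ is replaced by a single region $R'$ with $b_{1}(R')=b_{1}(R)-1$; hence $l_{2}$, the number of simply connected bounded faces, increases by $1$ if $b_{1}(R)=1$ and by $0$ otherwise, so in any case $\chi=l_{0}-l_{1}+l_{2}$ does not increase. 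Such an arc exists as long as $R$ is a bounded non–simply connected face, since then $\R^{2}\setminus R$ has at least two connected components and so $\partial R$ is disconnected; and each such move strictly lowers $\sum_{R}b_{1}(R)$ over the bounded faces $R$. Performing it repeatedly therefore terminates, after finitely many steps, at a topological planar graph $\tilde\G\supseteq\G$ on the vertex set $\X$ all of whose bounded faces are simply connected. For $\tilde\G$, \eqref{eq:chiVScomp} reads $\chi(\tilde\G)=c(\tilde\G)\ge1$, and by the monotonicity just established $\chi(\G)\ge\chi(\tilde\G)\ge1$, which concludes the argument.

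The bookkeeping with $\chi$ and $\chi^{\eu}$ is routine; the part that I expect to require care is the behaviour of a face under such a cross-cut — namely that an arc joining two distinct connected components of the boundary of a face does not disconnect it and lowers its first Betti number by exactly one, and that a bounded non–simply connected face has at least two boundary components — which I would deduce from the Jordan curve theorem together with the elementary structure of finite-type planar surfaces.
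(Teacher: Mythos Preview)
Your proof is correct. Both you and the paper start from the identity $\chi(\G)=c(\G)-\sharp\Fa^{\nsc}(\G)$, but then diverge. The paper argues directly: every hole of a non--simply connected face must contain at least one connected component of $\G$, and an ``easy induction on $\sharp\Fa^{\nsc}(\G)$'' then yields $c(\G)\ge\sharp\Fa^{\nsc}(\G)+1$, i.e.\ $\chi(\G)\ge1$; the connected case follows since then $\sharp\Fa^{\nsc}(\G)=0$. You instead run a monotone process: add an arc through a non--simply connected face $R$ joining vertices on two distinct components of $\partial R$. Since those two boundary components necessarily lie on distinct connected components of $\G$ (any path in the graph between them would have to cross $R$), each move lowers $c$ by one, raises $l_1$ by one, and turns $R$ into a single face with one fewer hole --- so $\chi$ drops by one or stays fixed. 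Terminating at a graph $\tilde\G$ with only simply connected bounded faces gives $\chi(\G)\ge\chi(\tilde\G)=c(\tilde\G)\ge1$.

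The two arguments rest on the same topological input --- a bounded non--simply connected face of a finite plane graph has disconnected boundary, with the pieces lying on distinct components of the graph --- but the paper packages it as a counting inequality, while you realise it constructively. Your route is more explicit about the underlying topology (and your last paragraph correctly flags exactly the facts that need the Jordan curve theorem); the paper's is terser but leaves the induction to the reader.
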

%%%%%%%%%%%%%%%%%%%%%%%%%%
%%%%%%%%%%%%%%%%%%%%%%%%%%
%%%%%%%%%%%%%%%%%%%%%%%%%%
\begin{proof}
The first sentence in the statement is the classical Euler formula. 
As for the proof of formula \eqref{bouchar}, we observe that $\chi^{\eu}(\G)-\chi(\G)=\sharp\Fa^{\nsc}(\G)$ and that
 for every $f\in\Fa^{\nsc}(\G)$\,, each ``hole'' of $f$ ``contains'' at least a connected component of $\G$\,; by an easy induction argument on the number $\sharp\Fa^{\nsc}(\G)$ of non simply connected faces one can get \eqref{bouchar}.

Finally, if $\G$ is connected, then $\sharp\Fa^{\nsc}(\G)=0$ so that $\chi(\G)=\chi^{\eu}(\G)=1$\,.
\end{proof}

We define the {\it graph-perimeter} of $\G$ as
\begin{equation}\label{graphperG}
\Per_{\gr}(\G):=\sharp\Ed^{\partial}+2\sharp\Ed^{\Wi,\ext}\,.
\end{equation}
%Notice that $\Per$ is the standard De Giorgi perimeter, whereas $\Per_{\gr}$ represents its relaxation for outer approximations with open sets.
%}
%Analogously, for every face $f\in \Fa(X,V)$ one can define the following subclasses of edges delimiting $f$:
%\begin{itemize}
%\item $\Ed^{\Wi}(f)$ is the set of edges lying on the boundary of $f$ but not on the boundary of the closure of $f$;
%\item $\Ed^{\partial}(f)$ is the set of edges lying on the boundary of the closure of $f$.
%\end{itemize}

Analogously, the {\it graph-perimeter} of a face $f$ is defined by
 \begin{equation}\label{graphgeoper}
\Per_{\gr}(f):=\sharp\Ed^{\partial}(f)+2\sharp \Ed^{\Wi,\inte}(f).
\end{equation}

Now we show how a planar graph can be triangulated, controlling the number of the required additional bonds.
Since the triangulation procedure is local, we will focus on graphs having only one face.

\begin{lemma}\label{triang}
Let $\G=(\X,\Ed)$ be a planar graph having only one face $f$. If $\Per_{\gr}(f)\ge 4$, then there exists a planar graph $\bar\G=(\X,\bar\Ed)$ such that
\begin{itemize}
\item[(1)] $\Ed\subset\bar\Ed$;
\item[(2)] all the faces of $\Fa(\bar\G)$ are triangles;
\item[(3)] $(x,y)\subset f$ for every $\{x,y\}\in\bar\Ed\setminus\Ed$;
\item[(4)] $\sharp\bar\Ed=\sharp\Ed+\Per_{\gr}(f)-3$;
\item[(5)] $\sharp\Fa(\bar\G)=\Per_{\gr}(f)-2$. %1+\sharp\bar\Ed-\sharp\Ed
\end{itemize}
\end{lemma}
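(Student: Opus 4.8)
The plan is to triangulate the single face $f$ by induction on $\Per_{\gr}(f)$, adding one diagonal at a time so that each new edge lies inside $f$ and splits off one more triangle. The base case is $\Per_{\gr}(f)=3$: here the boundary of $f$ already consists of exactly three edges forming a triangle (since a face with graph-perimeter $3$ has no interior wire edges and its boundary is a triangle), so we take $\bar\G=\G$; properties (1)--(5) hold trivially, with $\sharp\bar\Ed=\sharp\Ed$ and $\sharp\Fa(\bar\G)=1$. The inductive step, for $\Per_{\gr}(f)=p\ge 4$, is to find one edge $\{x,y\}$ with $(x,y)\subset f$, joining two vertices on $\partial f$ that are not already joined by an edge of $\Ed$, such that adding it to $\Ed$ cuts $f$ into two faces $f_1,f_2$ with $\Per_{\gr}(f_1)=3$ and $\Per_{\gr}(f_2)=p-1$ (a diagonal cutting off a triangle). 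Applying the inductive hypothesis to $f_2$ inside the enlarged graph, and noting that the triangulation of $f_2$ does not disturb $f_1$, we conclude: one has $\sharp\bar\Ed=(\sharp\Ed+1)+\big((p-1)-3\big)=\sharp\Ed+p-3$, proving (4), and $\sharp\Fa(\bar\G)=1+\big((p-1)-2\big)=p-2$, proving (5); properties (1)--(3) are inherited.

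For the inductive step I would argue as follows. Traverse the boundary cycle(s) of $f$. Since $\Per_{\gr}(f)=p\ge 4$, there are at least four edge-slots on $\partial f$, so we can pick three consecutive boundary vertices $x, v, y$ along the boundary walk with $\{x,v\}$ and $\{v,y\}$ consecutive edges delimiting $f$ (here a boundary walk of $f$ counts interior wire edges twice, which is exactly why $\Per_{\gr}(f)$ rather than the number of distinct boundary edges is the relevant count). The segment $[x,y]$ together with $[x,v]$ and $[v,y]$ bounds a triangle $T$; by moving $[x,y]$ slightly if necessary one arranges that the open segment $(x,y)$ lies in $f$ and misses $\X$ — this uses that $f$ is open and simply connected and that $\G$ is planar, so $(x,y)$ meets no other edge of $\Ed$. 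Adding $\{x,y\}$ keeps the graph planar and splits $f$ into the triangle $f_1$ (bounded by $\{x,v\},\{v,y\},\{x,y\}$, so $\Per_{\gr}(f_1)=3$) and a second face $f_2$ whose boundary walk is obtained from that of $f$ by replacing the two edges $\{x,v\},\{v,y\}$ with the single edge $\{x,y\}$, hence $\Per_{\gr}(f_2)=p-1\ge 3$.

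The main obstacle is the book-keeping of degenerate situations, which is where the $\Per_{\gr}$ bookkeeping (counting wire edges with multiplicity) earns its keep. One has to handle the cases where the vertex $v$ is an endpoint of an interior wire edge of $f$, where $x$ and $y$ coincide as points of $\R^2$ but correspond to distinct positions on the boundary walk, and where $\{x,y\}$ might already belong to $\Ed$ as an edge bounding $f$ on both sides; in each of these one either chooses a different consecutive triple on the boundary walk or argues that the total edge/face count is unaffected. Provided one sets up the boundary walk carefully (as a closed walk in which interior wire edges are traversed twice and whose total length is precisely $\Per_{\gr}(f)$), the splitting $p \mapsto 3 + (p-1)$ is forced and the induction closes. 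Finally, since the construction is local to a single face, the restriction to graphs with one face stated at the top is harmless for later applications: one triangulates each face in turn.
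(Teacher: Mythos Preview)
Your proposal is correct and follows essentially the same inductive scheme as the paper: both triangulate $f$ by induction on $\Per_{\gr}(f)$, adding one diagonal $(x,y)\subset f$ at a time and using the splitting identity $\Per_{\gr}(f_1)+\Per_{\gr}(f_2)=\Per_{\gr}(f)+2$ to close the count. The only cosmetic difference is that you always clip off an ear (so $\Per_{\gr}(f_1)=3$, $\Per_{\gr}(f_2)=p-1$) and recurse on $f_2$, whereas the paper allows an arbitrary diagonal and recurses on both pieces; the paper's version has the mild advantage that one only needs the existence of \emph{some} diagonal (which it imports from \cite{ears}) rather than of an ear, sidestepping the degenerate boundary-walk cases you flag.
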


\begin{proof}
We proceed by induction on $\Per_{\gr}(f)$. 
If $\Per_{\gr}(f)=3,4$, the claims are clearly true.  Assume now that $\Per_{\gr}(f)\ge 5$ and that the claims are satisfied for every graph whose only face $g$ satisfies $\Per_{\gr}(g)<\Per_{\gr}(f)$.
We can always split $f$ into two simply connected sets $f_1,f_2$ by cutting $f$ with a segment $[x,y]$ joining two points $x,y\in \X$ such that $(x,y)$ is contained in $f$ %and does not intersect $\X\cup\bigcup_{\{w,z\}\in\Ed}(w,z)$ 
(see for instance \cite{ears}).
Setting $\Ed':=\Ed\cup\{x,y\}$, we have that the faces of $\G'=(\X,\Ed')$ are exactly $f_1$ and $f_2$.
Moreover, we have that
\begin{equation}\label{si}
\Per_{\gr}(f_1)+\Per_{\gr}(f_2)=\Per_{\gr}(f)+2.
\end{equation}
As a consequence, $\Per_{\gr}(f_i)<\Per_{\gr}(f)$ for $i=1,2$. 
By induction, we can triangulate both $f_1$ and $f_2$ adding $\Per_{\gr}(f_1)-3$ and $\Per_{\gr}(f_2)-3$ edges respectively. In view of \eqref{si} this implies that the number of additional edges used to triangulate $f$ is 
$$
1+\Per_{\gr}(f_1)-3+\Per_{\gr}(f_2)-3=\Per_{\gr}(f)-3.
$$ 
Moreover, again by the inductive step and by \eqref{si} we have that
$$
\sharp\Fa(\bar\G)=\Per_{\gr}(f_1)-2+\Per_{\gr}(f_2)-2=\Per_{\gr}(f)-2.
$$
\end{proof}

%%%%%%%%%%%%%%%%%%%%%%%%%%%%%%%%%%%%%%
%%%%%%%%%%%%%%%%%%%%%%%%%%%%%%%%%%%%%%
%%%%%%%%%%%%%%%%%%%%%%%%%%%%%%%%%%%%%%
\section{The variational model}\label{ilmodello}
In this section we describe our model, introducing the energy functional, and we study qualitative properties of its minimizers and almost minimizers.
\subsection{Vectorial empirical configurations and their energy}
Let $\AC$ be a given subset of 
\begin{equation*}%\label{tutte}
\C:=\{(X,V)\,:\, X=(x_1,\ldots,x_N)\in(\R^2)^N,\, V=(v_1,\ldots,v_N)\in(\mathcal{S}^1)^N, \,N\in\N \}\,, 
\end{equation*}
where $\mathcal{S}^1$ denotes the set of unitary vectors of $\R^2$\,. Here and throughout the paper $\N$ denotes the set of positive integers.
%We set $\mathrm{S}:=\{\{x,y\}\,:\, x,y\in\R^2\}$\,.
In the following we will refer to $\AC$ as the set of {\it admissible (vectorial empirical) configurations}.

For every $(X,V)\in\C$ with $X=(x_1,\ldots,x_N)$ we set $\X:=\{x_1,\ldots,x_N\}$; we will adopt such a notation also for a generic subset of $\R^2$ with $N$ elements.
Moreover if $(X,V)\in\AC_N$\,, for every $x\in\X$ we denote by $v(x)$ the orientation associated to the point $x$\,, i.e., if $x=x_i$ for some $i=1,\ldots,N$\,, then $v(x)=v_i$\,.
%We set
% $$
%\mathsf{E}(\X):= \{\{x,y\}\subset \R^2\,:\, x,y\in\X\,,\,x\neq y\}\,.
% $$

We assume that there exists a given map $\Ed$ that at each element $(X,V)$ of $\AC$ associates a subset $\Ed(X,V)$ of $\mathsf{E}(\X)$ (defined in \eqref{tuttibond}), referred to as the set  {\it edges} (or {\it bonds}) of $(X,V)$. 

We call {\it energy} any given functional $\E:\AC\to\R$ of the form $\E(X,V)=\mathscr{E}(X,V,\Ed(X,V))$;
in this paper we consider energies of the following type
\begin{equation}\label{defE0}
\E(X,V):=-\sharp \Ed(X,V).
\end{equation}
Such a choice of energy is very specific and of course it could be generalized in many perspectives; nevertheless, it leaves enough  freedom in the choice of the criteria that determine the class $\AC$ of admissible configurations and the class of edges.

In this paper we focus on {\it threshold criteria} for defining both $\AC$ and $\Ed$. 
We set
\begin{equation*}%\label{confadm}
\AC:=\{(X,V)\in\C\,:\, |x-y|\ge 1\textrm{ for every }x,y\in\X\textrm{ with }x\neq y\}\,.
\end{equation*}

Given $\gamma\in[0,1]$, for every $(X,V)\in\C$ with $X=(x_1,\ldots,x_N)\in(\R^2)^N$ and $V=(v_1,\ldots,v_N)\in(\mathcal{S}^1)^N$ for some $N\in\N$ we define
\begin{equation}\label{edgegamma}
\begin{aligned}
\Ed^{\gamma}(X,V):=&\left\{\{x_i,x_j\}\in \mathsf{E}(\X)\,:\, |x_i-x_j|=1, \textrm{ and } \right.\\
&\left.\textrm{ either }\langle x_j-x_i,v_k\rangle \ge \gamma \textrm{ for }k=i,j
 \textrm{ or }  \langle x_i-x_j,v_k\rangle \ge \gamma\textrm{ for }k=i,j\right\}
\end{aligned}
\end{equation}
and we set
\begin{equation}\label{energygamma}
\E^\gamma(X,V):=-\sharp\Ed^{\gamma}(X,V).
\end{equation}
In Section \ref{rad32sec} we will focus on the case $\gamma=\frac{\sqrt{3}}{2}$; in Subsection \ref{examples} below, we briefly discuss what happens for different values of $\gamma$. The main purpose is to show the variety of ground states emerging from the only minimization of the basic energy $\E^\gamma$,  depending on the choice of the edges in \eqref{edgegamma}, i.e., on $\gamma$. 

%%%%%%%%%%%%%%%%%%%%%%%%%%%%%%%%%%%%%%%%

%We recall that for every $N\in\N$ and for every $X=(x_1,\ldots,x_N)\in(\R^2)^N$ the symbol $\X$ denotes the set $\{x_1,\ldots,x_N\}$\,; we will adopt such a notation also for a generic subset of $\R^2$ with $N$ elements.
%Moreover if $(X,V)\in\AC_N$\,, for every $x\in\X$ we denote by $v(x)$ the orientation associated to the point $x$\,, i.e., if $x=x_i$ for some $i=1,\ldots,N$\,, then $v(x)=v_i$\,.

\subsection{Discrete graph representation} \label{bonddef}

%Here  we introduce a graph structure on the finite energy configurations.
Fix $\gamma\in[0,1]$.
For every $(X,V)\in\AC$, we consider the graph $\G^\gamma(X,V)=(\X,\Ed^\gamma(X,V))$,
referred to as the {\it bond graph} of the configuration $(X,V)$.
Since $(X,V)\in\AC$, we have that $\G^\gamma(X,V)$ is planar.

From now we will use the notions introduced in Section \ref{prelimgra} for $\G=\G^\gamma(X,V)$ and $\Ed=\Ed^\gamma(X,V)$. To ease the notation we set $\Fa^\gamma(X,V):=\Fa(\G^\gamma(X,V))$\,, $O^\gamma(X,V):=O(\G^\gamma(X,V))$ and so on. Moreover, we denote by $\Ed^{\gamma,\partial}(X,V)$ the set of boundary edges of $\G^\gamma(X,V)$.
%Moreover, we set $\Per_{\gr}(X,V):=\Per_{\gr}(\G(X,V))$\,.

We define the perimeter $\Per^\gamma(X,V)$ of $\G^\gamma(X,V)$ as the perimeter of $O^\gamma(X,V)$, i.e.,
\begin{equation}\label{perimeter}
\Per^\gamma(X,V)\defl\sharp \Ed^{\gamma,\partial}(X,V).
%+2\sharp\Ed^{\ext}_{\Wi}(X)\,.
\end{equation}
Notice that, since the edges have unitary length, $\Per^\gamma(X,V)$ is the standard perimeter of $O^\gamma(X,V)$, whereas $\Per^\gamma_{\gr}(X,V):=\Per_{\gr}(\G^\gamma(X,V))$ represents the relaxation of $\Per^\gamma(X,V)$ with respect to outer approximations of $O^\gamma(X,V)$ with open sets.
%}
%Analogously, for every face $f\in \Fa(X,V)$ one can define the following subclasses of edges delimiting $f$:
%\begin{itemize}
%\item $\Ed^{\Wi}(f)$ is the set of edges lying on the boundary of $f$ but not on the boundary of the closure of $f$;
%\item $\Ed^{\partial}(f)$ is the set of edges lying on the boundary of the closure of $f$.
%\end{itemize}

%Analogously, the perimeter of a face $f$ is given by the perimeter of its closure, i.e.,
%\begin{equation}\label{perimeterface}
%\Per(f):=\sharp \Ed^{\partial}(f);
%\end{equation}
%moreover, for every face $f$ we define the {\it graph-perimeter} of $f$ as
% \begin{equation}\label{graphgeoper}
%\Per_{\gr}(f):=\Per(f)+2\sharp \Ed^{\Wi}(f).
%\end{equation}
%%%%%%%%%%%%%%%%%%%%%%%%%%%%%
%%%%%%%%%%%%%%%%%%%%%%%%%%%%%
%%%%%%%%%%%%%%%%%%%%%%%%%%%%%
%We can also classify the points  in $\X$ in the following subclasses:
%\begin{itemize}
%\item $\X^{\inte}$ is the set of {\it interior points}, i.e., of points lying only on interior edges or on interior wire edges; 
%
%\item $\X^{\partial}$ is the set of {\it boundary points}, i.e., of points lying on
%at least
% at most two boundary edges;
%
%\item $\X^{\bow}$ is the set of {\it bow points}, i.e., of points lying of four boundary edges;
%
%\item $\X^{\Wi,\ext}$ is the set of {\it exterior wire points}, i.e., of points lying only on exterior wire edges;
%
%\item $\X^{\mathrm{iso}}$  is the set of {\it isolated points}, i.e., of points lying only on no edge.
%\end{itemize}

%Moreover, we set
%$$
%\partial X:=\X^{\partial}\cup \X^{\bow} \cup \X^{\Wi,\ext}\cup \X^{\mathrm{iso}}\,.
%$$
\begin{definition}\label{borX0}
For every $(X,V)\in\AC$ we denote by $\partial^\gamma X$ the subset of $\X$ made by points which lie either on (at least) one boundary edge or on (at least) one exterior wire edge or on no edge.
In the following, we will call {\it interior points} of $(X,V)$ the elements of $\X\setminus\partial ^\gamma X$.
\end{definition}
%%%%%%%%%%%%%%%%%%%%%%%%%%%%%
%%%%%%%%%%%%%%%%%%%%%%%%%%%%%
%%%%%%%%%%%%%%%%%%%%%%%%%%%%%
%By the very definition of $\Per_{\gr}$ in \eqref{graphgeoper}
%\begin{equation}\label{perperg}
%\begin{aligned}
%\red{\Per_{\gr}(X,V)=\sharp\partial X-\sharp \X^{\bow}+\sharp  \X^{\Wi} -\sharp\X^{\mathrm{iso}}\,.}
%\end{aligned}
%\end{equation}

%%%%%%%%%%%%%%%%%%%%%%%%%%%%%
%%%%%%%%%%%%%%%%%%%%%%%%%%%%%
%%%%%%%%%%%%%%%%%%%%%%%%%%%%%

%%%%%%%%%%%%%%%%%%%%%%%%%%%%%%%%%%%%%%%%

\subsection{Basic qualitative properties of the ground states of $\E^\gamma$}\label{examples}

%As mentioned above, here we collect some examples of the functionals defined in \eqref{defE0} depending on the choice of $\gamma$. \\
%We preliminarily note that the Heitmann-Radin energy defined in \cite{HeRa} fits the formalism in \eqref{defE0};  indeed,
% the Heitmann-Radin energy counts the bonds determined by the hard sphere tangency condition which are clearly independent of the orientations. More precisely,
%for every $(X,V)\in\AC$ the Heitmann-Radin energy is defined by 
%$$
%\E_{HR}(X,V)=-\sharp\Ed_{HR}(X)\,,
%$$
%with 
%$$
%\Ed_{HR}(X):=\left\{ \{x,y \}\in \mathsf{E}(\X)\ :\ |x-y|=1 \right\}.
%$$

We first recall that the Heitmann-Radin energy defined in \cite{HeRa} is given by the number of pairs of tangent hard spheres. Such a model fits our formalism: the energy can be written as in \eqref{defE0} where the bonds are determined by the only hard sphere tangency condition, being independent of the orientations.    
More precisely, 
for every $(X,V)\in\C$ the Heitmann-Radin energy can be defined by 
$$
\E_{HR}(X,V)=\left\{\begin{array}{ll}
-\sharp\Ed_{HR}(X)&\textrm{if }(X,V)\in\AC\\
+\infty&\textrm{otherwise}
\end{array}\right.\,,
$$
with 
$$
\Ed_{HR}(X):=\left\{ \{x,y \}\in \mathsf{E}(\X)\ :\ |x-y|=1 \right\}.
$$
As proven in \cite{HeRa} (see also \cite{DLF1}), for every fixed $N\in\N$ the minimizers of the functional $\E_{HR}$ in 
\begin{equation*}%\label{ACN}
\begin{aligned}
\AC_N&:=\{(X,V)\in\AC\cap \Big((\R^2)^N\times(\mathcal{S}^1)^N\Big)\}\,.
\end{aligned}
\end{equation*}
are, up to a rotation and a translation, subsets of the unitary triangular lattice
\begin{equation}\label{T1}
\T^1:=\left\{a(1;0)+b\Big(\frac 1 2;\frac{\sqrt 3}{2}\Big)\,:\,a,b\in\Z\right\}.
\end{equation}
%First recall that the Heitmann-Radin energy has the form defined in \eqref{defE0} but it is associated with $\Ed(X,V)=\Ed_{HR}(X)=\left\{ \{x,y \}\in E(X)\ :\ |x-y|=1 \right\}$. \\  
For every $\gamma\in[0,1]$ and  for every $(X,V)\in\AC$,
we clearly have 
$\sharp \Ed^{\gamma}(X,V)\leq \sharp \Ed_{HR}(X)$ or, equivalently,
\begin{equation}\label{compHR}
\E^{\gamma}(X,V):=-\sharp \Ed^{\gamma}(X,V)\geq \sharp \E_{HR}(X,V).
\end{equation}

Now, we briefly discuss some qualitative properties of the ground states and of almost minimizers of the energy $\E^\gamma$, for $\gamma$ varying in the range $[0,1]$.
The significant case $\gamma=\frac{\sqrt 3}{2}$ will be  further analyzed in more details in Section \ref{rad32sec}.
\medskip

{\it The case $\gamma=0$.} This case is a slight generalization of the Heitmann-Radin model, being the dependence on the orientations only fictitious. 
Indeed,  for every given $N\in\N$ and $v\in \mathcal S^1$, let $\bar V:= \{v\}^N$ be a constant (arbitrarily chosen) orientation field. Then,  for every $(X,V)\in\AC_N$, 
we have that $\Ed^{0}(X,\bar V)=\Ed_{HR}(X)$, or, equivalently, $
\E^{0}(X, \bar V)=\E_{HR}(X,V)$. 
In particular, for configurations with constant orientation,  the class of minimizers (resp. almost minimizers) of $\E^0$ coincides with the class of minimizers (resp. almost minimizers) of $\E_{HR}$, analyzed in \cite{HeRa,DLF1,DLF2, AFS, DNP, FKS}.

\medskip
{\it The case $0<\gamma\le \frac 1 2$.} In such a range, for suitably chosen constant orientations, the ground states are the same as in the Heitmann-Radin model; in particular, they are subsets of the unitary triangular lattice.
%coincide on the ground states, but they do not coincide on all configurations. Moreover the ground states of $\E$ are the same as in the Heitmann-Radin model; in particular, they are subsets of the unit triangular lattice.\\
Indeed, let $N\in\N$ and let $(X,V)$ be a minimizer of $\E_{HR}$ in $\AC_N$; by \cite{HeRa}, up to a rigid motion, $\X\subset\T^1$. Therefore, setting $\bar V:=\{(1;0)\}^N$, we have that $\Ed^{\gamma}(X,\bar V)=\Ed_{HR}(X)$, which, in view of \eqref{compHR}, yields the claim.
Notice also that not all constant orientations do the job:
%Clearly, in this case, we do not have that the energies $\E^\gamma$ and $\E_{HR}$ coincide on all the configurations. 
Indeed, taking $(X,V)$ as above and $(X,\hat V)$ with  $\hat V:=\{(0;1)\}^N$, we have that all the ``horizontal bonds'' in $\Ed_{HR}(X)$, namely, the bonds $\{x_i,x_j\}$ with $x_j-x_i=\lambda (1;0)$ for some $\lambda\in  \{-1,1\}$, do not belong to $\Ed^\gamma(X,\bar V)$ and hence
$\E^\gamma(X,\hat V)>\E_{HR}(X,V)$ (see Figure \ref{VbarVhat}).

A natural question is whether, in the range $(0,\frac 1 2]$, 
for every configuration $(X,V)\in\AC$ there exists a configuration $( X,\bar V)$ (where $\bar V$ is a minimizer for fixed $X$)  such that
$\E^\gamma( X,\bar V)=\E_{HR}(X,V)$. 
%We can prove the statement assuming $\gamma=\frac{1}{2}$, since the other values of $\gamma$ gives less restrictive constraints.
%
%
%In order to prove this, it is enough to shows that fixed 
%$\X=\{x_1,\dots,x_N \}$  which attains the minimum of the Heitmann-Radin energy there exists $V=(v_1,\dots,v_n )$ such that $(X,V)\in\AC$ and \\ $$\sharp \Ed^{\frac{1}{2}}(X,V)= \sharp \Ed_{HR}(X)=-\min\left\{-\sharp\Ed_{HR}(X)\ :\ |x-y|\geq 1\ \textnormal{for each}\ x,y\in\X,\ x\not=y \right\}.$$ 
%Given $X$ like above, we know that it is a subset of the unit triangular lattice (citare qualcosa? Tipo il lavoro di Lucia?). If we set $\mathcal{L}=\left\{ i\textbf{e}+j\textbf{h}\ :\ i,j\in\Z  \right\}$ to be the triangular lattice generated by $\textbf{e}=(0,1)$ and $\textbf{h}=(\frac{\sqrt{3}}{2},\frac{1}{2})$ (and so we suppose $\X\subset\mathcal{L}$), then it is easy to check that we can choose $V=\left(v,\dots,v \right)$ with $v=(0,1)$ in order to preserve all the edges formed in the Heitmann-Radin setting by the configuration $\X$ \red{(forse dovrei essere più esplicito e fare qualche conto?)} .
%\\

\begin{figure}[h!]
{\def\svgwidth{300pt}
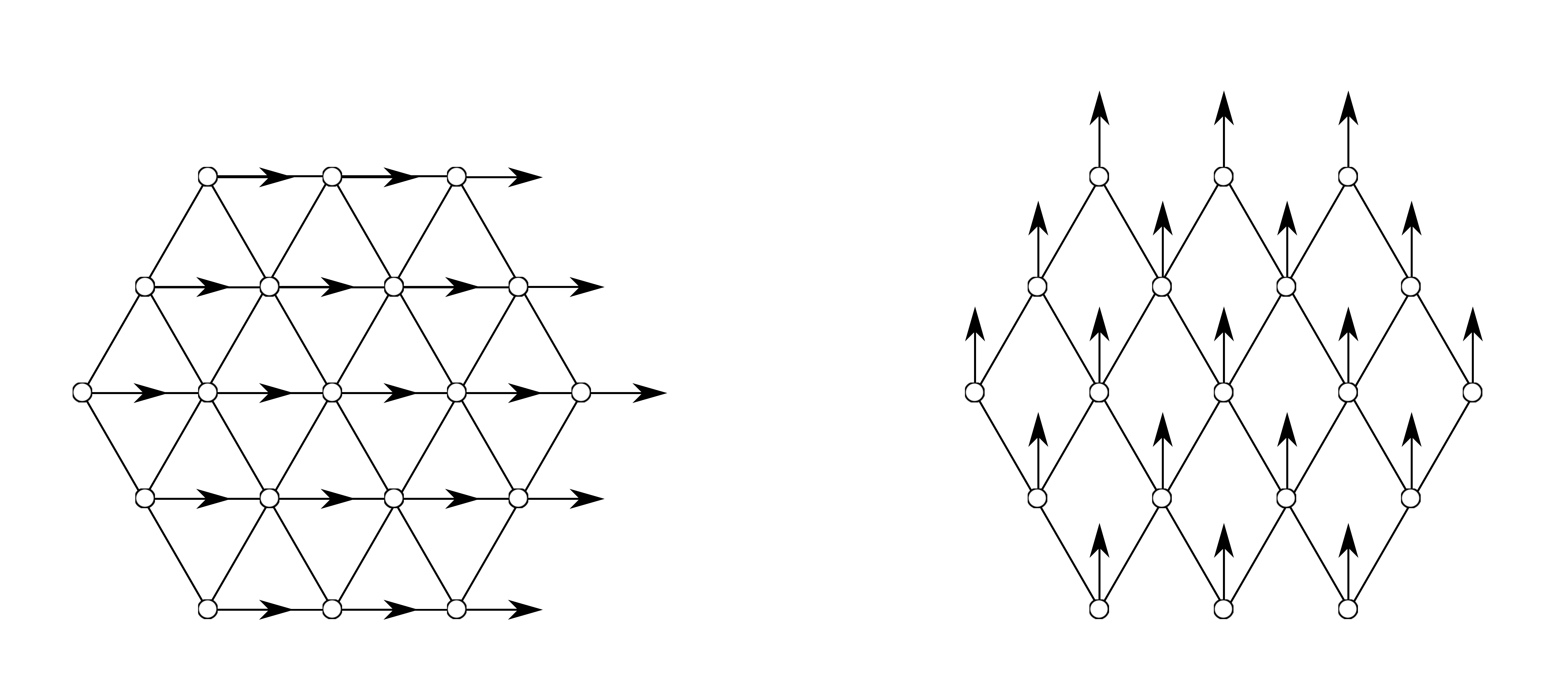}
\caption{The unique, up to rotation, minimizer $\X$ of the Heitmann-Radin energy for $N=19$. On the left: the configuration $(X,\bar V)$. On the right: the configuration $(X,\hat V)$.}	\label{VbarVhat}
\end{figure}

%\red{Add figures for these examples, one for $\bar V$ and one for $\hat V$}

\medskip
{\it The case $\frac 1 2<\gamma\le \frac{\sqrt 3}{2}$.} In such a case we have that the maximal number of nearest neighbors of a point is equal to $4$. Indeed, let $(X,V)\in\AC$ and let
 $x\in \X$; without loss of generality we  can assume that $x=0$  and that the orientation of $x$ is given by $(1;0)$. Notice that if  $y\in\X$ is such that $\{x,y\} \in \Ed^\gamma(X,V)$, then $y=(\cos \theta; \sin \theta)$ for some $\theta\in (-\frac\pi 3,\frac \pi 3)\cup (\frac 2 3 \pi,\frac 4 3\pi)$. By the very definition of $\AC$, we immediately have that there can be at most two nearest neighbors of $x$ of the type $y=(\cos \theta; \sin \theta)$ with $\theta\in (-\frac\pi 3,\frac \pi 3)$ as well as  with $\theta\in (\frac 2 3 \pi,\frac 4 3 \pi)$; in particular, the claim follows.

Now, we show that the asymptotic (as $N\to +\infty$) energy per particle for minimizers equals to $-4$; more precisely, 
\begin{equation}\label{asymp}
-4N\le \min_{(X,V)\in\AC_N}\E^{\gamma}(X,V)\le -4N+C N^{\frac 1 2}\,,
\end{equation}
for some $C>0$ independent of $N$.
Notice that the first inequality in \eqref{asymp} is a direct consequence of the fact that $4$ is the maximal number of nearest neighbors; for what concerns the second inequality, for every $N\in\N$ let $V_N=\{(0;1)\}^N$ and let $\X_N$ be a subset of $\T^1$ with cardinality equal to $N$ such that the graph $\G^\gamma(X_N, V_N)$ is connected, the elements of $\Fa^\gamma(X_N,V_N)$ are all unitary rhombuses,  and
 $\sharp\partial^\gamma X_N\le CN^{\frac 1 2}$ for some $C>0$ independent of $N$. Such configurations can be easily constructed  (see, for instance, the configurations $(Y_N,W_N)$ provided by Definition \ref{Y_N} and Figure \ref{fig:YN} below) and satisfy
% one can easily show that
 \begin{equation}\label{am}
 \E^\gamma(X_N,V_N)\le -4\sharp(X_N\setminus\partial X_N)\le-4N+4C N^{\frac 1 2},
 \end{equation}
from which the second inequality in \eqref{asymp} follows.

Notice also that small perturbations of the configuration $(X_N,V_N)$ constructed above still yields almost minimizers for $\E^\gamma$ satisfying \eqref{am}. 
Indeed, 
for all $\rho>0$ set
$$
T^\rho:= 
\left(
\begin{array}{ll}
1+\rho & 0\\
0 & \frac{\sqrt{3 - \rho^2 - 2\rho}}{\sqrt{3}} 
\end{array}
\right) \, ,
$$
and notice that, for $\rho$ small enough, $T^\rho$ maps $\T^1$ into a unitary rhombic lattice.  
Now, for all $\rho>0$ let $X_N^\rho= (T^\rho(x_1), \ldots, T^\rho(x_N) )$, and notice that, for $\rho$ small enough, $(X^\rho_N,V_N)\in\AC_N$.   
Then, by an easy continuity argument, for all $\gamma\in (\frac 12, \frac{\sqrt{3}}{2})$ there exists $\rho>0$ (depending on $\gamma$) such that  $\E^\gamma(X^\rho_N,V_N)=\E^{\gamma}(X_N,V_N)$ and hence $(X^\rho_N,V_N)$ still satisfies \eqref{am}. Analogously, one can easily see that energy is invariant also under small perturbations of the orientation field $V_N$.
Finally,  for $N=4$ one can check that the configurations described above provide all the minimizers of $\E^\gamma$. %(see Figure \red{to be done} below).   

\medskip
{\it The case $\frac{\sqrt 3}{2}<\gamma<1$.} We first show that in this case the maximal number of nearest neighbors of a point is equal to $2$.  Indeed, let $(X,V)\in\AC$ and let
 $x\in \X$; without loss of generality we  can assume that $x=0$  and that the orientation of $x$ is given by $(1;0)$. Notice that if  $y\in\X$ is such that $\{x,y\} \in \Ed^\gamma(X,V)$, then $y=(\cos \theta; \sin \theta)$ for some $\theta\in (-\frac\pi 6,\frac \pi 6)\cup (\frac 5 6 \pi,\frac 7 6\pi)$. By the very definition of $\AC$, we immediately have that there can be at most one nearest neighbor $y=(\cos \theta; \sin \theta)$ of $x$ with $\theta\in (-\frac\pi 6,\frac \pi 6)$ as well as  with $\theta\in (\frac 5 6 \pi,\frac 7 6 \pi)$; in particular, the claim follows.

For every $N\in\N$ we have
\begin{equation*}%\label{lub}
-N\le \min_{(X,V)\in\AC_N}\E^\gamma(X,V)\le -N+1,
\end{equation*}
where the second inequality follows by considering the competitor $(\bar X,\bar V)\in\AC_N$ with $\bar X=((0;0), (1;0), \ldots, (N;0))$ and $\bar V=\{(1;0)\}^N$.

For such a range of the parameter $\gamma$, we can characterize the minimizers of the energy $\E^\gamma$ for every $N\in\N$.
To this purpose, we set $N_\gamma:=\lceil \frac{\pi}{\arccos\gamma}\rceil$, where $\lceil a\rceil:=\min\{m\in\N\cup\{0\}\,:\, m\ge a\}$ for every $a> 0$.

We first consider the case  $N<N_\gamma$. Let $(X,V)$ be a minimizer of $\E^\gamma$ in $\AC_N$. 
We claim that, up to a relabeling, $\X=\{x_1,\ldots,x_N\}$ where $\{x_j,x_k\}\in\Ed^{\gamma}(X,V)$ if and only if $|i-j|=1$ and hence $\E^\gamma(X,V)= -N+1$. 
To prove the claim, assume by contradiction that there exist $3\le J\le N$ and $y_1,\ldots,y_J\in\X$ such that $\{y_j,y_{j+1}\}\in\Ed^{\gamma}(X,V)$ for every $j=1,\ldots,J-1$,  $\{y_1,y_J\}\in\Ed^{\gamma}(X,V)$, and $C:=\bigcup_{j=1}^{J-1}[y_j,y_{j+1}]\cup [y_{J},y_{1}]$ is a simple and closed polygonal curve.
Then, denoting by $\alpha_j$ the inner angles of the polygon enclosed by $C$ and setting $\alpha_{\bar\jmath}:=\min_{j\in\{1,\ldots,J\}}\alpha_j$, by Euler formula, we have 
$$
J\alpha_{\bar\jmath}\le\sum_{j=1}^{J}\alpha_j=(J-2)\pi\,,
$$
so that
$$
\pi-2\arccos\gamma\le\alpha_{\bar\jmath}< \left(1-\frac{2}{\frac{\pi}{\arccos\gamma}}\right)\pi,
$$ 
which yields the desired contradiction.

Now we  consider the case  $N\ge N_\gamma$. Let $\X$ be the set of the vertices of a regular unitary $N$-gon centered at the origin, 
%with $|x_i-x_{j+1}|=1$ for every $i=1,\ldots,N-1$,  $!x_1 - x_N|=1$. Let moreover $v(x_i):= x_{i+1}-x_i$
and for every $x\in \X$ let $v(x)$ be the unitary vector orthogonal to $x$, with orientation chosen so that either $v(x)\times x \equiv +1$ or $v(x)\times x \equiv -1$ for all $x\in\X$. Since $N\ge N_\gamma$, it easily follows that $(X,V) \in \AC_N$; moreover, $\E^\gamma(X,V)=-N$, so that $(X,V)$ is a minimizer. We easily conclude that if $(X,V)$ is any minimizer of $\E^\gamma$ in $\AC_N$, then every $x\in\X$ has exactly two nearest neighbors. 
It follows that every minimizer $(X,V)$ of $\E^\gamma$ in $\AC_N$ satisfies the following properties: there exist $K\in\N$, a partition $\{\X_1,\ldots,\X_K\}$ of $\X$ and  simple and closed curves $\Gamma_1,\ldots,\Gamma_K$ such that $\min_{k_1\neq k_2}\dist(\X_{k_1},\X_{k_2})\ge 1$,  $\X_k$ is the set of nodes of $\Gamma_k$ and
 $\sharp \X_k=\mathcal H^1(\Gamma_k)\ge N_\gamma$
  for every $k=1,\ldots,K$.
Moreover, for every $n=1,\ldots,N$ the angles $\hat x_n$ at the node $x_n$  satisfy  $|\hat x_n-\pi|<2\arccos\gamma$.
Viceversa, all the configurations above described, are admissible and have energy equal to $-N$, and hence provide all possible minimizers of $\E^\gamma$ in $\AC_N$.

\begin{figure}[h!]
{\def\svgwidth{300pt}
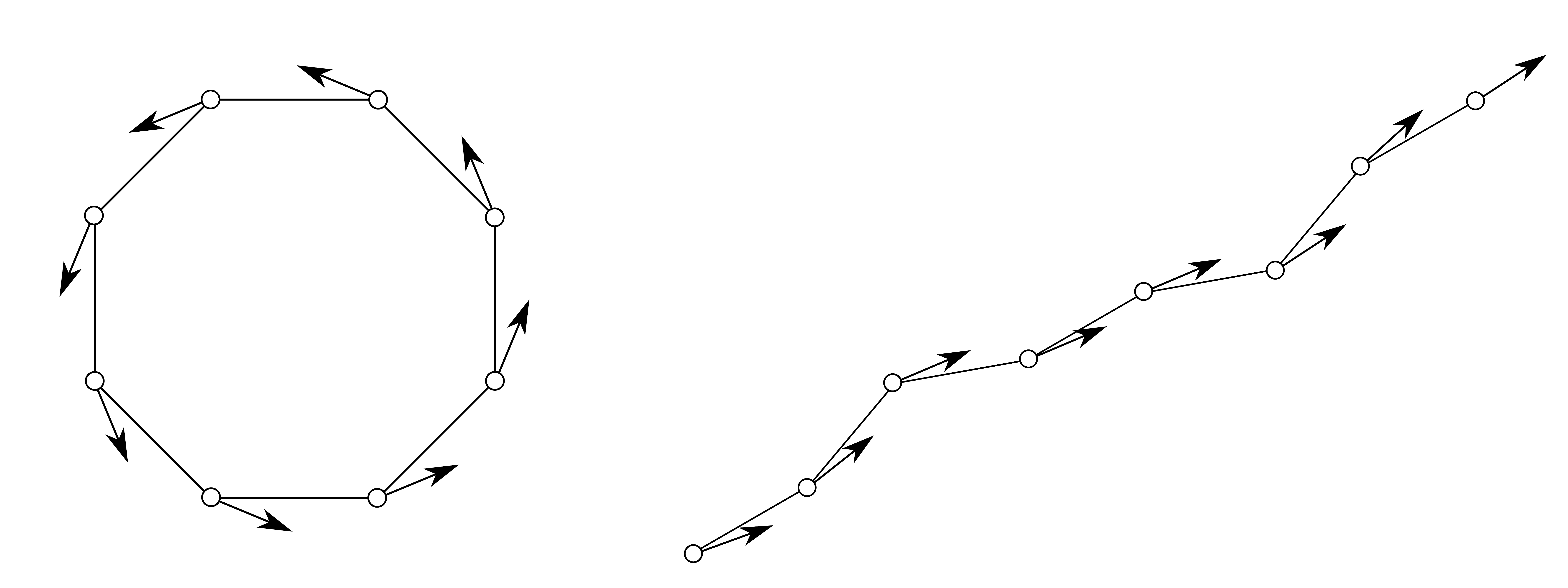}
\caption{On the left: A minimizer for $N\ge N_\gamma$. On the right: A minimizer for $N<N_\gamma$.}	\label{rad321}
\end{figure}

\medskip
{\it The case $\gamma=1$.} As for the range $(\frac{\sqrt 3}{2}, 1)$, we have that also in this case the maximal number of nearest neighbors of a point is equal to $2$. Moreover, given $(X,V)\in\AC$, we have that $\{x,y\}\in\Ed^1(X,V)$ if and only if $|x-y|=1$ and $v(x) = v(y)=\pm (x-y)$; it follows that the ground states of $\E^1$ in $\AC_N$ are made of $N$ aligned points forming a segment with constant tangent orientation, while the corresponding minimal energy is equal to $-N+1$. 
%so that 
%$$
%\min_{(X,V)\in\AC_N}\E^\gamma(X,V)=-N+1 \qquad\qquad \textrm{for every }N\in\N.
%$$ 

%\red{Add figures for this example}

\begin{remark}
In looking at minimal configurations, one could first fix $\X$ and then minimize with respect to all possible orientation fields $V$, obtaining a reduced energy only depending on $\X$, and eventually further minimize with respect to $\X$. 
Notice also that further constraints could be enforced on $V$; for instance, denoting by $\bar V$ the average of the elements of $V$, one could plug some specific requirement on $|\bar V|$. Notice that $|\bar V|=1$ enforces constant orientation, while enforcing smallness conditions on $|\bar V|$ should favour cyclic-type configurations such as in the case $\gamma \in (\frac{\sqrt 3}{2},1)$ and large $N$ considered above.
Furthermore, in the energy functional a term penalizing variations of the orientation field could be added. One of the purposes of this paper is to show that, even in absence of such a penalization term, alignment of the orientation could be induced by the only minimization of $\E^\gamma$.       
\end{remark}
%\vskip5pt
%\red{Is there an "easy" model for which minimizers are given by the "V formations" like in birds flocking?}
%\vskip5pt

%%%%%%%%%%%%%%%%%%%%%%%%%%%%%
%%%%%%%%%%%%%%%%%%%%%%%%%%%%%
%%%%%%%%%%%%%%%%%%%%%%%%%%%%%

\section{Rhombic ground states for $\gamma=\frac{\sqrt 3}{2}$}\label{rad32sec}

This section is devoted to the detailed analysis of minimizers and almost minimizers of the energy $\E^\gamma$ defined in \eqref{energygamma} for $\gamma=\frac{\sqrt 3}{2}$. More precisely, we construct explicit rhombic minimizers (mimicking diamond formations) of the functional $\E^{\frac{\sqrt 3}{2}}$ in $\AC_N$ for every $N\in\N$  and 
%we prove rigidity properties for the quadrilateral faces having unitary side-lengths, which allow us to 
we obtain a compactness result for almost minimizers of $\E^{\frac{\sqrt 3}{2}}$.

Since $\gamma$ is fixed to be equal to $\frac{\sqrt 3}{2}$, in this section we omit the dependence on $\gamma$ in all the notations introduced above, i.e., for every $(X,V)\in\AC$ we set $\Ed(X,V):=\Ed^{\frac{\sqrt 3}{2}}(X,V)$, $\E(X,V):=\E^{\frac{\sqrt 3}{2}}(X,V)$, $\G(X,V):=\G^{\frac{\sqrt 3}{2}}(X,V)$, $\partial X:=\partial^{\frac{\sqrt 3}{2}}X$, and so on.

We start by proving some geometric properties of the bond graphs of the configurations in $\AC$.

%%%%%%%%%%%%%%%%%%%%%%%%%%%%
%%%%%%%%%%%%%%%%%%%%%%%%%%%%
%%%%%%%%%%%%%%%%%%%%%%%%%%%%
\subsection{Geometric properties of admissible configurations}\label{geoprop}
The following lemma provides geometric properties of the angles formed by two edges. Let $(X,V)\in\AC$.  For every $x_0,x_1,x_2\in \X$ with $\{x_0,x_1\},\,\{x_0,x_2\}\in\Ed(X,V)$\,, the symbol $\widehat{x_1x_0x_2}_-$  denotes the convex  angle formed by the segments $[x_0,x_1]$ and $[x_0,x_2]$ whereas $\widehat{x_1x_0x_2}_+$  denotes the concave one. Notice that $\widehat{x_1x_0x_2}_-\in(0,\pi]$\,, $\widehat{x_1x_0x_2}_+\in[\pi,2\pi)$ and
\begin{equation*}%\label{angologiro}
\widehat{x_1x_0x_2}_-+\widehat{x_1x_0x_2}_+=2\pi\,.
\end{equation*}
%%%%%%%%%%%%%%%%%%%%%%%%%%
%%%%%%%%%%%%%%%%%%%%%%%%%%
%%%%%%%%%%%%%%%%%%%%%%%%%%
\begin{lemma}\label{possibleanglesrmk}
Let $(X,V)\in\AC$\,. The following properties hold true.
\begin{itemize}
\item[(i)] If $x_0,x_1,x_2\in \X$ are such that $\{x_0,x_1\},\,\{x_0,x_2\}\in\Ed(X,V)$, then
\begin{equation}\label{possibleangles}
\begin{aligned}
\widehat{x_1x_0x_2}_-\in \Big\{\frac\pi 3\Big\}\cup \Big[\frac 2 3 \pi,\pi\Big] , \qquad\qquad\qquad
%\Big\{j\frac\pi 3\,:\,j=1,2,3\Big\}\cup \Big(\frac 2 3 \pi,\pi\Big)\\
\widehat{x_1x_0x_2}_+\in\Big\{\frac 5 3 \pi\Big\}\cup \Big[\pi, \frac 4 3 \pi\Big].
%\Big\{j\frac\pi 3\,:\,j=3,4,5\Big\}\cup \Big(\pi,\frac 4 3 \pi\Big).
\end{aligned}
\end{equation}
Moreover, if $\widehat{x_1x_0x_2}_-=\frac\pi 3$\,,
% (or, equivalently, $\widehat{y_1xy_2}_+=\frac 5 3\pi$)
then $v(x_0)$ is parallel to the bisector of $\widehat{x_1x_0x_2}_{-}$ whereas, if $\widehat{x_1x_0x_2}_-=\frac 2 3\pi$\,, then $v(x_0)$ is orthogonal to the bisector of $\widehat{x_1x_0x_2}_{-}$\,.
%%%%%%%%%%%%%%%%%%%%%%%%%%
\item[(ii)] If $x_0,x_1,x_2,x_3\in\X$ are such that $\{x_0,x_1\},\,\{x_0,x_2\},\,\{x_0,x_3\} \in\Ed(X,V)$, then 
$$
\widehat{x_1x_0x_2}_-\neq \widehat{x_2x_0x_3}_-.
$$
%and the polygonal $[x_1,x_2]\cup[x_2,x_3]$ does not intersect $(x_0,x_1)$, 
%the angles
%$\widehat{x_1x_0x_2}_-$ and $\widehat{x_2x_0x_3}_-$ cannot be both equal neither to $\frac\pi 3$ nor to $\frac 2 3 \pi$\,.
%\begin{equation}\label{impconsang}
%(\widehat{x_1x_0x_2}_-; \widehat{x_2x_0x_3}_-)\notin\Big\{\Big(\frac\pi 3;\frac \pi 3\Big),\Big(\frac 2 3\pi;\frac 2 3 \pi\Big)\Big\}\,. 
%\end{equation}
%%%%%%%%%%%%%%%%%%%%%%%%%%
\item[(iii)] Every $x\in\X$ can lie on at most four bonds. Moreover, in such a case, the four segments corresponding to the four bonds containing $x$ 
form equal opposite angles $\frac\pi 3$ and $\frac 2 3\pi$\,.
\end{itemize}
\end{lemma}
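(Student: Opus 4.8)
Throughout we use that $\gamma = \frac{\sqrt 3}{2}$ is fixed and that $(X,V)\in\AC$, so $|x-x'|\ge 1$ for all distinct $x,x'\in\X$. The plan is to reduce everything to elementary plane geometry: normalize $x_0 = 0$, let $v := v(x_0)\in\mathcal S^1$, and recall from \eqref{edgegamma} that a bond $\{x_0,x_i\}$ forces $|x_i| = 1$ together with the angular condition on $v$. Writing $x_i = (\cos\theta_i;\sin\theta_i)$ and aligning coordinates so that $v = (1;0)$, the condition ``$\langle x_i, v\rangle\ge\gamma$ or $\langle -x_i, v\rangle\ge\gamma$'' reads $\cos\theta_i\ge\frac{\sqrt3}{2}$ or $\cos\theta_i\le-\frac{\sqrt3}{2}$, i.e.
\[
\theta_i\in\Big(-\tfrac\pi6,\tfrac\pi6\Big)\cup\Big(\tfrac{5}{6}\pi,\tfrac{7}{6}\pi\Big),
\]
so each neighbor of $x_0$ sits in one of two opposite $60^\circ$-wide sectors centered on the line $\R v$. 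Call these the \emph{right sector} $\mathcal R$ and the \emph{left sector} $\mathcal L$.

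\textbf{Proof of (i).} Given two bonds $\{x_0,x_1\},\{x_0,x_2\}$, both $x_1,x_2$ lie in $\mathcal R\cup\mathcal L$, which is contained in the double cone of total opening $\frac\pi3 + \frac\pi3$. First I would treat the case where $x_1,x_2$ lie in the same sector, say both in $\mathcal R$: then $|\theta_1-\theta_2|<\frac\pi3$, but the hard-core constraint $|x_1-x_2|\ge 1$ together with $|x_1|=|x_2|=1$ forces $|\theta_1-\theta_2|\ge\frac\pi3$, so the only possibility is $|\theta_1-\theta_2| = \frac\pi3$, giving $\widehat{x_1x_0x_2}_- = \frac\pi3$, and in that case $\theta_1,\theta_2$ are symmetric about $0$, i.e.\ the bisector is $\R v$ — this proves the parallel claim. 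If instead $x_1\in\mathcal R$ and $x_2\in\mathcal L$, the convex angle between $[x_0,x_1]$ and $[x_0,x_2]$ is $\pi - (\theta_1 - (\theta_2-\pi))$ up to sign; since $\theta_1\in(-\frac\pi6,\frac\pi6)$ and $\theta_2 - \pi\in(-\frac\pi6,\frac\pi6)$, this angle lies in $(\frac23\pi,\pi]$, with the extreme value $\frac23\pi$ attained only when $\{\theta_1,\theta_2-\pi\}$ is as spread as possible, $\{+\frac\pi6,-\frac\pi6\}$ or its mirror, in which case the bisector of $\widehat{x_1x_0x_2}_-$ is orthogonal to $\R v$ — this proves the orthogonality claim. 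The statement for $\widehat{x_1x_0x_2}_+ = 2\pi - \widehat{x_1x_0x_2}_-$ is then immediate. The bookkeeping of which endpoint of the interval is closed and which is open requires care but is routine.

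\textbf{Proof of (ii) and (iii).} For (ii), suppose $x_1,x_2,x_3$ are three neighbors of $x_0$ with $\widehat{x_1x_0x_2}_- = \widehat{x_2x_0x_3}_-$. By (i) each of these equals $\frac\pi3$ or lies in $[\frac23\pi,\pi]$. If both equal $\frac\pi3$, then by the parallel statement in (i) applied to the pair $(x_1,x_2)$ and to the pair $(x_2,x_3)$, the point $x_2$ would have to be the common reflection partner, forcing $x_1 = x_3$, a contradiction; more simply, three unit vectors pairwise at angle $\frac\pi3$ cannot all lie in a single $120^\circ$ double cone without two of them coinciding. If either angle is $\ge\frac23\pi$, then $x_1,x_2,x_3$ together with $x_0$ would subtend total angle $\ge\frac43\pi$ at $x_0$ among three points all lying in $\mathcal R\cup\mathcal L$ — but any three of the available directions have, by the hard-core spacing, minimal pairwise separation $\frac\pi3$ within each sector, and a short case analysis on how the three points distribute between $\mathcal R$ and $\mathcal L$ rules out equality of the two consecutive angles. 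For (iii): by the spacing argument above, at most two neighbors lie in $\mathcal R$ (they must then be at $\theta = \pm\frac\pi6$) and symmetrically at most two in $\mathcal L$, so at most four bonds; and when four are present the four directions are exactly $\pm\frac\pi6$ and $\pi\pm\frac\pi6$, which form the alternating pattern of opposite angles $\frac\pi3,\frac23\pi,\frac\pi3,\frac23\pi$ — giving a unit rhombus — as claimed.

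\textbf{Main obstacle.} The conceptual content is light — everything follows from ``unit vectors in a $120^\circ$ double cone, pairwise separated by at least $60^\circ$'' — so the real work is the careful case distinction in (ii), keeping track of which sector each of the three (or four) points lies in and whether the relevant angle endpoints are attained strictly or not; I expect the endpoint/strictness bookkeeping in (i) and the exhaustive sector-distribution check in (ii) to be the only places where an argument can slip, and I would write those out in full while compressing the rest.
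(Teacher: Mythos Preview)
Your approach to (i) and (iii) is the paper's: normalize so that $x_0=0$ and $v(x_0)=(1,0)$, observe that every bonded neighbor lies in one of the two opposite $\tfrac{\pi}{3}$-arcs centered on $\pm v(x_0)$, and combine this with the hard-sphere separation $|\theta_i-\theta_j|\ge\tfrac{\pi}{3}$. One slip to fix: the bond condition in \eqref{edgegamma} is the non-strict inequality $\langle\cdot,\cdot\rangle\ge\gamma$, so the arcs are the \emph{closed} intervals $[-\tfrac{\pi}{6},\tfrac{\pi}{6}]$ and $[\tfrac{5\pi}{6},\tfrac{7\pi}{6}]$ (this is what the paper uses). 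With your open arcs the same-sector case reads ``$|\theta_1-\theta_2|<\tfrac{\pi}{3}$ and $\ge\tfrac{\pi}{3}$'', which is a contradiction rather than the equality $\tfrac{\pi}{3}$ you want; you flag the endpoint issue yourself, and with closed intervals the argument goes through.

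For (ii) there is a genuine gap. Your case $\alpha=\tfrac{\pi}{3}$ is correct and matches the paper's mechanism (the bisector constraint from (i) forces $x_1=x_3$). But the case $\alpha\in[\tfrac{2}{3}\pi,\pi]$ is deferred to ``a short case analysis'', and that analysis cannot be completed, because statement (ii) as written is false. Take $x_0=0$, constant orientation $v\equiv(1,0)$, and $x_1=e^{i\pi/6}$, $x_2=(-1,0)$, $x_3=e^{-i\pi/6}$: one checks $(X,V)\in\AC_4$, all three bonds $\{x_0,x_j\}$ are present, and $\widehat{x_1x_0x_2}_-=\widehat{x_2x_0x_3}_-=\tfrac{5\pi}{6}$. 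The paper's own proof shares this defect: it asserts without argument that the common value $\alpha$ must lie in $\{\tfrac{\pi}{3},\tfrac{2}{3}\pi\}$, overlooking $\alpha=\tfrac{5\pi}{6}$. (The only downstream citation of (ii), in Lemma~\ref{onlyrhombic}, in fact only needs the bisector clause of (i), so the error appears not to propagate; but you should not claim a proof of (ii) as stated.)
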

%%%%%%%%%%%%%%%%%%%%%%%%%%
%%%%%%%%%%%%%%%%%%%%%%%%%%
%%%%%%%%%%%%%%%%%%%%%%%%%%
\begin{proof}
We start by proving \eqref{possibleangles}.
%note that, by \eqref{angologiro}, it is enough to prove that
%\begin{equation}\label{possang}
%\widehat{x_1x_0x_2}_-\notin\Big(0,\frac\pi 3\Big)\cup\Big(\frac \pi 3,\frac 2 3\pi\Big)\,.
%\end{equation} 
%By the very definition of $\AC$ in \eqref{confadm} we have that $\widehat{x_1x_0x_2}_-\notin(0,\frac\pi 3)$\,. 
%In order to prove that $\widehat{x_1x_0x_2}_-\notin \Big(\frac \pi 3,\frac 2 3\pi\Big)$ we argue in the following manner.
Up to a rigid motion, we can assume that $x_0=(0;0)=0$ and that  $v(x_0)=(1;0)=1$.
By the very definition of $\Ed(X,V)$, we have that $x_1=(\cos\vartheta_1;\sin\vartheta_1)=e^{i\vartheta_1}$ and  $x_2=(\cos\vartheta_2;\sin\vartheta_2)=e^{i\vartheta_2}$ where $\vartheta_j\in [-\frac\pi 6, \frac\pi 6]\cup [\frac 5 6\pi,\frac 7 6 \pi]$ and by the very definition of $\AC$ we get $|\vartheta_1-\vartheta_2|\ge \frac\pi 3$. 
By combining the above conditions, \eqref{possibleangles} follows.

Now, if $|\vartheta_1-\vartheta_2|= \frac\pi 3$, then $(\vartheta_1;\vartheta_2)=\pm(-\frac\pi 6;\frac\pi 6)$,  which immediately implies that $v(x_0)$ is parallel to the bisector of the angle $\widehat{x_1x_0x_2}_-$.
Analogously, the case $|\vartheta_1-\vartheta_2|=\frac 2 3 \pi$ follows.

Now we prove (ii). Assume by contradiction that $
\widehat{x_1x_0x_2}_-= \widehat{x_2x_0x_3}_-=:\alpha$. By \eqref{possibleangles}, we have that either  $\alpha=\frac\pi 3$ or $
\alpha=\frac 2 3\pi$; since the  angles $\widehat{x_1x_0x_2}_-$, $\widehat{x_2x_0x_3}_-$ are adjacent, by the last statement in (i), we get incompatible conditions on $v(x_0)$.

Finally, we prove (iii). We use the same argument of Subsection \ref{examples}.
Without loss of generality we  can assume that $x=0$  and that $v(x)=(1;0)=1$. Notice that if  $y\in\X$ is such that $\{x,y\} \in \Ed(X,V)$, then $y=e^{i\vartheta}$ for some $\vartheta\in [-\frac\pi 6,\frac \pi 6]\cup [\frac 5 6 \pi,\frac 7 6\pi]$. 
By the very definition of $\AC$, we immediately have that there can be at most two nearest neighbors of $x$ of the type $y=e^{i\vartheta}$ with $\vartheta\in [-\frac\pi 6,\frac \pi 6]$ and if there are exactly two nearest neighbors in such an interval they should be $x_1=e^{i\frac\pi 6}$ and $x_2=e^{-i\frac\pi 6}$. 
Similarly,  there can be at most two nearest neighbors of $x$ for  $\theta\in [\frac 5 6\pi,\frac 7 6\pi ]$ and if there are exactly two nearest neighbors in such an interval they should be $x_3=e^{i\frac 5 6 \pi }$ and $x_4=e^{i\frac 7 6 \pi}$.

\end{proof}
%%%%%%%%%%%%%%%%%%%%%%%%%%
%%%%%%%%%%%%%%%%%%%%%%%%%%
%%%%%%%%%%%%%%%%%%%%%%%%%%
\begin{lemma}\label{notri}
Let $(X,V)\in\AC$. Then $\Per(f)\ge 4$ for every  $f\in \Fa(X,V)$\,.
\end{lemma}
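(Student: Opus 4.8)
The plan is to argue by contradiction and rule out triangular faces. First I would note that for a face $f$ of the bond graph one has $\Per(f)=\Per_{\gr}(f)=\sharp\Ed^{\partial}(f)+2\sharp\Ed^{\Wi,\inte}(f)$, and that a bounded, simply connected face is enclosed by a closed walk in $\G(X,V)$ whose length equals $\Per(f)$, each interior‑wire edge being traversed twice. Since $\G(X,V)$ has neither loops nor multiple edges, such a walk has length at least $3$, and a length‑$3$ walk is necessarily a $3$‑cycle on three distinct vertices $x_0,x_1,x_2\in\X$ with $\{x_0,x_1\},\{x_1,x_2\},\{x_0,x_2\}\in\Ed(X,V)$ (so in fact $\sharp\Ed^{\partial}(f)=3$ and $\sharp\Ed^{\Wi,\inte}(f)=0$). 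Hence it suffices to prove that no three points of a configuration in $\AC$ are pairwise joined by bonds.

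Next I would exploit Lemma \ref{possibleanglesrmk}(i). The three interior angles of the triangle $x_0x_1x_2$ are convex angles subtended at the vertices by pairs of bonds, so each lies in $\{\tfrac\pi3\}\cup[\tfrac23\pi,\pi]$; as they sum to $\pi$ and each is at least $\tfrac\pi3$, none can reach $\tfrac23\pi$, forcing all three to equal $\tfrac\pi3$. Thus $x_0x_1x_2$ is equilateral, and the last assertion of Lemma \ref{possibleanglesrmk}(i) then pins the orientations: each $v(x_i)$ is parallel to the bisector of the $\tfrac\pi3$‑angle at $x_i$, that is, to the median from $x_i$, which passes through the centroid $c$ of the triangle.

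Finally I would close the argument with the bond conditions. In an equilateral triangle of side $1$ the edge from $x_i$ to $x_j$ makes an angle $\tfrac\pi6$ with the median at $x_i$ and an angle $\tfrac{5\pi}{6}$ with the median at $x_j$, so that $\langle x_j-x_i,v(x_i)\rangle=\pm\cos\tfrac\pi6=\pm\tfrac{\sqrt3}{2}$ and $\langle x_j-x_i,v(x_j)\rangle=\pm\cos\tfrac{5\pi}{6}=\mp\tfrac{\sqrt3}{2}$, the signs depending only on whether $v(x_i)$, resp.\ $v(x_j)$, points towards $c$ or away from it. Recording this choice by $\sigma(x_i)\in\{+,-\}$ and inspecting the two alternatives in \eqref{edgegamma} with $\gamma=\tfrac{\sqrt3}{2}$, one sees that $\{x_i,x_j\}\in\Ed(X,V)$ holds if and only if $\sigma(x_i)\neq\sigma(x_j)$: the first alternative needs $\sigma(x_i)=+$, $\sigma(x_j)=-$, the second $\sigma(x_i)=-$, $\sigma(x_j)=+$. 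Consequently, all three bonds $\{x_0,x_1\}$, $\{x_1,x_2\}$, $\{x_0,x_2\}$ being present would make $\sigma$ a proper $2$‑colouring of the triangle $K_3$, which is impossible. This contradiction yields $\Per(f)\ge 4$.

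The step I expect to be the real crux is the realization that the hard‑sphere (distance) constraint defining $\AC$ does \emph{not} by itself exclude an equilateral triangular face — such a triangle occurs inside the lattice $\T^1$. The obstruction is entirely orientational and becomes visible only after Lemma \ref{possibleanglesrmk}(i) has forced each $v(x_i)$ onto a median direction; once that is in hand, the remaining incompatibility of the three bond conditions is a short finite check, conveniently packaged as the non‑$2$‑colourability of $K_3$.
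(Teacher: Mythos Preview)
Your proof is correct and follows essentially the same route as the paper: assume a triangular face, use Lemma~\ref{possibleanglesrmk}(i) to force the triangle to be equilateral with each $v(x_i)$ along the bisector at $x_i$, and then derive a contradiction from the incompatibility of the three bond conditions. The only cosmetic difference is in the final step: the paper observes that the edge definition forces $\langle v(x_j),v(x_k)\rangle\ge 0$ for bonded pairs and notes that three unit vectors along the three medians of an equilateral triangle cannot all satisfy this, whereas you compute $\langle x_j-x_i,v(x_k)\rangle$ directly and phrase the obstruction as the non-$2$-colourability of $K_3$ --- these are two packagings of the same parity argument.
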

%%%%%%%%%%%%%%%%%%%%%%%%%%
%%%%%%%%%%%%%%%%%%%%%%%%%%
%%%%%%%%%%%%%%%%%%%%%%%%%%
\begin{proof}
Assume by contradiction that there exists a face $\bar f\in\Fa(X,V)$ having perimeter equal to 3. By definition, $\bar f$ is an equilateral triangle with unitary side-length; in particular, denoting by $x_1,\,x_2,\,x_3$ the vertices of $\bar f$ and by $\hat x_j$ ($j=1,2,3$) the inner angles of $\bar f$\,, we have that $\hat x_j=\frac\pi 3$ for every $j=1,2,3$\,.

By the very definition of $\Ed(X,V)$ we have that $\langle v(x_j),v(x_{k})\rangle \ge 0$ for every $j\neq k$.
Since, by Lemma \ref{possibleanglesrmk}(i), $v(x_j)$ is parallel to the bisector of $\hat x_j$ for every $j=1,2,3$, we get a contradiction.
 
% $\{x_i,x_j\}$ satisfy either $\langle x_i-x_j,v_k\rangle\ge\frac{\sqrt{3}}{2}$ for $k=i,j$ or  $\langle x_j-x_i,v_k\rangle\ge\frac{\sqrt{3}}{2}$ for $k=i,j$.

% we have that the following conditions are satisfied
%\begin{equation}\label{unicoconto}
%\begin{aligned}
%(v_1,v_2)=&\pm(e^{i(\frac 2 3 \pi+\eta_1)},e^{i(\frac 2 3 \pi+\eta_2)}),\\
%(v_2,v_3)=&\pm(e^{i(\frac 4 3 \pi + \delta_2)},e^{i(\frac 5 3\pi + \delta_3)}),\\
%(v_3,v_1)=&\pm(e^{i\lambda_3},e^{i\lambda_1}),
%\end{aligned}
%\end{equation}
%for some $\eta_1,\eta_2, \delta_2, \delta_3,\lambda_3,\lambda_1\in\R$ with $|\eta_k|,|\delta_k|,|\lambda_k|\le\frac\pi 6$.
%It is easy to check that the three equations in \eqref{unicoconto} are not compatible, thus yielding the desired contradiction.
\end{proof}
%%%%%%%%%%%%%%%%%%%%%%%%%%
%%%%%%%%%%%%%%%%%%%%%%%%%%
%%%%%%%%%%%%%%%%%%%%%%%%%%
\begin{lemma}\label{onlyrhombic}
Let $(X,V)\in\AC$ and let $f\in \Fa(X,V)$. If $\Per(f)=4$ then $f$ is a rhombus with side-length equal to one and inner angles equal to $\frac{\pi}{3},\frac{2}{3}\pi$.
In such a case, the orientations of the vertices of $f$ are all equal to each other. Moreover, such an orientation is orthogonal to the shortest diagonal of the rhombus and hence parallel to the longest one. 
\end{lemma}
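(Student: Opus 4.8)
The plan is to first determine the shape of $f$ from the angle restrictions of Lemma~\ref{possibleanglesrmk}, and then read off the orientations at its vertices using the same lemma together with the definition \eqref{edgegamma} of $\Ed(X,V)$.

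First I would argue that $f$ is a rhombus of the stated type. Since all bonds have unit length, since no face has perimeter $3$ by Lemma~\ref{notri}, and since $\Per(f)=4$, the boundary $\partial f$ must be a simple closed polygon made of exactly four unit-length edges of $\G(X,V)$ (a contribution of an interior wire edge would force $\partial\bar f$ to be covered by at most two unit segments, which cannot surround a region of positive area). Write its vertices in cyclic order as $x_1,x_2,x_3,x_4$ and let $\alpha_i\in(0,2\pi)$ be the interior angle of $f$ at $x_i$, so that $\alpha_1+\alpha_2+\alpha_3+\alpha_4=2\pi$. Each $\alpha_i$ is the angle cut out at $x_i$ by the two bonds of $\G(X,V)$ bounding $f$, hence by Lemma~\ref{possibleanglesrmk}(i) it lies in $\{\tfrac{\pi}{3}\}\cup[\tfrac{2}{3}\pi,\tfrac{4}{3}\pi]\cup\{\tfrac{5}{3}\pi\}$. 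Counting the number $k$ of indices with $\alpha_i=\tfrac{\pi}{3}$, one checks that $k=0$ and $k=1$ make the angle sum exceed $2\pi$, that $k=4$ gives sum $\tfrac{4}{3}\pi\ne 2\pi$, and that $k=3$ would force the remaining angle to equal $\pi$ --- impossible, since in a unit-side quadrilateral a straight angle at $x_i$ makes $x_{i-1}$ and $x_{i+1}$ lie at distance $2$, and then the fourth vertex, being at distance $1$ from both, would coincide with $x_i$. Hence $k=2$; the two remaining angles sum to $\tfrac{4}{3}\pi$ and are each $\ge\tfrac{2}{3}\pi$, so both equal $\tfrac{2}{3}\pi$. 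Since all angles are then $<\pi$, $f$ is convex, and an equilateral convex quadrilateral is a rhombus, so the two $\tfrac{\pi}{3}$-angles are opposite; a direct computation gives long diagonal $\sqrt 3$ (joining the $\tfrac{\pi}{3}$-vertices) and short diagonal $1$ (joining the $\tfrac{2}{3}\pi$-vertices).

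Next I would locate the orientations. By the last assertion of Lemma~\ref{possibleanglesrmk}(i), at a $\tfrac{\pi}{3}$-vertex $v$ is parallel to the bisector of that angle, which is the long diagonal; at a $\tfrac{2}{3}\pi$-vertex $v$ is orthogonal to the bisector, which is the short diagonal, hence (the diagonals of a rhombus being orthogonal) again parallel to the long diagonal. So $v(x_i)\in\{u,-u\}$ for each $i$, where $u$ is a unit vector along the long diagonal; in particular every $v(x_i)$ is orthogonal to the short diagonal and parallel to the long one. To fix the signs, consider a boundary edge $\{x_i,x_{i+1}\}$ of $f$: each side of the rhombus makes angle $\tfrac{\pi}{6}$ with the long diagonal, so $\langle x_{i+1}-x_i,u\rangle=\pm\tfrac{\sqrt3}{2}$, whence the two numbers $\langle x_{i+1}-x_i,v(x_i)\rangle$ and $\langle x_{i+1}-x_i,v(x_{i+1})\rangle$ each belong to $\{\tfrac{\sqrt3}{2},-\tfrac{\sqrt3}{2}\}$; but \eqref{edgegamma} with $\gamma=\tfrac{\sqrt3}{2}$ requires these two numbers to be either both $\ge\tfrac{\sqrt3}{2}$ or both $\le-\tfrac{\sqrt3}{2}$, which forces $v(x_i)=v(x_{i+1})$. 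Running this around the $4$-cycle $x_1x_2x_3x_4$ yields $v(x_1)=v(x_2)=v(x_3)=v(x_4)$, and the proof is complete.

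I expect the only real obstacle to be the rhombus step: making sure no degenerate or non-convex quadrilateral slips through --- in particular ruling out the $(\tfrac{\pi}{3},\tfrac{\pi}{3},\tfrac{\pi}{3},\pi)$ configuration, which is consistent with the angle list of Lemma~\ref{possibleanglesrmk} and must be excluded by a metric rather than an angular argument --- and cleanly reducing ``equilateral quadrilateral with these angles'' to the precise rhombus geometry. The orientation step and the trigonometric computations (diagonal lengths, the angle $\tfrac{\pi}{6}$ between a side and the long diagonal) are routine.
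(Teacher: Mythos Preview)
Your proof is correct, and the orientation step is essentially the paper's argument: the paper phrases the last implication as ``the scalar product between any pair of such orientations is non-negative'', which is exactly what you extract from \eqref{edgegamma} along each side of the $4$-cycle.

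For the shape determination, however, you take a noticeably longer detour. The paper observes at once that a simple quadrilateral with four unit sides is automatically a rhombus (hence a parallelogram), so any two \emph{consecutive} interior angles sum to~$\pi$; feeding this into the list $\{\tfrac{\pi}{3}\}\cup[\tfrac{2}{3}\pi,\pi]$ of Lemma~\ref{possibleanglesrmk}(i) forces each consecutive pair to be $\{\tfrac{\pi}{3},\tfrac{2}{3}\pi\}$ in one line. Your route---a case analysis on the number $k$ of $\tfrac{\pi}{3}$-angles, then a separate metric argument to exclude the $(\tfrac{\pi}{3},\tfrac{\pi}{3},\tfrac{\pi}{3},\pi)$ pattern, and only afterwards ``convex equilateral $\Rightarrow$ rhombus''---arrives at the same conclusion but with more work. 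In particular, the degeneracy you single out as the main obstacle evaporates if you invoke the rhombus property \emph{first}: a rhombus has no interior angle equal to~$\pi$, so the troublesome $k=3$ case never arises. What your approach buys is that it does not silently assume ``equilateral simple quadrilateral $\Rightarrow$ parallelogram''; what the paper's approach buys is brevity and the elimination of the ad hoc metric step.
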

%%%%%%%%%%%%%%%%%%%%%%%%%%
%%%%%%%%%%%%%%%%%%%%%%%%%%
%%%%%%%%%%%%%%%%%%%%%%%%%%
\begin{proof}

%Let $f\in\Fa(X,V)$ with $\Per(f)=4$, i.e.,
%Since  $\Per_{\gr}(f)=4$, the cycle of bonds delimiting $f$ cannot contain two repeated bonds. 
Since $f$ is a rhombus, the sum of two consecutive angles equals to $\pi$\,; hence, by Lemma \ref{possibleanglesrmk}(i), we have that the inner angles of $f$ should be equal to $\frac{\pi} {3},\frac{2}{3}\pi$\,.

Moreover, by Lemma \ref{possibleanglesrmk}(ii) we deduce that all the orientations should be parallel to the bisectors of the $\frac\pi 3$ angles\,, that in turns are parallel to the longest diagonal of $f$\,.
Finally, since by the very definition of $\Ed(X,V)$, the scalar product between any pair of such orientations should be non-negative, we get that they are all equal to each other.
\end{proof}
 %%%%%%%%%%%%%%%%%%%%%%%%%%
%%%%%%%%%%%%%%%%%%%%%%%%%%
%%%%%%%%%%%%%%%%%%%%%%%%%%

In view of Lemma \ref{onlyrhombic}, it is reasonable to denote by $\Fsq(X,V)$ the set of faces $f$ with $\Per(f)=4$\,. Note that in this case also $\Per_{\gr}(f)=4$.
 Moreover, again by Lemma \ref{onlyrhombic}, we can define the orientation $v(f)$ of a face $f\in\Fsq(X,V)$ as the orientation of its vertices.

\subsection{Geometric decomposition of the energy}

Here we decompose the energy $\E$ in \eqref{defE0} into the sum of a volume term (which actually depends only on the number of points of the configuration) and a perimeter type term.

\begin{proposition}\label{decompprop}
For every $(X,V)\in\AC$, it holds
\begin{equation}\label{decompformula}
\begin{aligned}
\E(X,V)=&-2\sharp \X+\frac 1 2\defe_\gr(X,V)+\frac 1 2\Per_\gr(X,V)+2\chi(X,V)\,,
%=& \red{-2\sharp \X+\frac 1 2\defe(X,V)+\frac 1 2\Per_{\gr}(X,V)+\sum_{f\in\Fbad(X)}\Ed^{\Wi}(X,V)+2\chi(X,V)}\,,
\end{aligned}
\end{equation}
where we have set $\defe_\gr(X,V)\defl\sum_{f\in \Fa^\bad(X,V)}(\Per_\gr(f)-4)$ and
\begin{equation*}%\label{bad_faces}
\Fbad(X,V)\defl \Fa(X,V)\setminus\Fsq(X,V).
\end{equation*}
\end{proposition}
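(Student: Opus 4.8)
The plan is to compute $\E(X,V) = -\sharp\Ed(X,V)$ by first triangulating the bond graph, applying the Euler relation to the triangulated graph, and then accounting for the edges and faces lost when passing back to the original graph. More precisely, I would proceed as follows. First, recall from Lemma~\ref{notri} that every face $f\in\Fa(X,V)$ satisfies $\Per(f)\ge 4$; note also that for every face one has $\Per_\gr(f) = \sharp\Ed^\partial(f) + 2\sharp\Ed^{\Wi,\inte}(f)$, and since (as remarked after Lemma~\ref{onlyrhombic}) a face with $\Per(f)=4$ has $\Per_\gr(f)=4$, the quantity $\Per_\gr(f)-4$ is nonnegative on all faces and vanishes precisely on $\Fsq(X,V)$. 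Apply Lemma~\ref{triang} face by face to each $f\in\Fbad(X,V)$ to obtain a planar graph $\bar\G=(\X,\bar\Ed)$ all of whose faces are triangles, with $\sharp\bar\Ed = \sharp\Ed + \sum_{f\in\Fbad}(\Per_\gr(f)-3)$ and $\sharp\Fa(\bar\G) = \sum_{f\in\Fa(X,V)}(\Per_\gr(f)-2)$ (the faces in $\Fsq$ are already ``triangulated trivially'' only in the sense that I must be slightly careful: a rhombus is not a triangle, so strictly I should also split each rhombus by one diagonal; then each $f\in\Fsq$ contributes one extra edge and two triangular faces, consistent with $\Per_\gr(f)-3 = 1$ and $\Per_\gr(f)-2 = 2$). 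Thus in all cases $\sharp\bar\Ed = \sharp\Ed + \sum_{f\in\Fa(X,V)}(\Per_\gr(f)-3)$ and $\sharp\Fa(\bar\G) = \sum_{f\in\Fa(X,V)}(\Per_\gr(f)-2)$.

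Next I would relate the sum $\sum_{f}\Per_\gr(f)$ to $\Per_\gr(X,V)$ and $\sharp\Ed$ by a double-counting argument: each interior edge $e\in\Ed^\Int$ is counted on the boundary of exactly two faces; each boundary edge $e\in\Ed^\partial$ bounds exactly one face but is counted once in $\sum_f\Per_\gr(f)$ via $\Ed^\partial(f)$; each interior wire edge $e\in\Ed^{\Wi,\inte}$ bounds one face and is counted twice (it appears in $\Ed^{\Wi,\inte}(f)$ with weight $2$); and exterior wire edges bound no face. Hence
$$
\sum_{f\in\Fa(X,V)}\Per_\gr(f) = 2\sharp\Ed^\Int + \sharp\Ed^\partial + 2\sharp\Ed^{\Wi,\inte}.
$$
Using $\sharp\Ed = \sharp\Ed^\Int + \sharp\Ed^\partial + \sharp\Ed^{\Wi,\inte} + \sharp\Ed^{\Wi,\ext}$ and $\Per_\gr(X,V) = \sharp\Ed^\partial + 2\sharp\Ed^{\Wi,\ext}$ (here I should double-check whether the graph-perimeter \eqref{graphperG} is the right object, or whether only $\Per$ enters; the statement uses $\Per_\gr(X,V)$, so I use \eqref{graphperG}), one rewrites the above as $\sum_f\Per_\gr(f) = 2\sharp\Ed - \Per_\gr(X,V) - 2\sharp\Ed^{\Wi,\ext} + (\text{correction terms})$; I would carefully reconcile the wire-edge bookkeeping so that $\sum_f\Per_\gr(f) = 2\sharp\Ed - \Per_\gr(X,V)$. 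Then $\sum_f(\Per_\gr(f) - 3) = 2\sharp\Ed - \Per_\gr(X,V) - 3\,l_2(X,V)$ where $l_2 = \sharp\Fa(X,V)$, and similarly $\sum_f(\Per_\gr(f)-2) = 2\sharp\Ed - \Per_\gr(X,V) - 2\,l_2(X,V)$.

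Finally, I would apply the Euler identity to $\bar\G$. Since every face of $\bar\G$ is a triangle, $\bar\G$ contains no non-simply-connected faces, so $\chi(\bar\G) = \chi^\eu(\bar\G) = $ (number of connected components of $\bar\G$) $= $ (number of connected components of $\G$), because triangulation adds no vertices and only chords inside existing faces. Writing $\chi(\bar\G) = \sharp\X - \sharp\bar\Ed + \sharp\Fa(\bar\G)$ and substituting the two expressions above yields
$$
\chi(\G) = \sharp\X - \sharp\bar\Ed + \sharp\Fa(\bar\G) - \big(\text{difference}\big),
$$
and combining with $\chi(\G) = \sharp\X - \sharp\Ed + l_2(X,V)$ lets me solve for $\sharp\Ed$. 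Carrying the algebra through, the terms $-3l_2$ and $-2l_2$ combine to contribute $+l_2$, which cancels against the $l_2$ in $\chi(\G)$, and one is left with an identity expressing $\sharp\Ed$ in terms of $\sharp\X$, $\defe_\gr = \sum_{f\in\Fbad}(\Per_\gr(f)-4)$, $\Per_\gr(X,V)$ and $\chi(X,V)$; negating gives \eqref{decompformula}. The main obstacle I anticipate is the precise double-counting of the wire edges (interior and exterior) in both $\sum_f\Per_\gr(f)$ and $\Per_\gr(X,V)$, and the correct handling of the rhombic faces in the triangulation step (whether Lemma~\ref{triang} is applied to them or not) — these bookkeeping details must be made airtight for the coefficients $\tfrac12$, $\tfrac12$, $2$ and $-2$ to come out exactly as stated. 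A secondary point to verify is that the Euler characteristic $\chi$ (rather than $\chi^\eu$) is genuinely invariant under the triangulation, which holds precisely because triangulation eliminates all non-simply-connected faces while preserving connectivity.
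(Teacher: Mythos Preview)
Your approach is essentially correct and close in spirit to the paper's: both rest on an edge--face incidence count combined with the Euler relation. In fact your double-counting identity
\[
\sum_{f\in\Fa(X,V)}\Per_\gr(f)=2\,\sharp\Ed^{\Int}+\sharp\Ed^{\partial}+2\,\sharp\Ed^{\Wi,\inte}=2\,\sharp\Ed(X,V)-\Per_\gr(X,V)
\]
is correct (the wire-edge bookkeeping you worried about checks out exactly), and together with $\defe_\gr=\sum_{f\in\Fa}(\Per_\gr(f)-4)=\sum_f\Per_\gr(f)-4\,\sharp\Fa$ and $\chi=\sharp\X-\sharp\Ed+\sharp\Fa$ it yields \eqref{decompformula} by pure algebra, \emph{with no triangulation at all}. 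The paper instead triangulates only the faces in $\Fbad(X,V)$ and performs the incidence count on the resulting rhombus/triangle graph $\overline\G$; your route, once stripped of the triangulation step, is shorter.

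One genuine correction: your claim that ``$\bar\G$ contains no non-simply-connected faces'' is false. Lemma~\ref{triang} applies only to elements of $\Fa(X,V)$, which by definition are the simply-connected bounded components; the elements of $\Fa^{\nsc}(X,V)$ are never touched and persist in $\bar\G$, so in general $\chi(\bar\G)\ne\chi^{\eu}(\bar\G)$. Your final remark that the modified $\chi$ is invariant under triangulation is nevertheless true, but for a different reason: each chord added inside a simply-connected face increments both $\sharp\Ed$ and $\sharp\Fa$ by one, leaving $\chi$ unchanged, while $\Fa^{\nsc}$ is simply not modified. Since the direct argument above bypasses triangulation altogether, this issue becomes moot --- but you should drop the incorrect justification.
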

%%%%%%%%%%%%%%%%%%%%%%%%%%%%%%%%%%%%%%%%
\begin{proof}
%\red{We preliminarily note that the second equality in \eqref{decompformula} is justified by  \eqref{graphperG}.}
The proof relies on the strategy of \cite[Theorem 2.1]{DLF1}, with relevant differences due to the rhombic lattice we are dealing with.
By triangulating each of the faces in $\Fbad(X,V)$ as in Lemma \ref{triang}, one can construct a new planar graph $\overline{\G}(X,V)=(\X,\overline{\Ed}(X,V))$, all of whose faces are either triangles or rhombuses (the latter with unitary side-length). We denote by $\overline{\Fa}(X,V)$ the set of the faces of $\overline{\G}(X,V)$. Therefore,
\begin{equation}\label{split_faces}
\overline{\Fa}(X,V)=\Fsq(X,V)\cup\overline{\Fa}^\triangle(X,V),
\end{equation}
where $\overline{\Fa}^\triangle(X,V)$ denotes the set of triangular faces of $\overline{\G}(X,V)$.
Note that, in view of Lemma \ref{notri}, $\overline{\Fa}^\triangle(X,V)$ accounts only for the triangles obtained by triangulating the faces in $\Fbad(X,V)$. Moreover, we denote by $\Add(X,V):=\overline{\Ed}(X,V)\setminus \Ed(X,V)$ 
the set of the additional edges due to triangulation and by $\overline{\Ed}^\inte(X,V)$ the set of the interior edges of $\overline{\G}(X,V)$;  
thus, in view of Lemma \ref{triang}, we have
%letting $\overline{\Ed}^{\mathrm{int}}(X,V)$ be the set of edges in $\overline{\Ed}(X,V)$ lying on the boundary of two faces of $\overline{\G}(X,V)$, we get
%
\begin{equation}\label{zero}
\begin{aligned}
\overline{\Ed}^{\mathrm{int}}(X,V)&=\Ed^{\mathrm{int}}(X,V)\cup\Ed^{\Wi,\mathrm{int}}(X,V)\cup \Add(X,V);\\
\sharp\Add(X,V)&=\sum_{f\in \Fbad(X,V)}(\Per_{\gr}(f)-3);\\
\sharp \overline{\Fa}^\triangle(X,V)&=\sum_{f\in\Fbad(X,V)}(\Per_{\gr}(f)-2).
\end{aligned}
\end{equation}
By construction, the set of the boundary edges and the set of the exterior wire edges do not change under the triangulation procedure, and hence the  graph perimeter of $\G(X,V)$ coincide with the graph perimeter $\overline{\G}(X,V)$.
Moreover, the same holds for the Euler characteristic of ${\G}(X,V)$  in the sense of definition \eqref{Euchar} , i.e., $\chi(\overline{\G}(X,V))=\chi(X,V)$.
%Note that, by construction, the graph perimeter and the Euler characteristic of ${\G}(X,V)$, in the sense of definitions \eqref{graphperG} and \eqref{Euchar} respectively, do not change under the triangulation procedure, namely they coincide with those of $\overline{\G}(X,V)$.
Notice also that, by the classification of the edges in Section \ref{prelimgra}, we have 
\begin{equation}\label{classed}
\sharp\Ed^{\mathrm{int}}(X,V)+\sharp \Ed^{\Wi,\mathrm{int}}(X,V)= \sharp \Ed(X,V)-\sharp \Ed^{\Wi,\mathrm{ext}}(X,V)-\sharp \Ed^\partial (X,V).
\end{equation}

%By \eqref{Euchar} we immediately have
%\begin{equation}\label{first}
%4\chi(X,V)=4\sharp\X-4\sharp\overline{\Ed}(X,V)+4\sharp \overline{\Fa}(X,V)\,.
%\end{equation}
%
We note that every  exterior wire edge in $\overline{\G}(X,V)$ participates to no face, every boundary edge participates to one face, and all other edges participate to two faces. 
As a consequence, using \eqref{split_faces}, \eqref{zero} and \eqref{classed}, we thus find
\begin{equation}\label{second}
\begin{aligned}
&\phantom{=}4\sharp \Fsq(X,V)+3\sharp\overline{\Fa}^{\triangle}(X,V)\\
&=\sum_{f\in  \Fsq(X,V)}\Per_{\gr}(f)+\sum_{f\in  \overline{\Fa}^{\triangle}(X,V)}\Per_{\gr}(f)=2\sharp \overline{\Ed}^{\mathrm{int}}(X,V)+\sharp\Ed^{\partial}(X,V)\\
&=2\sharp\Ed^{\mathrm{int}}(X,V)+2\sharp \Ed^{\Wi,\mathrm{int}}(X,V)+2\sharp  \Add(X,V)+\sharp\Ed^{\partial}(X,V)\\
&=2\sharp\Ed(X,V)-2\sharp \Ed^{\Wi,\ext}(X,V)-\sharp\Ed^{\partial}(X,V)+2\sharp  \Add(X,V).
\end{aligned}
\end{equation}
%where the last equality follows by \eqref{classed}.
Moreover, by \eqref{zero}, we get
\begin{equation}\label{second2}
\begin{aligned}
2\sharp  \Add(X,V)-\sharp \overline{\Fa}^{\triangle}(X,V)&=\sum_{f\in\Fbad(X,V)}(\Per_{\gr}(f)-4)
=\defe_{\gr}(X,V).%+2\sharp\Ed^{\Wi,\mathrm{int}}(X,V).
\end{aligned}
\end{equation}
By \eqref{second} and \eqref{second2}, we obtain
\begin{equation}\label{second3}
\begin{aligned}
4\sharp\overline{\Fa}(X,V)=&4\sharp \Fsq(X,V)+3\sharp\overline{\Fa}^{\triangle}(X,V)+\sharp\overline{\Fa}^{\triangle}(X,V)\\
=&2\sharp\Ed(X,V)-2\sharp \Ed^{\Wi,\ext}(X,V)-\sharp\Ed^{\partial}(X,V)+2\sharp  \Add(X,V)\\
&+2\sharp  \Add(X,V)-\defe_{\gr}(X,V)\\%-2\sharp \Ed^{\Wi,\mathrm{int}}(X,V)\\
=&2\sharp\Ed(X,V)-\Per_{\gr}(X,V)-\defe_\gr(X,V)+4\sharp  \Add(X,V),\\
%=&2\sharp\Ed(X,V)-2\sharp \Ed^{\Wi}(X,V)-\Per(X,V)+\defe(X,V).
\end{aligned}
\end{equation}
where the last equality follows by \eqref{perimeter}.
By \eqref{Euchar} and \eqref{second3} we deduce that
\begin{equation*}
\begin{aligned}
4\chi(X,V)=& 4\sharp\X-4\sharp\overline{\Ed}(X,V)+4\sharp \overline{\Fa}(X,V)\\
=&4\sharp \X-4\sharp\Ed(X,V)-4\sharp\Add(X,V)\\
&+2\sharp\Ed(X,V)-\Per_\gr(X,V)-\defe_\gr(X,V)+4\sharp  \Add(X,V)
\\
=&4\sharp \X-2\sharp\Ed(X,V)-\Per_\gr(X,V)-\defe_\gr(X,V),
\end{aligned}
\end{equation*}
which yields
\begin{equation*}
-\sharp\Ed(X,V)=-2\sharp \X+\frac 1 2\defe_\gr(X,V)+\frac 1 2\Per_\gr(X,V)+2\chi(X,V),
\end{equation*}
whence the claim follows since, by definition, $\E(X,V)=-\sharp\Ed(X,V)$.
\end{proof}
%%%%%%%%%%%%%%%%%%%%%%%%%%
%%%%%%%%%%%%%%%%%%%%%%%%%%
%%%%%%%%%%%%%%%%%%%%%%%%%%
From now on, for every $(X,V)\in\AC$, we set
\begin{equation}\label{surface+}
\F(X,V):=\frac 1 2 \defe_{\gr}(X,V)+\frac 1 2\Per_{\gr}(X,V)+2\chi(X,V),
\end{equation}
so that by \eqref{decompformula}, we get $\E(X,V)=-2\sharp\X+\F(X,V)$.

%Note that \eqref{decompformula} implies that for every $(X,V)\in\AC$
%\begin{equation*}
%\E(X,V)=-2\sharp\X+\frac 1 2 \defe_\gr(X,V)+\frac 1 2\Per_{\gr}(X,V)+2\chi(X,V)\,,
%\end{equation*}
%where we have set $\defe_\gr(X,V):=\sum_{f\in\Fbad(X,V)}(\Per_{\gr}(f)-4)$\,.
%Removing the volume term (which depends only on the cardinality of $\X$ and not on the graph generated by $(X,V)$) from $\E(X,V)$ only the quantity
%%For every $(X,V)\in\AC$ we set 
%%\begin{equation}\label{surface}
%%\F(X,V):=\frac 1 2\defe(X,V)+\frac 1 2\Per(X,V)+\sharp \Ed^{\Wi}(X,V)+2\chi(X,V).
%%\end{equation}
%\begin{equation}\label{surface+}
%\F(X,V):=\frac 1 2 \defe_\gr(X,V)+\frac 1 2\Per_{\gr}(X,V)+2\chi(X,V)\,
%\end{equation}
%remains.

% By \eqref{graphgeoper}
%\begin{equation}\label{defG}
%\defe_\gr(X,V)\ge \defe(X,V)\,,
%\end{equation}
%where the equality holds true if and only if $\sharp\Ed^{\Wi}(f)=0$ for every $f\in\Fbad(X,V)$\,.
%%%%%%%%%%%%%%%%%%%%%%%%%% 
%%%%%%%%%%%%%%%%%%%%%%%%%%
%%%%%%%%%%%%%%%%%%%%%%%%%%
\subsection{Diamond formations}%{Rhombic ground states}
Here we provide rhombic minimizers of the energy $\E$ in $\AC_N$ for every $N\in\N$; more precisely, for every $N\in\N$ we construct a minimizing configuration $(Y_N,W_N)$ in $\AC_N$, referred to as {\it canonical configuration}, such that $\Y_N$ is a subset of the triangular lattice $\T^1$ defined in \eqref{T1} and $\Fbad(Y_N,W_N)=\emptyset$.

%, and, in Theorem \ref{fac} below we state that such a configuration minimizes $\E$ in $\AC_N$.

%In this section we set $\Ed(X,V):=\Ed^{\frac {\sqrt 3} 2}(X,V)$ and we focus on minimizers of the energy
%\begin{equation}\label{defE}
%\E(X,V)=-\sharp\Ed(X,V).
%\end{equation}
%In particular we show that for every $N\in\N$ there exists a minimizer $(Y_N,W_N)$ in $\AC_N$ (defined in \eqref{ACN}) for the energy $\E$ such that $\Y_N$ is a subset of the triangular lattice $\T^1$ defined in \eqref{T1}
%and $\Fbad(Y_N,W_N)=\emptyset$.
% all the faces of $(Y_N,W_N)$ are rhombuses with unitary side-length and inner angles equal to $\frac \pi 3$ and $\frac {2} 3 \pi$.

%We start by constructing such a minimizer for every $N\in\N$.

In order to define $(Y_N,W_N)$, we preliminarily note that for every $N\in\N$ there exists a unique pair $(l;\eta)\in(\N\cup\{0\})\times(\N\cup\{0\})$ with $0\le\eta<2l+3$ such that
\begin{equation*}%\label{decomdia}
N=(l+1)^2+\eta\,.
\end{equation*}

For every $\ell\in\N$, let $R_\ell$ be the closed rhombus with inner angles equal to $\frac{\pi}{3}\,,\frac{2}{3}\pi$, and whose longest diagonal is $\{0\}\times [0,\sqrt{3}\ell]$. Given $l\in\N$, let $\Gamma_l:= \partial R_{l+1}  \cap \{ (x,y) \in\R^2: \, y\ge \frac{\sqrt 3 }{2} (l+1)\}$. 
Notice that $\Gamma_l$ has length equal to $2l +2$, and hence it is the support of a  curve $\gamma_l$ parametrized on $[0,2l+2]$ and with unitary velocity. 

\begin{definition}\label{Y_N}
%We define the {\it canonical configuration} $(Y_N,W_N)$ by induction on $N\in\N$.
We set $\Y_0:=\emptyset$, $\Y_1:=\{(0;0)\}$\,, $\Y_2:=\Y_1\cup\{(\frac 1 2;\frac{\sqrt 3}{2})\}$\,, $\Y_{3}:=\Y_2\cup\{(0;\sqrt 3)\}$\,.
If $N=(l+1)^2$ for some $l\in\N$ we define the $\Y_N:=R_{l}\cap\T^1$, while if $N=(l+1)^2+\eta$, for some $l,\eta\in\N$ with $1\le \eta \le 2l+2$ we set $\Y_N:= (R_{l}\cap\T^1) \cup \{\gamma_l(0)\} \ldots \cup\{\gamma_l(\eta -1)\}$\,. 
We set $W_N:=\{(0;1)\}^N$ and
%For every $y\in\Y_N$ we set $w(y)=(0;1)$ and 
we call $(Y_N,W_N)$ the canonical configuration in $\AC_N$.
\end{definition}
%If $N=l^2$\,, then $(Y_N,W_N)\in\AC_N^{\diamondsuit}$ denotes the configuration with $\sharp\Ed^{\Wi,\ext}(Y_N,W_N)=0$. 
%For every $N\in\N$\,,  $(Y_N,W_N)\in\AC_N^{\diamondsuit}$ denotes the configuration such that $\Omega(Y_N,W_N)$ has side-lengths equal to 
%\begin{equation}\label{lencan}
%\begin{aligned}
%\{l-1,l-1,l-1,l-1\}&\qquad\textrm{ if }N=l^2\textrm{ or }N=l^2+1\,,\\
%\{l,\eta-1,1,l-\eta,l-1,l-1\}&\qquad\textrm{ if }N=l^2+\eta\textrm{ with }2\le\eta\le l+1\,,\\
%\{l,l,\eta-(l+1),1,2l-\eta+1,l-1\}&\qquad\textrm{ if }N=l^2+\eta\textrm{ with }l+2\le\eta\le 2l\,.
%\end{aligned}
%\end{equation}
%
%\red{da riscrivere}
\begin{figure}[h!]
{\def\svgwidth{300pt}
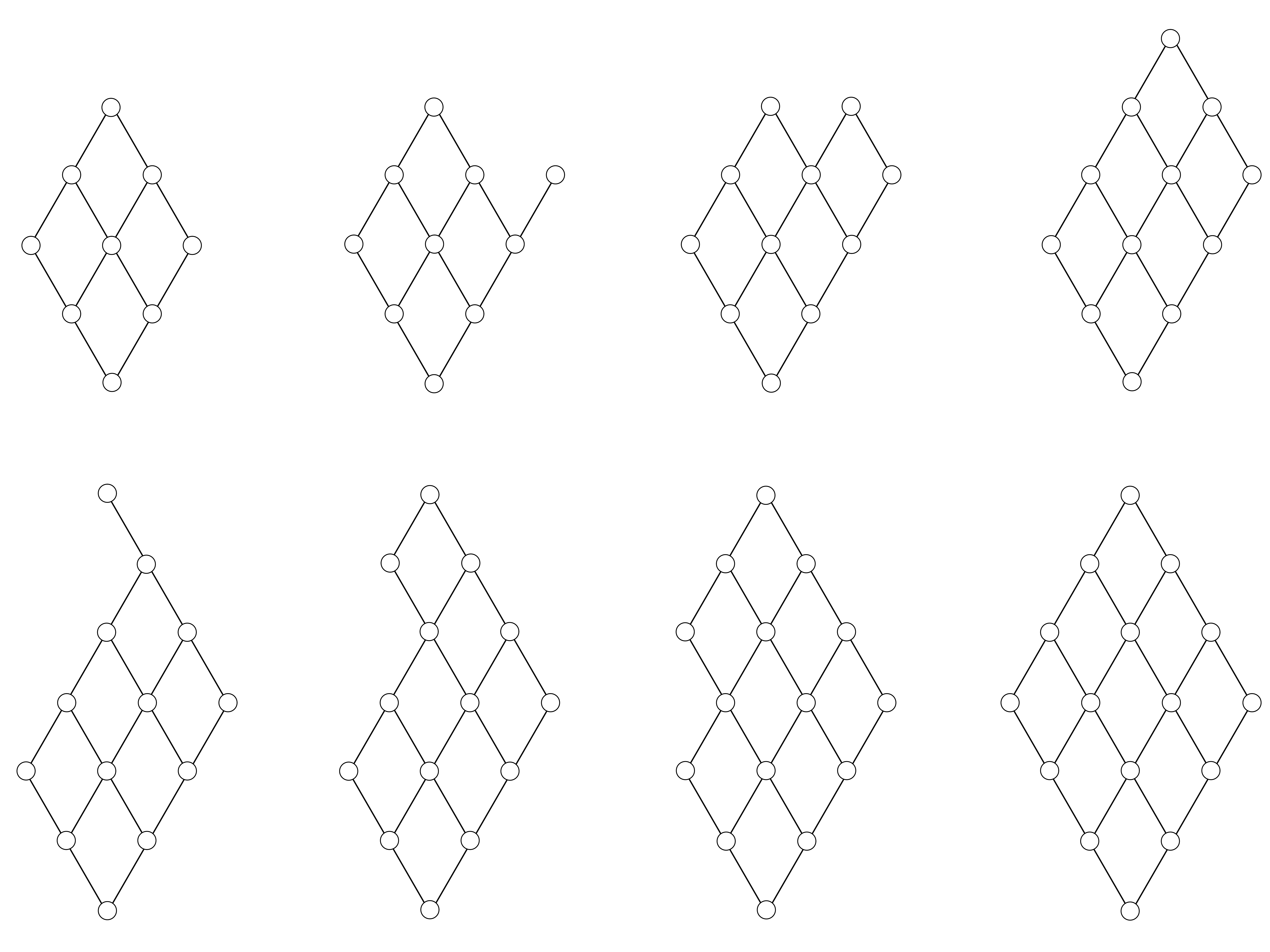}
\caption{The graphs $\G(Y_N,W_N)$ of the configurations $(Y_N,W_N)$ for $N=9,\ldots, 16$.}	\label{fig:YN}
\end{figure}

Note that $\Per_\gr(Y_0,W_0)=0$ and for every $N\in\N$
\begin{equation}\label{percan}
\Per_{\gr}(Y_N,W_N)=\begin{cases}
4l&\textrm{ if }N=(l+1)^2\,,\\
4l+2&\textrm{ if }N=(l+1)^2+\eta\textrm{ with }1\le\eta\le l+1\,,\\
4l+4&\textrm{ if }N=(l+1)^2+\eta\textrm{ with }l+2\le\eta\le 2l+2\,.
\end{cases}
\end{equation}
Moreover, by construction, $\defe_\gr(Y_N,W_N)=0$ and $\G(Y_N,W_N)$ is connected, so that
\begin{equation}\label{forserichiam}
\F(Y_N,W_N)=\frac 1 2 \Per_{\gr}(Y_N,W_N)+2.
\end{equation}

%Moreover, one can easily check that $\sharp\Ed^{\Wi,\ext}(Y_N,W_N)=1$ if and only if $N=(l+1)^2+\eta$ where $\eta\in\{1,l+2\}$ and that $\sharp\Ed^{\Wi,\ext}(Y_N,W_N)=0$ for other values of $N$\,.
%
%Furthermore, if $\sharp\Ed^{\Wi,\ext}(Y_N,W_N)=0$\,, then $O(Y_N,W_N)$ is a rhombus with side-length $l$ if and only if $N=(l+1)^2$ and it is a parallelogram with side-lengths $l$ and $l+1$ if and only if $N=(l+1)^2+l+1$\,.
%
%Finally,  if $\sharp\Ed^{\Wi,\ext}(Y_N,W_N)=1$\,, then $O(Y_N,W_N)$ is a rhombus with side-length $l$ if and only if $N=(l+1)^2+1$ and it is a parallelogram with side-lengths $l$ and $l+1$ if and only if $N=(l+1)^2+l+2$\,.

%Our main theorem of this section is the following.

%%%%%%%%%%%%%%%%%%%%%%%%%%
%%%%%%%%%%%%%%%%%%%%%%%%%%
%%%%%%%%%%%%%%%%%%%%%%%%%%
%The Lemmas used in the proof of Theorem \ref{fac} are stated and proven in Appendix \ref{aux2}.
Now, we fix some notations that will be useful to prove that the configurations $(Y_N,W_N)$ are minimizers. We recall that $\partial X$ is defined according to Definition \ref{borX0} for $\gamma=\frac{\sqrt 3}{2}$.
%%%%%%%%%%%%%%%%%%%%%%%%%%%%%%%%%%
%%%%%%%%%%%%%%%%%%%%%%%%%%%%%%%%%%
%%%%%%%%%%%%%%%%%%%%%%%%%%%%%%%%%%
\begin{definition}\label{borX}
For every $N\in\N$ and for every $(X,V)\in\AC_N$\,, we set $N':=N-\sharp\partial X$\,, $\X':=\X\setminus\partial X$ and we denote by $(X',V')$ the configuration in $\AC_{N'}$ satisfying the following property: for every $j=1,\ldots,N'$ there exists a unique $k\in\{1,\ldots,N\}$ such that
$x'_j=x_k$, $v'_j=v_k$, $x_k\notin\partial X$\,.
%In the following, we will call {\it interior points} of $(X,V)$ the elements of $\X'$; moreover, 
%With a little abuse of notation, we will set $(X\setminus \partial X, V\setminus \partial V):=(X',V')$.
\end{definition}
%%%%%%%%%%%%%%%%%%%%%%%%%%%%%%%%%%
%%%%%%%%%%%%%%%%%%%%%%%%%%%%%%%%%%
%%%%%%%%%%%%%%%%%%%%%%%%%%%%%%%%%%

%Given $N'\in\N\cup\{0\}$ and $P\in\N\cup\{0\}$\,, for every $N\in\N$ 
Notice that, for every $N,\, N', \, P\in \N\cup\{0\}$  
with $N> N'$ there exists a unique pair $(k;\delta)\in(\N\cup\{0\})\times (\N\cup\{0\})$ with 
$\delta<P+8(k+1)$ such that 
\begin{equation}\label{Ndecom}
\begin{aligned}
N=&N'+(P+8)+\ldots+(P+8k)+\delta\\
=&N'+k(P+4(k+1))+\delta\,.%\qquad\textrm{with }0\le \delta<P+8(k+1)\,.
\end{aligned}
\end{equation}

%%%%%%%%%%%%%%%%%%%%%%%%%%%%%%%%%
%%%%%%%%%%%%%%%%%%%%%%%%%%%%%%%%%
%%%%%%%%%%%%%%%%%%%%%%%%%%%%%%%%%
We are now in a position to state and prove the main result of this section; the proof relies on some auxiliary lemmas provided in Appendix \ref{aux2}.
%%%%%%%%%%%%%%%%%%%%%%%%%%
%%%%%%%%%%%%%%%%%%%%%%%%%%
%%%%%%%%%%%%%%%%%%%%%%%%%%
\begin{theorem}\label{fac}
For every $N\in\N$, it holds
\begin{equation}\label{newii}
\E(Y_N,W_N)=\min_{(X,V)\in\AC_N}\E(X,V)\,.
\end{equation}
\end{theorem}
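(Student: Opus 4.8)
The strategy is to use the energy decomposition $\E(X,V)=-2\sharp\X+\F(X,V)$ from Proposition~\ref{decompprop}, which reduces the minimization to showing that the canonical configuration minimizes the ``surface'' functional $\F$ among all configurations in $\AC_N$. Since $\F(X,V)=\frac12\defe_\gr(X,V)+\frac12\Per_\gr(X,V)+2\chi(X,V)$, and $\defe_\gr\ge 0$, $\chi\ge 1$ (Lemma~\ref{lm:bouchar}), the main content is an isoperimetric-type inequality bounding $\Per_\gr(X,V)$ from below in terms of $N=\sharp\X$, with the lower bound matched exactly by the explicit perimeter values \eqref{percan} for $(Y_N,W_N)$. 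Because $\defe_\gr(Y_N,W_N)=0$ and $\G(Y_N,W_N)$ is connected, \eqref{forserichiam} gives $\F(Y_N,W_N)=\frac12\Per_\gr(Y_N,W_N)+2$; so it suffices to prove that for every $(X,V)\in\AC_N$ one has $\frac12\defe_\gr(X,V)+\frac12\Per_\gr(X,V)+2\chi(X,V)\ge \frac12\Per_\gr(Y_N,W_N)+2$, i.e.\ that $\Per_\gr(X,V)+\defe_\gr(X,V)+4(\chi(X,V)-1)\ge \Per_\gr(Y_N,W_N)$.

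First I would reduce to the connected case: if $\G(X,V)$ has several connected components, the extra $+4(\chi-1)$ term, together with the superadditivity of $N\mapsto\Per_\gr(Y_N,W_N)$ (a quick check from \eqref{percan}, since $\Per_\gr(Y_N,W_N)\approx 4\sqrt N$ is concave-like and the row-defect corrections only help), lets me split $N$ over the components and argue componentwise; removing all wire edges and non-simply-connected faces can only decrease $\Per_\gr$ and $\defe_\gr$ while keeping $N$ fixed, so I may assume $O(X,V)$ is a genuine simply connected polygonal region tiled by the faces in $\Fa(X,V)$, each of which is (by Lemmas~\ref{notri}, \ref{onlyrhombic}) either a unit rhombus or a ``bad'' face with $\Per_\gr(f)\ge 5$. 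Then $\Per_\gr(X,V)=\sharp\Ed^\partial$ is the boundary length and $\defe_\gr(X,V)=\sum_{f\in\Fbad}(\Per_\gr(f)-4)$ measures the total deficiency of the non-rhombic tiles; the target inequality becomes a purely combinatorial statement about unit-rhombus-with-defects tilings of simply connected polygons, which should be provable by the induction-on-$N$ machinery developed in the auxiliary lemmas of Appendix~\ref{aux2}: peel off a boundary layer (using Definition~\ref{borX}, passing from $(X,V)$ to $(X',V')$), estimate how much the perimeter and the deficiency can drop when one removes $\sharp\partial X$ boundary points, and compare with the recursive description \eqref{Ndecom} of how $N$ decomposes into successive boundary shells of the canonical family.

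The key quantitative step — and the main obstacle — is the shell estimate: one must show that passing from $(X,V)\in\AC_N$ to the interior configuration $(X',V')\in\AC_{N'}$ one has something like $\Per_\gr(X,V)\ge \Per_\gr(X',V')+\big(\text{number of boundary points removed, up to an additive constant}\big)$, with the constant and the precise form calibrated so that iterating the estimate exactly reproduces the sum $N'+(P+8)+\cdots+(P+8k)+\delta$ in \eqref{Ndecom} and hence the value $\Per_\gr(Y_N,W_N)$. This is where the angular constraints of Lemma~\ref{possibleanglesrmk} (only angles $\tfrac\pi3$ and values in $[\tfrac23\pi,\pi]$ at a vertex, at most four bonds per vertex, and the incompatibility (ii) of two equal adjacent angles) are essential: they force the boundary $\partial O(X,V)$ to turn in a controlled way and prevent the region from being ``thin and cheap'', so that a boundary shell of a region containing $N$ points really does cost on the order of $\sqrt N$ edges. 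I expect the bookkeeping to require separating the boundary vertices according to how many interior bonds they carry (corner vertices of the polygon, where the rhombus angles force a definite perimeter contribution, versus flat boundary vertices) and a careful treatment of the $\defe_\gr$ term, since bad faces touching the boundary can simultaneously lower the apparent perimeter and must be charged against their deficiency — this trade-off is exactly what the $\frac12\defe_\gr+\frac12\Per_\gr$ combination is designed to control, mirroring \cite[Theorem~2.1]{DLF1}. Once the shell inequality and its matching with \eqref{percan}, \eqref{Ndecom} are in place, the theorem follows by assembling the componentwise bounds and invoking $\defe_\gr(Y_N,W_N)=0$, connectedness of $\G(Y_N,W_N)$, and \eqref{forserichiam}.
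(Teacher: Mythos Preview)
Your plan is essentially the paper's own approach: reduce via Proposition~\ref{decompprop} to minimizing $\F$, proceed by induction on $N$ using the boundary-peeling of Definition~\ref{borX}, and control one peel by comparing $\sharp\partial X$ with the shell decomposition~\eqref{Ndecom} (Lemma~\ref{shellconstr}) together with what you call the ``shell estimate''---this is exactly Lemma~\ref{increase}, whose proof indeed classifies boundary vertices by their interior degree using Lemma~\ref{possibleanglesrmk} (and Lemmas~\ref{A.1}--\ref{A.2}).

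One caution, though: your reduction ``removing all wire edges and non-simply-connected faces can only decrease $\Per_\gr$ and $\defe_\gr$ while keeping $N$ fixed'' goes in the wrong direction. Removing an exterior wire edge lowers $\Per_\gr$ by $2$ but raises $\chi$ by $1$, so $\F$ (equivalently your quantity $\Per_\gr+\defe_\gr+4(\chi-1)$) \emph{increases}; hence proving the lower bound for the wire-free graph does not give it for the original one. The paper never performs such a global simplification: connectedness of a minimizer is obtained separately by a gluing argument (Lemma~\ref{lemma0}) rather than via superadditivity, and wires and non-simple boundary are dealt with only inside the proof of Lemma~\ref{increase}, by cutting $O(X,V)$ into simply-bounded pieces $\tilde O_k$ and summing the estimate over the pieces.
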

%%%%%%%%%%%%%%%%%%%%%%%%%%%%%%%%%
%%%%%%%%%%%%%%%%%%%%%%%%%%%%%%%%%
%%%%%%%%%%%%%%%%%%%%%%%%%%%%%%%%%
\begin{proof}
The result is true for $N\in\{1,\ldots, 7\}$ by Lemma \ref{inspection}.
Let $\bar N\ge 7$\,. We will proceed by induction on $N$ and assume that $(Y_N,W_N)$ is a minimizer of $\F$ in $\AC_{N}$ for every $N=1,\ldots,\bar N$\,. Let now $N=\bar N+1(\ge 8)$ and let $(X,V)$ be a minimizer of $\F$ in $\AC_N$\,.  
In view of \eqref{decompformula} we have that $\E(X,V)=-2\sharp \X+\F(X,V)$\,, where $\F$ is defined in \eqref{surface+}.
It follows that $(X,V)$ 
%minimizes $\E$ in $\AC_N$ if and only if it 
minimizes $\F$ in $\AC_N$.

By Lemma \ref{lemma0} and by the inductive assumption, we have that $\G(X,V)$ is connected so that by Lemma \ref{lm:bouchar} we have
\begin{equation}\label{banal}
\begin{aligned}
\F(X,V)=&\frac 1 2 \Per_{\gr}(X,V)+\frac 1 2 \defe_\gr(X,V)+2\\
= &\frac 1 2 \sharp\partial X+\frac 1 2 (\Per_{\gr}(X,V)-\sharp\partial X)+\frac 1 2 \defe_\gr(X,V)+2
\ge \frac 1 2 \sharp\partial X+2 \,,
\end{aligned}
\end{equation}
where the last inequality follows by Lemma \ref{puntidibordo} and by the fact that $\defe_\gr(X,V)\ge 0$.
Let $(X',V')$ be defined as in Definition \ref{borX}. Then, $(X',V')\in\AC_{N'}$ with $N'=N-\sharp\partial X<N$\,. 
Now, in view of \eqref{Ndecom} applied with $P=\Per_{\gr}(Y_{N'},W_{N'})$, there exists 
a unique pair $(k;\delta)\in(\N\cup\{0\})\times (\N\cup\{0\})$ with 
$\delta<\Per_{\gr}(Y_{N'},W_{N'})+8(k+1)$ such that
\begin{equation}\label{deco0}
\sharp\partial X=N-N'=k(\Per_{\gr}(Y_{N'},W_{N'})+4(k+1))+\delta\,.%\qquad\textrm{with }0\le \delta<P+8(k+1)\,.
\end{equation}
%%%%%%%%%%%%%%%%%%%%%%%%%%%%%%%%%%%%
We consider three different cases.
%%%%%%%%%%%%%%%%%%%%%%%%%%%%%%%%%%%%
\vskip5pt
{\it Case 1: $k\ge 2$\,.}
By \eqref{banal} and \eqref{deco0} we have
\begin{equation}\label{ca1}
\F(X,V)\ge \frac 1 2\sharp\partial X+2= \frac 1 2\Big(k(\Per_{\gr}(Y_{N'},W_{N'})+4(k+1))+\delta\Big)+2\,.
\end{equation}
By Lemma \ref{shellconstr}, since $k\ge 2$, we have
$$
 \Per_{\gr}(Y_N,W_N)\le \Per_{\gr}(Y_{N'},W_{N'})+8(k+1),
$$
so that, using again that $k\ge 2$\,, we obtain
\begin{equation}\label{ca100}
\begin{aligned}
k(\Per_{\gr}(Y_{N'},W_{N'})+4(k+1))+\delta \ge&
% k\Per_{\gr}(Y_{N'},W_{N'})+8(k+1)\\
\Per_{\gr}(Y_{N'},W_{N'})+8(k+1)
\ge \Per_{\gr}(Y_N,W_N)\,;
\end{aligned}
\end{equation}
in view of \eqref{ca1}, \eqref{ca100}, and  \eqref{forserichiam}, we deduce that $
\F(X,V)\ge \F(Y_N,W_N)$ which implies \eqref{newii}. %Therefore, the case $k\ge 2$ cannot occur.

%%%%%%%%%%%%%%%%%%%%%%%%%%%%%%%%%
\vskip5pt
{\it Case 2: $k=1$.}
%%%%%%%%%%%%%%%%%%%%%%%%%%%%%%%%%
By \eqref{banal}, \eqref{deco0}, Lemma \ref{shellconstr} and \eqref{forserichiam}, we have
\begin{equation}\label{ca2}
\begin{aligned}
\F(X,V)\ge &\frac 1 2\sharp\partial X+2= \frac 1 2\Big(\Per_{\gr}(Y_{N'},W_{N'})+8+\delta\Big)+2\\
\ge &  \frac 1 2\Big(\Per_{\gr}(Y_{N'},W_{N'})+8+r_{N',N}\Big)+2-\frac 1 2\\
=&  \frac 1 2 \Per_{\gr}(Y_{N},W_{N})+2-\frac 1 2=\F(Y_N,W_N)-\frac 1 2\,.
\end{aligned}
\end{equation}
Since $\F$ takes values in $\N$\,, by \eqref{ca2} we deduce \eqref{newii} also in this case. 
%that $\F(X,V)=\F(Y_N,W_N)$.

%%%%%%%%%%%%%%%%%%%%%%%%%%%%%%%%%
\vskip5pt
{\it Case 3: $k=0$\,.}
By \eqref{forserichiam} and by Lemma \ref{shellconstr} we have
\begin{equation*}\label{basic0}
\F(Y_N,W_N)=\frac 1 2  \Per_{\gr}(Y_{N},W_{N})+2\le \frac 1 2 \Big(\Per_{\gr}(Y_{N'},W_{N'})+8\Big)+2=\F(Y_{N'},W_{N'})+4\,;
\end{equation*}
therefore, by Lemma \ref{increase} and by the inductive assumption, we obtain
\begin{equation*}%\label{basic}
\F(X,V)\ge \F(X',V')+4\ge \F(Y_{N'},W_{N'})+4\ge\F(Y_N,W_N)\,,
\end{equation*}
which by minimality of $(X,V)$ implies \eqref{newii} also in this case.
\end{proof}
%%%%%%%%%%%%%%%%%%%%%%%%%%
%%%%%%%%%%%%%%%%%%%%%%%%%%
%%%%%%%%%%%%%%%%%%%%%%%%%%
\begin{remark}
A general rhombic crystallization result for all the minimizers does not hold.
%, i.e., all the faces are rhombuses with unitary side-length is in general false. 
Indeed, for all $N$ of the form $N=(l+1)^2+1$ or $N=l(l+1)+1$ with $l\ge 1$, one can construct a minimizer having a face $\bar f$ with $\Per_{\gr}(\bar f)=5$, as shown in Figure \ref{fig:penta} below. Notice also that for such minimizers the orientation is not constant.
 %Examples of minimizers with wires and pentagons (for all $N=(l+1)^2+1$ or $N=l(l+1)+1$). 
\end{remark}
%%%%%%%%%%%%%%%%%%%%%%%%%%
%%%%%%%%%%%%%%%%%%%%%%%%%%
\begin{figure}[hh]
{\def\svgwidth{250pt}
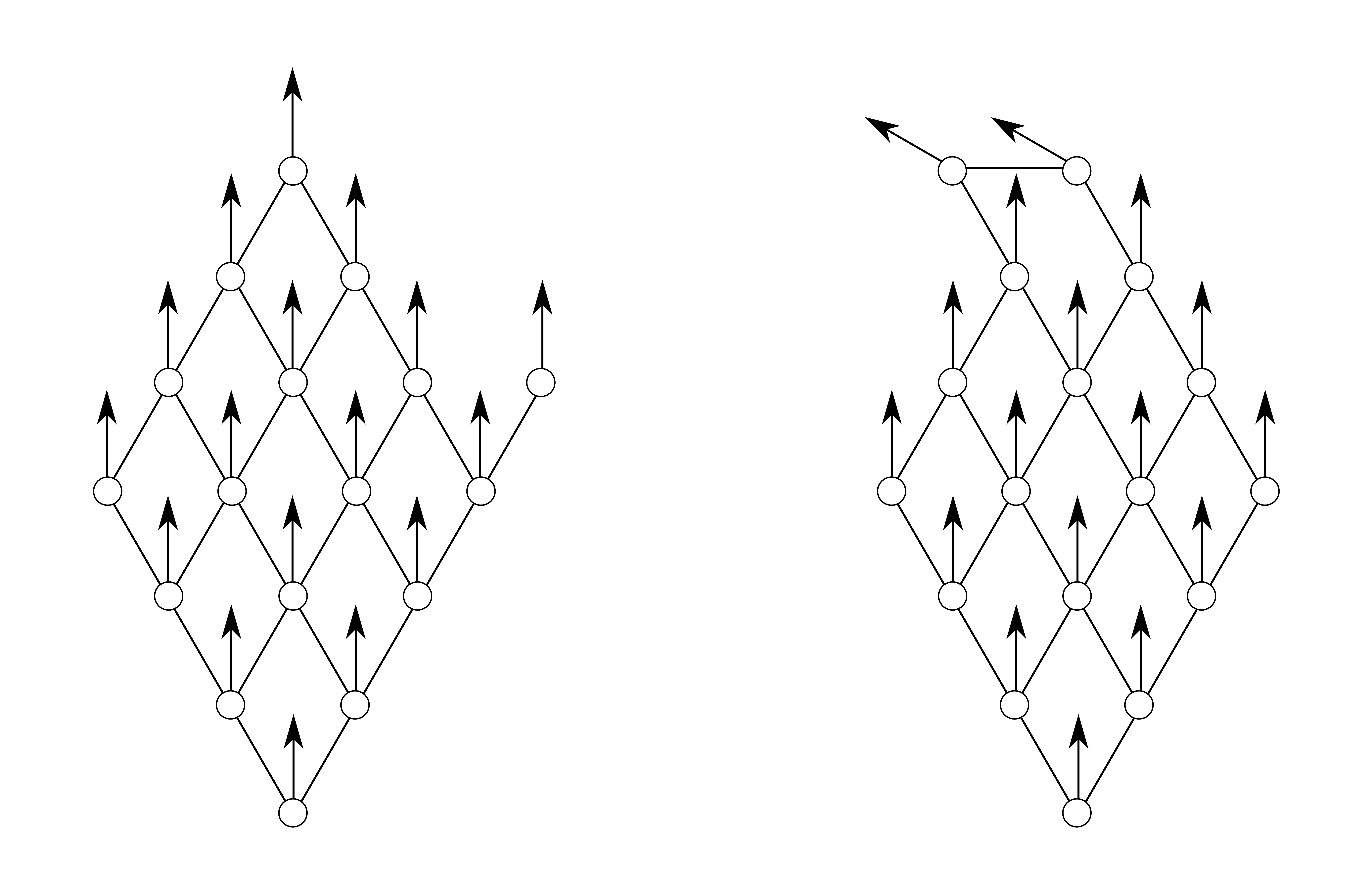}
\caption{Two minimizers of $\E$ in $\AC_{17}$. On the left: the canonical minimizer. On the right: A minimizer having a non-rhombic face.}	\label{fig:penta}
\end{figure}
%%%%%%%%%%%%%%%%%%%%%%%%%%
%%%%%%%%%%%%%%%%%%%%%%%%%%
%%%%%%%%%%%%%%%%%%%%%%%%%%

\subsection{Compactness of quasi-minimizers}
%\red{Io ho inserito tutto cio' che mi sembrava necessario, sicuramente il tutto puo' essere scritto in maniera piu' ordinata e chiara. \\
%Ovviamente controllate tutto, spero di non aver scritto cavolate, ma in particolare vorrei avere il vostro parere sulle parti in rosso che ho segnato (che rappresentano le parti che andavano adattate rispetto al lavoro con Novaga), sull'ordine con cui esporre gli argomenti e se c'e' qualcosa che posso omettere/rendere piu' autocontenuta}

This section is devoted to a compactness result for quasi-minimizers of $\E=\E^{\frac{\sqrt 3}{2}}$.
We start by considering the class of empirical measures associated to the configurations in $\AC$.  
To this purpose, let $\mathcal{M}$ denote the class of vectorial Radon measures from $\R^2$ to $\R^2$; for every $N\in\N$ we set 
\begin{equation*}
\begin{aligned}
\mathcal{EM}_N&:=\left\{ \mu\in\mathcal{M}\ :\ \mu=\sum_{i=1}^Nv_i\,\delta_{\frac{x_i}{\sqrt N}} 
%\textrm{ with } \X=\{x_1,\ldots,x_N\}\, 
\textrm{ for some }\ (X,V)\in\AC_N   \right\}.\\
\end{aligned}
\end{equation*}
%and we define the class $\mathcal{EM}$  of {\it vectorial empirical measures} as
%\begin{equation*}
%\begin{aligned}
%\mathcal{EM}&:=\bigcup_{N\in\N}\mathcal{EM}_N.
%\end{aligned}
%\end{equation*}
Fix $N\in\N$.
Since there is a natural one-to-one correspondence $\mathcal{I}_N$ from $\mathcal{EM}_N$ to the class $\AC_N$, we can re-define the energy \eqref{defE0} on $\mathcal{M}$ introducing the functional $\E_{N}:\mathcal{M}\to(-\infty,+\infty]$ given by
\begin{equation*}%\label{defEmeas}
\E_{N}(\mu)\defl\begin{cases} \E(\mathcal{I}_N(\mu)) & \textnormal{if}\ \mu\in\mathcal{EM}_N\\
+\infty & \textnormal{otherwise}.
\end{cases}
\end{equation*}
%instead of $\I_N(\mu)$
In the following, with a little abuse of notation, for every $\mu\in\mathcal{EM}_N$ we will write all the objects introduced in Subsection \ref{bonddef} in terms of $\mu$, namely, we set $\Ed(\mu):=\Ed(\I_N(\mu))$, $\Fa(\mu):=\Fa(\I_N(\mu))$ and so forth.  In all this subsection, the symbol $\Per^{\dg}$ will denote the De Giorgi's perimeter; recalling the notation introduced in \eqref{perimeter}, \eqref{graphperG} and \eqref{graphgeoper}, we have 
\begin{equation}\label{DeG}
\begin{aligned}
\Per(\mu)=&\Per(\I_N(\mu))=\Per^{\dg}(O(\mu)),\\
\Per_\gr(\mu)=& \Per_\gr(\I_N(\mu))=\Per^{\dg}(O(\mu))+2\sharp \Ed^{\Wi,\ext}(\mu),\\
\Per_\gr(f)=&\Per^{\dg}(f)+2\sharp\Ed^{\Wi,\inte}(f)\quad\textrm{ for every face }f\in\Fa(\mu).
\end{aligned}
\end{equation}
%Now, we associate an orientation to the rhombic faces, since they will play a key role in our analysis.
%Fix $\mu\in\mathcal{EM}$. In view of Lemma \ref{onlyrhombic},  for every  $f\in\Fsq(\mu)$ all the vertices of $f$ have the same orientation, denoted by $v(f)$. Since $v(f)\in\mathcal S^1$, there exists $\theta(f)\in [2\pi,4\pi]$ such that $v(f)=e^{i\theta(f)}$. We define the orientation associated to the measure $\mu$ as
%%Observe that there exists $\omega=e^{i\theta(f)}$ a (unique) unit vector parallel to one of the diagonals of $f$ such that $\theta(f)\in (\frac{\pi}{4}, \frac{3\pi}{4}]$; we set
%\begin{equation}\label{thetamu}
%\theta(\mu):= \sum_{f\in\Fsq(\mu)} \theta(f)\Indi_f\,.
%\end{equation}
%By construction, $\theta(\mu)$ is a piecewise constant function and we denote by $S_{\theta(\mu)}$ its jump set.

Moreover,  setting 
\begin{equation*}%\label{scaf}
f_N:=\frac{f}{\sqrt N}\qquad\textrm{for every } f\in\Fa(\mu), 
\end{equation*}
we have
\begin{equation*}%\label{1homperf}
\Per^\dg(f_N)=\frac{\Per^\dg(f)}{\sqrt{N}}\qquad\textrm{
for every }f\in\Fa(\mu),
\end{equation*} 
so that
\begin{equation}\label{lbbad}
\Per^\dg(f_N)\ge \frac{5}{\sqrt{N}}\qquad\textrm{
for every }f\in\Fbad(\mu);
\end{equation}
furthermore, setting $O_N(\mu):=\frac{1}{\sqrt{N}}O(\mu)=\bigcup_{f\in\Fa(\mu)} f_N$, it holds
\begin{equation*}%\label{1homperO}
\Per^\dg(O_N(\mu))=\frac{\Per^\dg( O(\mu))}{\sqrt N}.
\end{equation*}

Therefore, by
\eqref{decompformula} and \eqref{DeG}, the functionals $\E_N$ can be decomposed
\begin{equation}\label{epdecompformula}
\begin{aligned}
\E_N(\mu)+2\mu(\R^2)=&\frac{\sqrt N}{2} \sum_{f\in \Fa^\bad(\mu)}\left(\Per^\dg(f_N)-\frac{4}{\sqrt N}\right)\\
&+\frac{\sqrt N}{2}\Per^\dg(O_N(\mu))+\sharp \Ed^{\Wi}(\mu)+2\chi(\mu)\,.
\end{aligned}
\end{equation}

We can finally state the desired compactness result.
%%%%%%%%%%%%%%%%%%%%%%%%%%%%%%%%%%%%%%%%%%%%%%%%%
%%%%%%%%%%%%%%%%%%%%%%%%%%%%%%%%%%%%%%%%%%%%%%%%%
%%%%%%%%%%%%%%%%%%%%%%%%%%%%%%%%%%%%%%%%%%%%%%%%%
\begin{theorem}\label{compthm}
Let $\{\mu_N\}\subset\M$ be such that 
\begin{equation}\label{enbou}
\E_N(\mu_N)+{2}\,\mu_N(\R^2)\le C \sqrt{N} \textrm{ for some } C\in\R\,.
\end{equation}
Then,  up to a subsequence, $\frac{\sqrt 3}{2N}\mu_N\weakstar\mu$ for some $\mu\in\mathcal{M}$. Moreover, $\mu=\sum_{j\in J}v_j\Indi_{\omega_j}\ud x$ where $J\subseteq\N$\,, $v_j\in\mathcal{S}^1$ for every $j\in J$, and $\sum_{j\in J}\Per^{\dg}(\omega_j)<+\infty$.
%\begin{itemize}
%\item[(i)] $\frac{\sqrt 3}{N}\mu_N\weakstar\Indi_\Omega \ud x$ for some set $\Omega\subset\R^2$ with $\Indi_{\Omega}\in BV(\R^2)$\,;
%\item[(ii)] $\theta(\mu_N)\weakly \theta$ in $SBV_\loc(\R^2)$\,, for some $\theta=\sum_{j\in J}\theta_j\Indi_{\omega_j}$ in $SBV(\R^2)$\,, where $J\subseteq\N$\,,
% $\{\omega_j\}_{j}$ is a Caccioppoli partition of $\Omega$\,, and $\{\theta_j\}_{j}\subset {[2\pi ,4\pi]}$\,.
%\end{itemize}
\end{theorem}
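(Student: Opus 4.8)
The plan is to extract compactness from the energy bound \eqref{enbou} via the decomposition \eqref{epdecompformula}. The key observation is that every term on the right-hand side of \eqref{epdecompformula} is nonnegative: the $\defe$-type sum is nonnegative since $\Per^\dg(f_N)\ge 4/\sqrt N$ for every face (Lemma~\ref{notri} gives $\Per(f)\ge 4$, and for $f\in\Fbad$ actually $\Per^\dg(f_N)\ge 5/\sqrt N$ by \eqref{lbbad}); the perimeter term $\frac{\sqrt N}{2}\Per^\dg(O_N(\mu_N))$ is obviously nonnegative; $\sharp\Ed^{\Wi}(\mu_N)\ge 0$; and $\chi(\mu_N)\ge 1$ by Lemma~\ref{lm:bouchar}, since $\chi$ splits into a sum over connected components. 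Hence \eqref{enbou} forces, simultaneously, the bounds
\begin{equation*}
\Per^\dg(O_N(\mu_N))\le 2C,\qquad \sum_{f\in\Fbad(\mu_N)}\Big(\Per^\dg(f_N)-\tfrac{4}{\sqrt N}\Big)\le 2C,\qquad \sharp\Ed^\Wi(\mu_N)\le C\sqrt N,\qquad \chi(\mu_N)\le C\sqrt N.
\end{equation*}
In particular, the rescaled occupied regions $O_N(\mu_N)$ have uniformly bounded perimeter.

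\textbf{Compactness of the supports.} Since $\frac{\sqrt3}{2}$ is the area of a unitary rhombus, the rescaled region $O_N(\mu_N)$ has area $|O_N(\mu_N)| = \frac{1}{N}\,|O(\mu_N)| \le \frac{1}{N}\cdot\frac{\sqrt3}{2}\,\sharp\Fsq(\mu_N) + (\text{small bad contribution})$, which together with a counting argument relating $\sharp\X$, $\sharp\Ed$, $\sharp\Fa$ and the energy bound shows that $|O_N(\mu_N)|$ stays bounded (this is where one uses that the energy per particle is bounded below by $-4$, so $\sharp\X \lesssim N$ and the number of rhombic faces is $O(N)$). Thus $\{O_N(\mu_N)\}$ is a sequence of sets of finite perimeter with uniformly bounded measure and perimeter; after a suitable translation (or without, since the construction is already centered), by the compactness theorem for sets of finite perimeter we may pass to a subsequence with $O_N(\mu_N)\to O$ in $L^1_{\loc}$ for some set of finite perimeter $O$, and by lower semicontinuity $\Per^\dg(O)\le 2C$. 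The measure $\frac{\sqrt3}{2N}\mu_N$ has total variation $\frac{\sqrt3}{2N}\sharp\X_N = \frac{\sqrt3}{2N}\cdot N_{\text{pts}}$, which is bounded; hence up to a further subsequence $\frac{\sqrt3}{2N}\mu_N \weakstar \mu$ for some $\mu\in\M$, and $|\mu|$ is supported in $\clos(O)$.

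\textbf{Identifying the limit.} It remains to show $\mu = \sum_{j\in J} v_j\,\Indi_{\omega_j}\,\mathrm dx$ with $v_j\in\mathcal S^1$ and $\sum_j \Per^\dg(\omega_j)<\infty$. The density statement: on the bulk of each large rhombic patch the orientation field $W$ is \emph{constant} by Lemma~\ref{onlyrhombic} (all vertices of a unitary-rhombus face share one orientation, and adjacent rhombic faces glued along an edge must share it too), so the normalized empirical measure restricted to such a patch converges to a constant vector times $\frac{\sqrt3}{2}$ times the counting density, and the $\frac{\sqrt3}{2N}$ normalization is exactly tuned so that $\frac{\sqrt3}{2N}\sum \delta_{x_i/\sqrt N}\weakstar \Indi_{O}\,\mathrm dx$ on a region tiled by unitary rhombuses (each rhombus has area $\frac{\sqrt3}{2}$, hence $\frac{\sqrt3}{2N}\cdot(\text{number of lattice points in a region of rescaled area }a)\to a$). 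One decomposes $O$ into its connected components $\omega_j$, $j\in J\subseteq\N$; finiteness of $\sum_j \Per^\dg(\omega_j)$ follows from $\Per^\dg(O)=\sum_j\Per^\dg(\omega_j)\le 2C$ since the components are pairwise positively separated at the discrete level (their distance is $\ge 1/\sqrt N\to 0$, but the patch structure forces genuine separation in the limit). On each $\omega_j$ the orientations of the points feeding into it are asymptotically constant — this requires an argument that the bad faces and wire edges, whose total number is controlled, cannot propagate an orientation change across a patch of positive limit area — giving a well-defined $v_j\in\mathcal S^1$, so that $\mu\restr\omega_j = v_j\,\Indi_{\omega_j}\,\mathrm dx$.

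\textbf{Main obstacle.} The routine parts are the nonnegativity/compactness extraction and the area computation. The genuinely delicate point is the last one: showing that the orientation field is asymptotically constant on each limiting component $\omega_j$, i.e. that the controlled number of non-rhombic faces and wire edges ($\defe_\gr = O(1)$ after rescaling, $\sharp\Ed^\Wi = O(\sqrt N)$) cannot separate a single limit component into sub-patches with different orientations. The idea is that two rhombic faces sharing an edge have the same orientation, so within any edge-connected union of rhombic faces the orientation is literally constant; a different orientation can only occur across a non-rhombic face or across a ``crack'' made of wire edges, and the total length of such defects, measured at scale $1/\sqrt N$, tends to zero — hence in the limit it cannot disconnect a component of positive area, and the $BV$ function $x\mapsto$ (local orientation) has vanishing total variation on the interior of each $\omega_j$, forcing it to be constant there. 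Making this last implication rigorous — turning ``small total defect length'' into ``constant limiting orientation per component'' — is the crux of the proof.
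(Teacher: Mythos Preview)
Your extraction of the individual bounds from the decomposition \eqref{epdecompformula} is correct, and the resulting uniform bound on $\Per^{\dg}(O_N(\mu_N))$ is exactly the starting point one needs. The problem lies entirely in the identification of the limit, where your proposal contains a genuine error.

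You assert that ``the total length of such defects, measured at scale $1/\sqrt N$, tends to zero''. This is false: the energy bound only gives that this rescaled length is $O(1)$, not $o(1)$. From $\sum_{f\in\Fbad}(\Per^{\dg}(f_N)-4/\sqrt N)\le 2C$ together with $\Per^{\dg}(f_N)\ge 5/\sqrt N$ one obtains $\Per^{\dg}(f_N)\le 5\bigl(\Per^{\dg}(f_N)-4/\sqrt N\bigr)$, hence $\sum_{f\in\Fbad}\Per^{\dg}(f_N)\le 10C$; the rescaled total length of wire edges is likewise $O(1)$. These interfaces therefore survive in the limit. Concretely, two large rhombic patches with \emph{different} constant orientations, placed side by side with a gap of width $1$, satisfy \eqref{enbou} and converge after rescaling to a single connected set on which the limiting orientation takes two distinct values on the two halves. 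Thus the sets $\omega_j$ cannot in general be the connected components of the limit of $O_N(\mu_N)$, and the argument that ``small total defect length forces constancy on each component'' breaks down: the defect length is not small, and it is precisely what carries the jump set of the limiting orientation field.

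The paper avoids this by working not with the sets $O_N(\mu_N)$ but directly with the piecewise constant $SBV$ functions
\[
\rho_N:=\sum_{f\in\Fsq(\mu_N)}v(f)\,\Indi_{f_N}\,.
\]
One has $\nabla\rho_N\equiv 0$, $\|\rho_N\|_{L^\infty}\le 1$, and $\mathcal H^1(S_{\rho_N})\le \Per^{\dg}\bigl(\bigcup_{f\in\Fsq}f_N\bigr)$, which is bounded by $\Per^{\dg}(O_N(\mu_N))+\sum_{f\in\Fbad}\Per^{\dg}(f_N)\le 10C$. Ambrosio's $SBV$ compactness (Proposition~\ref{SBVcompactProp}) then yields $\rho_N\to\rho$ with $\rho=\sum_{j}v_j\Indi_{\omega_j}$ for a Caccioppoli partition $\{\omega_j\}$ with $\sum_j\Per^{\dg}(\omega_j)<\infty$; the structure of the limit is delivered \emph{automatically} by the compactness theorem, with no topological argument on components required. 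The remaining work (Step~2 in the paper) is the routine verification that $\frac{\sqrt3}{2N}\mu_N\weakstar\rho\,\mathrm dx$, done by comparing $\mu_N$ with the averaged measures supported on the rhombic faces.
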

%%%%%%%%%%%%%%%%%%%%%%%%%%%%%%%%%%%%%%%%%%%%%%%%%
%%%%%%%%%%%%%%%%%%%%%%%%%%%%%%%%%%%%%%%%%%%%%%%%%
%%%%%%%%%%%%%%%%%%%%%%%%%%%%%%%%%%%%%%%%%%%%%%%%%
%We refer to \cite{AFP} for a complete analysis of $BV$ and $SBV$ functions. 
Before providing the proof of Theorem \ref{compthm} we state the following compactness property  which is a corollary of Ambrosio compactness result \cite{A} and of \cite[Theorem 4.23]{AFP}. 
\begin{proposition}\label{SBVcompactProp}%\cite[Corollary 2.2]{DNP}
Let $\{u_h\}\subset SBV(\R^2;\R^2)$. If there exists $p>1$ and $C>0$ such that
\begin{equation}\label{ipot}
\int_{\R^2} |\nabla u_h |^p \ud x+ \mathcal{H}^1(S_{u_h})+ \left\lVert u_h\right\rVert_{L^{\infty}(\R^2)}\leq C\ \text{for all}\ h\in\N\,,
\end{equation}
then, up to a subsequence, there exists $u\in SBV(\R^2)$ such that 
for every open bounded set $A\subset\R^2$ it holds 
\begin{equation*}%\label{SBVcompactEq}
\begin{split}
 & u_h\to u\ \text{in}\ L^1(A;\R^2)\,,\\
 & \nabla u_h \rightharpoonup \nabla u\ \text{in}\ L^1(A;\R^{2\times 2})\,,\\
& \liminf_{h\to\infty} \mathcal{H}^1(S_{u_h}\cap A)\geq \mathcal{H}^1(S_{u}\cap A).
\end{split}
\end{equation*}
Moreover, if $\nabla u= 0$, then $u=\sum_{j\in J}v_j\Indi_{\omega_j}\ud x$ where $J\subseteq\N$\,, $v_j\in \R^2$ for every $j\in J$, and $\sum_{j\in J}\Per^{\dg}(\omega_j)<+\infty$.
\end{proposition}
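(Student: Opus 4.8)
The plan is to obtain Proposition~\ref{SBVcompactProp} as a \emph{localized} version of Ambrosio's $SBV$ compactness and closure theorem~\cite{A}, combined with a diagonal argument that transfers the conclusions from bounded domains to the whole plane, and finally to read off the partition structure in the case $\nabla u=0$ from~\cite[Theorem~4.23]{AFP}. First I would fix the exhausting sequence of balls $B_k:=B(0,k)$, $k\in\N$. For each fixed $k$, the restrictions $u_h|_{B_k}$ satisfy, uniformly in $h$, the hypotheses of Ambrosio's theorem on the bounded open set $B_k$: by \eqref{ipot} the quantities $\int_{B_k}|\nabla u_h|^p\ud x$, $\mathcal{H}^1(S_{u_h}\cap B_k)$ and $\|u_h\|_{L^\infty(B_k)}$ are all bounded by $C$, and $p>1$. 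Hence~\cite{A} yields, on $B_1$, a subsequence and a limit $u^{(1)}\in SBV(B_1;\R^2)$ with $u_h\to u^{(1)}$ in $L^1(B_1;\R^2)$, $\nabla u_h\weakly\nabla u^{(1)}$ in $L^p(B_1;\R^{2\times2})$, and $\liminf_h\mathcal{H}^1(S_{u_h}\cap B_1)\ge\mathcal{H}^1(S_{u^{(1)}}\cap B_1)$. Iterating along nested subsequences over $B_2,B_3,\dots$, the limits are consistent ($u^{(k+1)}|_{B_k}=u^{(k)}$, since passing to a further subsequence preserves the previous limit), so they glue to a single function $u$ on $\R^2$. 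The diagonal subsequence (not relabeled) then realizes, on every bounded open $A\subset\R^2$, the three convergences in the statement; in particular, since $A$ is bounded we have $L^\infty(A)\subset L^{p'}(A)$, so weak $L^p$ convergence of the gradients yields $\nabla u_h\weakly\nabla u$ in $L^1(A;\R^{2\times2})$.

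Next I would upgrade the local information to the global bounds that make $u$ admissible and that feed the structure theorem. The uniform $L^\infty$ bound passes to the limit through the $L^1_{\loc}$ convergence, giving $\|u\|_{L^\infty(\R^2)}\le C$. Weak lower semicontinuity of the $L^p$ norm on each $B_k$, followed by monotone convergence as $k\to\infty$, gives $\int_{\R^2}|\nabla u|^p\ud x\le C$. Finally, applying the lower semicontinuity inequality with $A=B_k$ and letting $k\to\infty$ (monotone convergence on the left, using $\mathcal{H}^1(S_{u_h})\le C$) yields $\mathcal{H}^1(S_u)\le C$. Together with the $SBV$ (i.e.\ no Cantor part) structure of the limit provided by Ambrosio's closure theorem, these bounds place $u$ in $SBV$ of every bounded open set, with globally finite approximate-gradient energy and jump measure, which is exactly what the statement requires.

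It remains to treat the case $\nabla u=0$ a.e. Then $Du$ reduces to its jump part and $u$ is a piecewise constant $SBV$ function with $\mathcal{H}^1(S_u)<+\infty$: this is precisely the setting of~\cite[Theorem~4.23]{AFP}, which associates to $u$ a Caccioppoli partition $\{\omega_j\}_{j\in J}$ with $J\subseteq\N$ and values $v_j\in\R^2$ such that $u=\sum_{j\in J}v_j\Indi_{\omega_j}$ a.e. The summability $\sum_{j\in J}\Per^{\dg}(\omega_j)<+\infty$ then follows from $\mathcal{H}^1(S_u)<+\infty$, because for a Caccioppoli partition one has $S_u=\bigcup_{j}\partial^*\omega_j$ up to an $\mathcal{H}^1$-negligible set, each jump point lying on the reduced boundary of exactly two cells, whence $\sum_{j}\Per^{\dg}(\omega_j)=2\,\mathcal{H}^1(S_u)$.

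I expect the main obstacle to be not a single deep estimate, since Ambrosio's theorem does the heavy lifting on bounded sets, but rather the passage to the \emph{unbounded} domain $\R^2$: one must run the exhaustion and diagonalization carefully to produce a single subsequence and a single limit $u$ consistent across all $B_k$, and then check that each of the three global bounds survives the limit, the role of the assumption $p>1$ being to prevent concentration of the gradients and thereby to keep the limit special (free of a Cantor part). A secondary point worth stating precisely is the meaning of ``$u\in SBV(\R^2)$'': since $\int_{\R^2}|\nabla u|^p\ud x<+\infty$ does not force $u\in L^1(\R^2)$ (a cell $\omega_j$ may have infinite measure, e.g.\ a half-plane), $u$ is to be understood as an element of $SBV_{\loc}(\R^2;\R^2)$ carrying the stated global finiteness properties, which is exactly the form in which it is used in the proof of Theorem~\ref{compthm}.
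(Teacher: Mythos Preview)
Your proposal is correct and follows exactly the route the paper indicates: the paper does not give a proof of Proposition~\ref{SBVcompactProp} but merely states it as ``a corollary of Ambrosio compactness result \cite{A} and of \cite[Theorem 4.23]{AFP}'', and your argument spells out precisely this derivation---localizing Ambrosio's theorem to an exhaustion by balls, diagonalizing, and invoking the Caccioppoli-partition structure from \cite[Theorem~4.23]{AFP} when $\nabla u=0$. Your remark on reading ``$u\in SBV(\R^2)$'' as $SBV_{\loc}$ with global bounds is also in line with how the paper actually uses the proposition in the proof of Theorem~\ref{compthm}.
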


In the following we say that a sequence $\{u_h\}_h$ converges to some function $u$ in $SBV_{\loc}(\R^2;\R^2)$ if $u_h\to u$ in $L^1_{\loc}(\R^2;\R^2)$ and  $\{u_h\}_h$ satisfies \eqref{ipot} for some $p>1$.

%%%%%%%%%%%%%%%%%%%%%%%%%%%%%
%%%%%%%%%%%%%%%%%%%%%%%%%%%%%
%%%%%%%%%%%%%%%%%%%%%%%%%%%%%
\begin{proof}[Proof of Theorem \ref{compthm}]
The proof is divided into two steps.

{\it Step 1.}
Let $\{\rho_N\}_{N\in\N}$ be the sequence defined by 
$$
\rho_N:=\sum_{f\in \Fsq(\mu_N)}v(f)\Indi_{f_N}\qquad\textrm{ for every }N\in\N\,;
$$
we first claim that, up to a subsequence, $\rho_N\weakly \rho$ in $SBV_{\loc}(\R^2;\R^2)$ for some function $ \rho$ of the form $\rho:=\sum_{j\in J}v_j\Indi_{\omega_j}$ where $v_j$ and $\omega_j$ satisfy the claim of the theorem.

In view of \eqref{enbou}, \eqref{epdecompformula}, \eqref{lbbad}, and \eqref{bouchar}, we have 
\begin{equation}\label{perbound}
\begin{aligned}
C & \ge \frac{1}{\sqrt N} ( \E_N(\mu_N)+{2}\,\mu_N(\R^2)) \ge \frac 1 2
\Per^\dg(O_N(\mu_N))+ {\frac{1}{10}} \sum_{f\in \Fbad(\mu_N)}  \Per^\dg(f_N) 
\\
& \ge \frac{1}{10}  \Per^\dg \Big(\bigcup_{f\in \Fsq(\mu_N)}f_N\Big) = \frac{1}{10} \Per^\dg(\Om_N(\mu_N)),
\end{aligned}
\end{equation}
where we have denoted by 
$\Om_N(\mu_N)$ the interior of 
$\bigcup_{f\in \Fsq(\mu_N)}\clos(f_N)$\,.
By the very definition of $\rho_N$ we have that $\|\rho_N\|_{L^\infty}\le 1$,  $\nabla\rho_N=0$, and $\mathcal H^1(S_{\rho_N})\le \Per^\dg(\Om_N(\mu_N))\le 10 C$. 
 
%We claim that, up to a subsequence, $\Indi_{\Om_N(\mu_N)} \rightharpoonup  \Indi_\Om$ in $SBV_{\loc}(\R^2)$ for some set $\Om$ of finite perimeter and that $\theta(\mu_N) \rightharpoonup\theta$ in $SBV_{\loc}(\R^2)$, for some $\theta=\sum_{j\in J}\theta_j\Indi_{\omega_j}$ in $SBV(\Om)$, where $J\subset\N$, $\{\omega_j\}$ is a Caccioppoli partition of $\Om$ and $\{\theta_j\}\subset [2\pi, 4\pi]$.

%Indeed, by the definition of $\theta(\mu)$ in \eqref{thetamu} we immediately have $\left\lVert\theta(\mu_N) \right\rVert_{\infty}\leq 4\pi$, and, by the uniform bound \eqref{perbound} we obtain that $\mathcal{H}^1(S_{\theta(\mu_N)})\leq C$; it follows that $\|\theta(\mu_N)\|_{BV}\le C$
%
%since the jump set of $\theta_\ep$ is contained in the union of the boundaries of the faces $f\in\Fsq_\ep(\mu_\ep)$ (observe that all the faces have interior parts which are pairwise disjoint). \\
Then, the claim directly follows by Proposition \ref{SBVcompactProp}\,.

{\it Step 2.}
Now we prove that, up to a subsequence,
\begin{equation}\label{fina}
\frac{\sqrt 3}{2N}\mu_N\weakstar \rho\ud x\qquad\textrm{ as }N\to +\infty,
\end{equation}
with $\rho$ provided by Step 1.
To this purpose, for every $f\in \Fsq(\mu_N)$ we denote by $a_j(f_N)$ ($j=1,2,3,4$) the vertices of $f_N$\,,  and we define
\begin{equation}
\hat\mu_N:={\frac 1 {4N}}\sum_{f\in \Fsq(\mu_N)}v(f)\sum_{j=1}^4 \delta_{{a_j(f_N)}}\,,
%\qquad\tilde\rho_N:=\sum_{f\in \Fsq(\mu_N)}\Indi_{f_N}\,,
\qquad\tilde\mu_N:=\rho_N\ud x\,.
\end{equation}

%By \eqref{enbou} and \eqref{epdecompformula}, 
%\begin{equation*}
%C\ge\frac{1}{\sqrt N}(\E_N(\mu_N)+{2}\,\mu_N(\R^2))\ge\frac 1 2 \Per^\dg (O_N(\mu_N)),
%%\Big(\bigcup_{f\in F(\mu_N)}f_N\Big)
%\end{equation*}
By \eqref{perbound} and by the isoperimetric inequality, we obtain
\begin{equation*}
\int_{\R^2}|\rho_N|\ud x=|\tilde\mu_N|(\R^2)=|\Omega_N(\mu_N)|\le C\,.
\end{equation*}
%By the proof of Step 1 it follows that, up to a subsequence, $\rho_N\to \rho$ in $L^1(\R^2;\R^2)$\,.\\
%By the proof of Step 1 it follows that, up to a subsequence, $\tilde\mu_\ep\to \Indi_{\Omega}$ in $L^1(\R^2)$\,. \\
We now show that $\frac{\sqrt{3}}{2}\hat \mu_N - \tilde \mu_N \weakstar 0$\,. Let $\psi\in C^0_c(\R^2)$\,; for every $f\in\Fsq(\mu_N)$, let $\psi_{f_N}$ be the average of $\psi$ on $f_N$\,. Then
% \red{qua serve rad 3 invece di rad3/2 cosi' esce l'area del rombo di lato epsilon nei calcoli},
\begin{equation}\label{cappmenotil}
\begin{aligned}
\left | \langle \frac{\sqrt{3}}{2} \hat \mu_N - \tilde \mu_N, \psi\rangle \right |&= 
\left | \sum_{f\in \Fsq(\mu_N)} \langle  \frac{\sqrt{3}}{2} \hat \mu_N - \tilde \mu_N, \psi \res \clos(f_N)\rangle\right |
\\
&= \left |  \sum_{f\in  \Fsq(\mu_N)} \langle   \frac{\sqrt{3}}{2} \hat \mu_N - \tilde \mu_N, (\psi - \psi_{f_N}) \res \clos(f_N)\rangle \right | \\
&\le 
 \sum_{f\in  \Fsq(\mu_N)} \left| \langle \frac{\sqrt{3}}{2} \hat \mu_N - \tilde \mu_N, (\psi - \psi_{f_N}) \res \clos(f_N)\rangle \right |\\
 & \le 2 |\tilde \mu_N|(\R^2) \
 r_\psi\Big( \frac 1 N\Big) \le C r_{\psi}\Big(\frac 1 N\Big) \to 0\,,
\end{aligned}
\end{equation}
where $r_\psi$ is the modulus of continuity of $\psi$\,. 

Now we prove that $|\frac 1 N\mu_N-\hat\mu_N|(\R^2)\to 0$ as $N\to +\infty$. 
We set $Z_N:=\{x\in\supp\mu_N\,:\,x\ \textrm {does not lie on four rhombic faces}\}$ and we claim that
\begin{equation}\label{not4}
 \sharp Z_N\leq {c}\sqrt N\,,
\end{equation}
for some positive constant $c\in\R$ (independent of $N$).

In order to prove \eqref{not4} we first notice that, in view of Lemma \ref{possibleanglesrmk}(iii),  $Z_N=Z_N^4{\cup} Z_N^{<4}$, where 
\begin{equation*}
\begin{aligned}
Z_N^4:=&\{x\in Z_N\,:\,x \textrm{ lies on four bonds}\}\,\textrm{ and}\\
Z_N^{<4}:=&\{x\in Z_N\,:\,x\textrm{ lies on $k$ bonds with }k\le 3 \}.
\end{aligned}
\end{equation*}
By \eqref{epdecompformula} and \eqref{enbou},  we have  
$$
\frac 1 4\sharp Z_N^4\le \sqrt N\Per^{\dg}(O_N(\mu_N))+\sharp\Ed^{\Wi}(\mu_N)+\sqrt N\sum_{f\in \Fbad(\mu_N)}\Per^\dg(f_N)\le
\bar C\sqrt{N},
$$
which, setting $c_1:=4\bar C$, implies that 
\begin{equation}\label{c1}
\sharp Z_N^4\le c_1\sqrt N.
\end{equation}
Moreover, again by \eqref{enbou} we get
$$ 
C\sqrt{N}\ge\E_N(\mu_N)+2\mu_N(\R^2)= \frac{1}{2} \sum_{x\in \supp\mu_N} \left( 4 -\sharp\{e\in\Ed(\mu_N)\ :\ x\in e\} \right)\geq \frac{1}{2}\sharp Z_N^{<4},
$$
which, setting $c_2:=2 C$, implies that
\begin{equation}\label{c2}
\sharp Z_N^{<4}\le c_2\sqrt N.
\end{equation}
 By summing \eqref{c1} and \eqref{c2} and setting $c:=c_1+c_2$, we get \eqref{not4}.
%Consider $Y_\ep^4:=\{ x\in Y_\ep\ :\ x\ \textrm{lies on four bonds}  \}$ and $Y_\ep^{<4}:=Y_\ep\setminus Y_\ep^4$.\\
%It is clear that $\ep\sharp Y_\ep^4\leq \Per_G\Big(\bigcup_{f\in F_\ep(\mu_\ep)}f\Big) +\sum_{f\in \Fbad_\ep(\mu_\ep)}  \Per(f) $\,, since every $x\in Y_\ep^4$ is a boundary point which is not an isolated point (and so it contributes to the global perimeter) or an interior point which belongs to a face in $\Fbad_\ep$. Then using the energy bound we get, up to a constant, $$\sharp Y_\ep^4\leq \frac{C}{\ep}\,.$$
%Moreover we can also say that 
%$$ \sharp Y_\ep^{<4}\leq \frac{C}{\ep}\,.
%$$
%Indeed using the very definition of the energy \eqref{defE0} we get
%$$ 2\mu_\ep(\R^2)+\E_\ep(\mu_\ep)= \frac{1}{2} \sum_{x\in \supp\mu_\ep} \left( 4 -\sharp\{e\in\Ed\ :\ x\in e\} \right)\geq \frac{1}{2}\sharp Y_\ep^{<4}.
%$$
%This is enough to justify our claim. 
%As a consequence, 
%by using the energy bound, 
By \eqref{not4}, 
we get
\begin{equation}\label{vermenotil}
\begin{aligned}
\left|\frac 1 N\mu_N-\hat\mu_N\right|(\R^2)\leq \frac 1 N\sharp Z_N\le \frac {c}{\sqrt N}\to 0\,.
\end{aligned}
\end{equation}
By combining \eqref{cappmenotil} and \eqref{vermenotil} we obtain \eqref{fina}.
\end{proof}

%%%%%%%%%%%%%%%%%%%%%%%%%%%%%%%
%%%%%%%%%%%%%%%%%%%%%%%%%%%%%%%
%%%%%%%%%%%%%%%%%%%%%%%%%%%%%%%
%%%%%%%%%%%%%%%%%%%%%%%%%%%%%%%
%%%%%%%%%%%%%%%%%%%%%%%%%%%%%%%
%%%%%%%%%%%%%%%%%%%%%%%%%%%%%%%
%%%%%%%%%%%%%%%%%%%%%%%%%%%%%%%
\appendix

\section{Auxiliary lemmas} \label{aux2}
This appendix is devoted to the results used in the proof of Theorem \ref{fac}.

%We start by showing that every minimizer is connected.
 %%%%%%%%%%%%%%%%%%%%%%%%%%%%%%%%
%%%%%%%%%%%%%%%%%%%%%%%%%%%%%%%%
%%%%%%%%%%%%%%%%%%%%%%%%%%%%%%%%
\begin{lemma}\label{lemma0}
Let $\bar N\in\N$\,. If \eqref{newii} holds true for every $N\le \bar N$\,, then every minimizer $(X,V)$ of $\E$ in $\AC_{\bar N+1}$ is such that $\G(X,V)$ is connected.
\end{lemma}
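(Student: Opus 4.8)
The plan is to argue by contradiction, exploiting the additivity of $\E$ over the connected components of the bond graph together with the induction hypothesis \eqref{newii} and a strict superadditivity of the map $N\mapsto\E(Y_N,W_N)+2N$.

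Suppose $(X,V)$ is a minimizer of $\E$ in $\AC_{\bar N+1}$ and, towards a contradiction, that $\G(X,V)$ is \emph{not} connected; let $\G_{\X_1},\dots,\G_{\X_K}$, $K\ge 2$, be its connected components and set $N_k:=\sharp\X_k$, so that $N_1+\dots+N_K=\bar N+1$ and $1\le N_k\le\bar N$ for every $k$. First I would note that $\E$ splits over the components: by Definition \ref{conncomp} no edge of $\G(X,V)$ joins distinct components, and by \eqref{edgegamma} membership of a pair $\{x,y\}$ in $\Ed(X,V)$ depends only on $x$, $y$, $v(x)$ and $v(y)$; hence $\Ed(X,V)$ is the disjoint union of the sets $\Ed(X_k,V_k)$, where $(X_k,V_k)\in\AC_{N_k}$ is the restriction of $(X,V)$ to $\X_k$, and therefore $\E(X,V)=\sum_{k=1}^K\E(X_k,V_k)$. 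Applying \eqref{newii} with $N=N_k\le\bar N$ to each summand gives
\begin{equation*}
\E(X,V)\ \ge\ \sum_{k=1}^K\E(Y_{N_k},W_{N_k}).
\end{equation*}

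Next I would set $\Phi(N):=\E(Y_N,W_N)+2N$ and prove that $\Phi$ is strictly superadditive. By Proposition \ref{decompprop}, by \eqref{forserichiam}, and since $\G(Y_N,W_N)$ is connected, one has $\Phi(N)=\frac12\Per_\gr(Y_N,W_N)+2$; inspecting the three cases in \eqref{percan} then yields the closed form $\Phi(N)=\lceil 2\sqrt N\rceil$ for every $N\in\N$. For $a,b\in\N$, write $A:=\Phi(a)$, $B:=\Phi(b)$; since $a,b\ge 1$ we have $A,B\ge 2$ and $4a\le A^2$, $4b\le B^2$, hence $4(a+b)\le A^2+B^2\le(A+B-1)^2$, the last inequality being equivalent to $2(A-1)(B-1)\ge 1$. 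Therefore $\Phi(a+b)=\lceil 2\sqrt{a+b}\rceil\le A+B-1=\Phi(a)+\Phi(b)-1$, and iterating,
\begin{equation*}
\Phi\Big(\sum_{k=1}^K N_k\Big)\ \le\ \sum_{k=1}^K\Phi(N_k)-(K-1).
\end{equation*}

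Finally, since $(Y_{\bar N+1},W_{\bar N+1})\in\AC_{\bar N+1}$, minimality of $(X,V)$ gives $\E(X,V)\le\E(Y_{\bar N+1},W_{\bar N+1})$. Combining this with the two displays above and using $\E(Y_N,W_N)=\Phi(N)-2N$ together with $\sum_k N_k=\bar N+1$, we obtain
\begin{equation*}
\E(Y_{\bar N+1},W_{\bar N+1})\ \ge\ \E(X,V)\ \ge\ \sum_{k=1}^K\Phi(N_k)-2(\bar N+1)\ \ge\ \Phi(\bar N+1)-2(\bar N+1)+(K-1)\ =\ \E(Y_{\bar N+1},W_{\bar N+1})+(K-1),
\end{equation*}
so $K-1\le 0$, contradicting $K\ge 2$. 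I expect the strict superadditivity of $\Phi$ to be the only non-routine point; everything else is bookkeeping. Should one prefer to avoid reading off the identity $\Phi(N)=\lceil 2\sqrt N\rceil$, the inequality $\Phi(N_1)+\Phi(N_2)>\Phi(N_1+N_2)$ can be established directly by a short case analysis on \eqref{percan}.
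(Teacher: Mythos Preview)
Your proof is correct. The additivity of $\E$ over connected components is immediate, and the key computation $\Phi(N)=\lceil 2\sqrt N\rceil$ follows cleanly from \eqref{percan}; the strict superadditivity $\Phi(a+b)\le\Phi(a)+\Phi(b)-1$ via $2(A-1)(B-1)\ge 1$ is airtight. There is no circularity: Proposition~\ref{decompprop}, \eqref{percan} and \eqref{forserichiam} concern only the explicit canonical configurations and the general edge count, not the minimality statement \eqref{newii}.

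The paper takes a different, more constructive route for the last step. Instead of proving an inequality between $\E(Y_{\bar N+1},W_{\bar N+1})$ and $\sum_k\E(Y_{N_k},W_{N_k})$, it physically \emph{glues} the canonical blocks: since each $(Y_{N_k},W_{N_k})$ lives in $\T^1$ with constant orientation $(0;1)$, one can translate them inside $\T^1$ so that consecutive blocks touch and create at least one new bond, producing a connected competitor in $\AC_{\bar N+1}$ with energy strictly below $\sum_k\E(Y_{N_k},W_{N_k})$; this already contradicts minimality of $(X,V)$. That argument never computes $\Per_\gr(Y_N,W_N)$ and would survive even if the exact perimeter formula were unavailable. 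Your approach, on the other hand, is quantitative: it yields the sharp gap $K-1$ in one line, and packages the whole thing into the clean identity $\Phi(N)=\lceil 2\sqrt N\rceil$, which is of independent interest and could be reused elsewhere in the paper.
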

%%%%%%%%%%%%%%%%%%%%%%%%%%%%%%%%
%%%%%%%%%%%%%%%%%%%%%%%%%%%%%%%%
%%%%%%%%%%%%%%%%%%%%%%%%%%%%%%%%
\begin{proof}
Let  $(X,V)$ be a minimizer of $\F$ in $\AC_{\bar N+1}$\,.
We argue by contradiction:
Let $\G_{\X_1},\ldots,\G_{\X_K}$ with $K\ge 2$ be the connected components of $\G(X,V)$ according to Definition \ref{conncomp} and set $\G_{\X_k}=\G(X_k, V_k)$ and $N_k:=\sharp\X_k$ for every $k=1,\ldots,K$\,. Then, $\bar N+1=\sum_{k=1}^{K} N_k$ and  $1\le N_k\le \bar N$ for every $k=1,\ldots,K$\,.

By the very definition of $\E$ in \eqref{defE0}
\begin{equation}\label{Edeco}
\E(X,V)=-\sum_{k=1}^K\sharp \Ed(X_k,V_k)=\sum_{k=1}^K\E(X_k,V_k)\,.
\end{equation}
%Note that $(X_k,V_k)$ must be a minimizer of $\E$ in $\AC_{N_k}$ for every $k=1,\ldots,K$\,. Indeed, if there exists $\bar k\in\{1,\ldots,K\}$ such that $(X_{\bar k},V_{\bar k})$ is not a minimizer of $\E$ in  $\AC_{N_{\bar k}}$\,, we can replace $(X_{\bar k},V_{\bar k})$ with a configuration  $(\bar X_{\bar k},\bar V_{\bar k})$ which minimizes $\E$ in $\AC_{N_{\bar k}}$ and such that the convex envelope of $\bar \X_{\bar k}$ has distance larger than $1$ from the convex envelope of any other connected component. Setting $\bar\X:=(\X\setminus \X_{\bar k})\cup\bar\X_{\bar k}$ and denoting by $(\bar X,\bar V)$ the corresponding configuration,
%by \eqref{Edeco} we get
%$$
%\E(X,V)>\sum_{k\neq \bar k} \E(X_k,V_k)+ \E(\bar X_{\bar k},\bar V_{\bar k})=\E(\bar X,\bar V),
%$$
%thus contradicting the minimality of $(X,V)$.

By \eqref{Edeco} and
by 
%the minimality of $(X_k,V_k)$ for every $k$ together with 
the assumption \eqref{newii} for $N_k\le \bar N$, we
deduce
\begin{equation*}%\label{portass}
\E(X,V)=\sum_{k=1}^{K}\E(X_k,V_k)
\ge \sum_{k=1}^{K}\E(Y_{N_k},W_{N_k}).
\end{equation*}
Now we will glue all the configurations $(Y_{N_k},W_{N_k})$ creating new bonds and hence decreasing the energy.
To this purpose, for every $k=1,\ldots,K$ and for every $\tau\in\R^2$ let $\Y_{N_k}(\tau):=\Y_{N_k}+\tau$ and let $(Y_{N_k}(\tau),W_{N_k})$ be the corresponding configuration. By construction, $\E(Y_{N_k},W_{N_k})=\E(Y_{N_k}(\tau),W_{N_k})$.
Since $W_{N_k}=\{(0;1)\}^{N_k}$ for every $k=1,\ldots,K$, it is easy to see that there exist translations $\tau_1,\ldots,\tau_k\in\R^2$ such that the configuration $(\bar Y_{\bar N+1},\bar W_{\bar N+1})$ with $\bar  W_{\bar N+1}=\{(0;1)\}^{\bar N+1}$ and $\bar\Y_{\bar N+1}:=\bigcup_{k=1}^K\Y_{N_k}(\tau_k)$ is in $\AC_{\bar N+1}$ and satisfies that $\G(\bar Y_{\bar N+1},\bar W_{\bar N+1})$ is connected (so that new bonds have been formed). 
It follows that
$$
\E(\bar Y_{\bar N+1},\bar W_{\bar N+1})<\sum_{k=1}^{K}\E(Y_{N_k},W_{N_k})\le\E(X,V),
$$
thus contradicting the minimality of $(X,V)$.
\end{proof}
%%%%%%%%%%%%%%%%%%%%%%%%%%%%%%%%%
%%%%%%%%%%%%%%%%%%%%%%%%%%%%%%%%%
%%%%%%%%%%%%%%%%%%%%%%%%%%%%%%%%%

Now we show that \eqref{newii} is satisfied for small number of particles.
%%%%%%%%%%%%%%%%%%%%%%%%%%%%%%%%%
%%%%%%%%%%%%%%%%%%%%%%%%%%%%%%%%%
%%%%%%%%%%%%%%%%%%%%%%%%%%%%%%%%%
\begin{lemma}\label{inspection}
For every $N\in\{1,\ldots,7\}$\,, \eqref{newii} holds true.
\end{lemma}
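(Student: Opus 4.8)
The strategy is a direct, finite case analysis on $N \in \{1,\dots,7\}$, showing that the canonical configuration $(Y_N,W_N)$ attains the minimum of $\E$ in $\AC_N$. Since $\E(X,V) = -\sharp\Ed(X,V)$, this amounts to producing, for each $N$, a sharp upper bound on $\sharp\Ed(X,V)$ valid for all $(X,V)\in\AC_N$, and checking that $(Y_N,W_N)$ meets it. The key tool is the decomposition \eqref{decompformula}, $\E(X,V) = -2\sharp\X + \F(X,V)$ with $\F(X,V) = \tfrac12\defe_\gr(X,V) + \tfrac12\Per_\gr(X,V) + 2\chi(X,V)$, so it suffices to bound $\F$ from below by $\F(Y_N,W_N)$ for each such $N$.

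First I would record the base data: $N \le 3$ forces (by Lemma \ref{possibleanglesrmk}(iii) and Lemma \ref{notri}, which rules out triangular faces) a graph with no face, so $\Fa(X,V) = \emptyset$, $\chi(X,V) \ge 1$, and $\sharp\Ed(X,V) \le N-1$; since the canonical configurations $\Y_1,\Y_2,\Y_3$ realize a path (respectively a single bond, a single bond-pair, and a zig-zag of two bonds along $\T^1$) this gives equality. For $N = 4$, by Lemma \ref{possibleanglesrmk}(iii) each point has at most $4$ neighbors but in fact $4$ neighbors is impossible with only $4$ points (the would-be common neighbor would need $3$ partners at mutual distance $1$, contradicting $\AC$ together with the angle constraint $\theta \in [-\pi/6,\pi/6]\cup[5\pi/6,7\pi/6]$); one checks $\sharp\Ed \le 5$, attained by the unit rhombus $\Y_4 = R_1 \cap \T^1$ with $\defe_\gr = 0$, $\Per_\gr = 4$, $\chi = 1$, i.e. $\F(Y_4,W_4) = 4$. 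For $N = 5,6,7$ I would argue via \eqref{decompformula}: if $\G(X,V)$ is disconnected one glues components as in Lemma \ref{lemma0} (whose proof only uses \eqref{newii} for strictly smaller $N$, already established in the same induction) to strictly lower the energy, so one may assume $\G(X,V)$ connected and $\chi(X,V) = 1$ by Lemma \ref{lm:bouchar}; then $\F(X,V) = \tfrac12\defe_\gr(X,V) + \tfrac12\Per_\gr(X,V) + 2 \ge \tfrac12\Per_\gr(X,V) + 2$, and since $\defe_\gr \ge 0$ the question reduces to minimizing $\Per_\gr$.

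The core of the argument is therefore: for $N \in \{5,6,7\}$ and $(X,V)\in\AC_N$ connected, $\Per_\gr(X,V) \ge \Per_\gr(Y_N,W_N)$, where by \eqref{percan} one has $\Per_\gr(Y_5,W_5) = 6$, $\Per_\gr(Y_6,W_6) = 6$, $\Per_\gr(Y_7,W_7) = 8$ (here $5 = 2^2+1$, $6 = 2^2+2$, $7 = 2^2+3$, so $l = 1$ and $\eta = 1,2,3$ respectively). This I would establish by a discrete isoperimetric-type inequality: by Lemma \ref{onlyrhombic} the only faces of graph-perimeter $4$ are unit rhombs, every other face has $\Per_\gr \ge 5$, and a connected rhombic-patch graph on $N$ points with $F$ faces satisfies $\sharp\Ed = N - 1 + F$ (Euler), while $2\sharp\Ed = 4F + \Per_\gr(X,V) + (\text{corrections from wire edges and non-rhombic faces, all nonnegative})$; eliminating $F$ gives $\Per_\gr(X,V) \ge 2N - 2 - 2F$, and bounding $F$ above — at most one face can fit when $N=5$, at most two when $N = 6$, at most three when $N=7$, again by the neighbor-count and angle constraints of Lemma \ref{possibleanglesrmk} — yields the stated values. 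The main obstacle is precisely this last enumeration: ruling out configurations with "too many" faces (or with a non-rhombic face cheaper than the canonical shape) for $N = 6,7$ requires a careful but finite geometric check, since one must verify that the angle constraints of Lemma \ref{possibleanglesrmk}(i)--(ii) and the hard-core constraint genuinely forbid, e.g., a six-point configuration bounding two rhombs with perimeter less than $6$. I would organize this by fixing a boundary vertex of minimal "interior angle", using that the only admissible interior angles are $\pi/3$ and $\ge 2\pi/3$, and showing the polygon $O^{\frac{\sqrt3}{2}}(X,V)$ cannot have fewer boundary edges than the canonical patch; since $N \le 7$ the number of shapes to inspect is small enough to dispatch directly, possibly with the aid of Figure \ref{fig:YN}.
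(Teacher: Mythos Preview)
Your plan contains a genuine error in the reduction step for $N\in\{5,6,7\}$. After assuming $\G(X,V)$ is connected and writing $\F(X,V)=\tfrac12\defe_\gr(X,V)+\tfrac12\Per_\gr(X,V)+2$, you drop $\defe_\gr$ and assert that it suffices to prove $\Per_\gr(X,V)\ge \Per_\gr(Y_N,W_N)$. This is false for $N=7$: the paper itself exhibits a minimizer consisting of a unit pentagon and a unit rhombus sharing one edge, for which $\Per_\gr=7<8=\Per_\gr(Y_7,W_7)$, the shortfall being exactly compensated by $\defe_\gr=1$. What must be bounded below is $\Per_\gr+\defe_\gr$, not $\Per_\gr$ alone.

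Relatedly, your ``discrete isoperimetric'' identity has the inequality reversed. Combining Euler's formula $\sharp\Ed=N-1+F$ with the exact relation $2\sharp\Ed=\sum_{f}\Per_\gr(f)+\Per_\gr(X,V)=4F+\defe_\gr(X,V)+\Per_\gr(X,V)$ gives the \emph{identity} $\Per_\gr(X,V)+\defe_\gr(X,V)=2N-2-2F$, hence $\Per_\gr(X,V)\le 2N-2-2F$, the opposite of what you wrote. Equivalently, $\F(X,V)=N+1-F$, so the problem reduces to bounding the number $F$ of faces from above---and for $N=7$ you need $F\le 2$, not $F\le 3$; with your stated bound $F\le 3$ you would get $\F\ge 5<6=\F(Y_7,W_7)$, which proves nothing. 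The paper handles this by a direct geometric enumeration (pentagon plus rhombus, or two rhombi in two possible arrangements), implicitly using that three faces on seven points is incompatible with Lemmas~\ref{possibleanglesrmk} and~\ref{notri}. A minor slip: for $N=4$ the rhombus has $\sharp\Ed=4$, not $5$.
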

%%%%%%%%%%%%%%%%%%%%%%%%%%%%%%%%%
%%%%%%%%%%%%%%%%%%%%%%%%%%%%%%%%%
%%%%%%%%%%%%%%%%%%%%%%%%%%%%%%%%%
\begin{proof}
The claim is trivially satisfied for $N=1$ and $N=2$\,.
\vskip5pt
Note that for $N=3$\,, a minimizer $(X,V)$ of $\E$ in $\AC_N$ cannot have three bonds, since such three bonds should form an equilateral triangle with unitary side-length, which contradicts Lemma \ref{notri}. Therefore, we get
$$
-2\le \min_{(X,V)\in\AC_3}\E(X,V)\le \E(Y_3,W_3)=-2\,.
$$
\vskip5pt
Now we focus on the case $N=4$. Let $(X,V)$ be a minimizer of $\E$ in $\AC_4$. 
By Lemma \ref{lemma0} below $\G(X,V)$ is connected. Let $P$ be the maximum among the  perimeters $\Per_{\gr}(f)$ of the faces $f$ of $(X,V)$. If $P=0$, then $(X,V)$ has no faces and hence by Euler formula $\sharp\X-\sharp\Ed(X,V)=1$ so that
\begin{equation}\label{solofili}
\E(X,V)=-(\sharp\X-1)=-3> -4=\E(Y_4,W_4),
\end{equation}
thus contradicting the minimality of $(X,V)$.
It follows that $P\ge 3$ and in fact, by Lemma \ref{notri}, $P\ge 4$. Therefore, $P=4$ and by Lemma \ref{onlyrhombic}, $\X$ is the set of the vertices of a rhombus with unitary side-length and $v(x)$ is the same for every $x\in\X$. It follows that $(X,V)$ coincides - up to a rigid motion for $\X$ and up to flipping the orientation field $V$ - with $(Y_4,W_4)$.  

%Moreover, using again Lemma \ref{notri} we have that every minimizer can have at most four bonds. Therefore we get
% $$
%\min_{(X,V)\in\AC_4}\E(X,V)\ge -4=\E(Y_4,W_4)\,.
%$$
\vskip5pt
As for $N=5$,  let $(X,V)$ be a minimizer of $\E$ in $\AC_5$. 
By Lemma \ref{lemma0} $\G(X,V)$ is connected and 
by Lemmas \ref{notri} and \ref{onlyrhombic}, $\G(X,V)$ can have at most one face. Moreover, if $(X,V)$ has no faces, i.e., only wires are present, by arguing as in \eqref{solofili} we get a contradiction. 
As a consequence, $(X,V)$ has exactly one face $f$ and either $\Per(f)=5$ and there are no further edges or $\Per(f)=4$ and there is a wire edge.
In both cases we have
$$
\E(X,V)=-5=\E(Y_5,W_5)\,.
$$
\vskip5pt
Now we consider the case $N=6$. Let $(X,V)$ be a minimizer of $\E$ in $\AC_6$. By Lemma \ref{lemma0} $\G(X,V)$ is connected.
Since $(Y_6,W_6)$ is a competitor for the minimum problem in \eqref{newii}, we have
$$ 
\E(X,V)\le \E(Y_6,W_6)=-7\,.
$$
As in the previous cases, we can immediately exclude that $(X,V)$ has no faces. 
Moreover, $(X,V)$ cannot have only one face; indeed, if there is a unique face then by Euler formula
\begin{equation}\label{anchedopo}
\E(X,V)=-\sharp \X=-6>-7,
\end{equation}
thus contradicting the minimality of $(X,V)$.
%Moreover, $(X,V)$ cannot have a face $f$ with $\Per(f)=6$, since in this case we would have $\E(X,V)=-6$ thus contradicting the minimality of $(X,V)$. For the same reason $(X,V)$ cannot have a face $f$ with $\Per(f)=5$; indeed if this occurs, then, by Lemma \ref{notri} we have that $\Ed(X,V)=\Ed^{\partial}(f)\cup \{e^{\Wi}\}$ where $e^{\Wi}$ is a wire edge. Therefore, also in this case we would have $\E(X,V)=-6$ contradicting again the minimality of $(X,V)$. As a consequence, $(X,V)$ has for sure a face with $\Per=4$. Note that if the remaining bonds of $(X,V)$ are wires then we have $\E(X,V)=-6$ which yields again a contradiction. 
In view of Lemma \ref{notri} we have that the only possibility is that $(X,V)$ has exactly two faces $f_1$ and $f_2$ with $\Per(f_j)=4$ ($j=1,2$); therefore, $(X,V)$ has no wires and, by Lemma \ref{onlyrhombic}, $f_1$ and $f_2$ are rhombuses with unitary side-length. Moreover, since $\G(X,V)$ is connected we have that $\partial f_1$ and $\partial f_2$ share at least one vertex
%, and hence, by  Lemma \ref{onlyrhombic}, we have that $v(f_1)=v(f_2)$.
and actually, since $N=6$,  $f_1$ and $f_2$ share one bond. It follows that $(X,V)$ coincides - up to a rigid motion for $\X$ and up to flipping the orientation field $V$ - with $(Y_6,W_6)$.

\vskip5pt
We conclude by proving \eqref{newii} for $N=7$. 
Let $(X,V)$ be a minimizer of $\E$ in $\AC_7$. By Lemma \ref{lemma0} $\G(X,V)$ is connected. Since $(Y_7,W_7)$ is a competitor for the minimum problem in \eqref{newii}, we have
$$ 
\E(X,V)\le \E(Y_7,W_7)=-8\,.
$$
%As in the previous cases, we can immediately exclude that $(X,V)$ has no faces and 
By arguing as in \eqref{solofili} and \eqref{anchedopo} we deduce that $(X,V)$ has at least two faces.
%by arguing as in the proof of the case $N=6$ one such configurations have energy strictly larger than $-8$ 
%By arguing as in the case $N=6$, one can easily show that $(X,V)$ cannot have a face $f$ with $\Per(f)=7$ or with $\Per(f)=6$. 
%Let $f_1$ be the face with maximal perimeter. If $\Per(f_1)=5$, then 

Notice that, by the hard sphere condition, it easily follows that if a face $f\in\Fa(X,V)$ has $\Per(f)\le 5$, then $f$ is convex and $\Per_\gr(f)=\Per(f)$.
If a face $f_1$ with maximal perimeter has $\Per(f_1)=5$ then, by Lemma \ref{notri}, the only possibility is that $(X,V)$ has no wires and one further face $f_2$ such that $\Per(f_2)=4$ and $f_1$ and $f_2$ share one edge. In such a case, we have that $\E(X,V)=-8=\E(Y_7,W_7)$. Now,  if a face $f_1$ with maximal perimeter has $\Per(f_1)=4$ then, by arguing as above we have that there exists  one further face $f_2$ such that $\Per(f_2)=4$. Since the graph is connected, we can have either that $f_1$ and $f_2$ share only one vertex and that there are no wires or that $f_1$ and $f_2$ share one edge and that there is a wire. In both cases we deduce that $\E(X,V)=-8= \E(Y_7,W_7)$ which implies the claim also for $N=7$ and concludes the proof of the lemma.
\end{proof}
%%%%%%%%%%%%%%%%%%%%%%%%%%%%%%
%%%%%%%%%%%%%%%%%%%%%%%%%%%%%%
%%%%%%%%%%%%%%%%%%%%%%%%%%%%%%

By the very definition of $(Y_N,W_N)$ we have the following result.
%%%%%%%%%%%%%%%%%%%%%%%%%%%%%%
%%%%%%%%%%%%%%%%%%%%%%%%%%%%%%
%%%%%%%%%%%%%%%%%%%%%%%%%%%%%%
\begin{lemma}\label{shellconstr}
Let $N,N'\in\N\cup\{0\}$ with $N>N'$\,. Let $(k;\delta)\in(\N\cup\{0\})\times (\N\cup\{0\})$ be the pair provided by \eqref{Ndecom} for $P=\Per_{\gr}(Y_{N'},W_{N'})$.
%be the unique pair  with $\delta<\Per_{\gr}(Y_{N'},W_{N'})+8(k+1)$ 
%such that 
% $N=N'+k(\Per_{\gr}(Y_{N'},W_{N'})+4 (k+1))+\delta$\,. 
 Then 
\begin{equation}\label{form:shell}
\Per_{\gr}(Y_N,W_N)=\Per_{\gr}(Y_{N'},W_{N'})+8k+r_{N',N}\,,
\end{equation}
where $r_{N',N}\in \{0,2,4,6,8\}$ and $r_{N',N}=0$ if $\delta=0$.
Moreover, if $k\ge 1$\,, then $r_{N',N}\le 2\lceil\frac{\delta}{2}\rceil$.
\end{lemma}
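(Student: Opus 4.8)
The plan is to recast the statement as a question about the explicit step function $p(N):=\Per_{\gr}(Y_N,W_N)$, whose values are given by \eqref{percan}, and then to exploit the arithmetic of the decomposition \eqref{Ndecom}. I would first record two elementary consequences of \eqref{percan}. Firstly, $p$ is non-decreasing and takes only even values (each of $4l$, $4l+2$, $4l+4$ is even); hence, once \eqref{form:shell} is granted, the remainder $r_{N',N}=p(N)-p(N')-8k$ is automatically a non-negative even integer, and it vanishes when $\delta=0$ since then $N$ equals the intermediate index $\widetilde N$ introduced below. Secondly, writing $N=(l+1)^2+\eta$ in canonical form, \eqref{percan} shows that $p$ increases, by exactly $2$, precisely at the indices $N=m^2+1$ and $N=m^2+m+1$ with $m\in\N$, and is constant between consecutive such indices. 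Locating these ``jump positions'' is the only combinatorial input needed.

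The heart of the argument is a complete-shell identity: for every $j\ge 0$, setting $N_j:=N'+\sum_{i=1}^{j}(P+8i)=N'+j(P+4(j+1))$, one has $p(N_j)=P+8j$. I would prove this by induction on $j$. Since $N_j=N_{j-1}+\bigl(p(N_{j-1})+8\bigr)$, the inductive step amounts to the claim: if $Q:=p(M)$ and $l$ is the parameter attached to $M$ by its canonical form (so $Q\in\{4l,4l+2,4l+4\}$), then $p\bigl(M+(Q+8)\bigr)=Q+8$. A direct check using \eqref{percan} settles this in the three cases separately, each reducing to the identity $(l+3)^2-(l+1)^2=4l+8$ together with the observation that translating $\eta$ by $2$ or by $4$ keeps it inside the plateau of \eqref{percan} producing perimeter $Q+8$ (the degenerate cases $M\le 1$ being immediate). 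Applying the identity with $j=k$ produces the distinguished index $\widetilde N:=N_k$, for which $p(\widetilde N)=P+8k$ and $N=\widetilde N+\delta$ with $0\le\delta<P+8(k+1)=p(\widetilde N)+8$.

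It then remains to control $r_{N',N}=p(\widetilde N+\delta)-p(\widetilde N)$. Applying the complete-shell identity with $j=k+1$ and using monotonicity of $p$ gives $p(\widetilde N+\delta)\le p\bigl(\widetilde N+p(\widetilde N)+8\bigr)=P+8(k+1)$, so $r_{N',N}\le 8$; combined with the first paragraph this yields $r_{N',N}\in\{0,2,4,6,8\}$, with $r_{N',N}=0$ when $\delta=0$. For the refined bound when $k\ge 1$, note that $p(\widetilde N)=P+8k\ge 8$, while by \eqref{percan} the smallest index with $p\ge 8$ is $7$; hence $\widetilde N\ge 7$, so every jump position of $p$ strictly exceeding $\widetilde N$ equals $m^2+1$ or $m^2+m+1$ for some $m\ge 3$, and any two consecutive such positions differ by at least $3$. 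Letting $t$ denote the number of jumps of $p$ lying in $(\widetilde N,\widetilde N+\delta]$, these jumps sit at indices in $[\widetilde N+1,\widetilde N+\delta]$, any two consecutive ones at distance at least $2$; hence the largest minus the smallest is at least $2(t-1)$ and at most $\delta-1$, so $2t\le\delta+1$, and since $r_{N',N}=2t$ is even this forces $r_{N',N}\le 2\lceil\delta/2\rceil$.

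The only genuinely delicate point---everything else being bookkeeping on the step function $p$---is the complete-shell identity: one must verify, from the construction of $(Y_N,W_N)$ through the rhombi $R_l$ and the boundary arcs $\Gamma_l$, that appending exactly $P+8$ points to a canonical configuration of perimeter $P$ again yields a canonical configuration, now of perimeter $P+8$, and that over $k$ iterations the point counts telescope exactly to the quantity $k(P+4(k+1))$ appearing in \eqref{Ndecom}. Carrying this out forces one to keep track of which of the three plateau regimes of \eqref{percan} the current configuration occupies, but it involves nothing beyond elementary counting.
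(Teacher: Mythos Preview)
Your proposal is correct and follows essentially the same route as the paper's proof: both establish the ``complete-shell'' identity $p(N_j)=P+8j$ by induction on $j$ (reducing to the $j=1$ step and checking that against the three regimes of \eqref{percan}), then sandwich $N$ between the $k$-th and $(k{+}1)$-st shell indices to get $r_{N',N}\in\{0,2,4,6,8\}$, and finally, for $k\ge 1$, use that $\widetilde N$ is large enough that consecutive jumps of $p$ beyond $\widetilde N$ are at least $3$ apart to deduce $r_{N',N}\le 2\lceil\delta/2\rceil$. The only cosmetic differences are that the paper records $\widetilde N\ge 8$ (you state the weaker but sufficient $\widetilde N\ge 7$), and that in your final count you explicitly pass through the weaker gap bound ``at least $2$'', which is all that is needed for $2t\le\delta+1$.
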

%%%%%%%%%%%%%%%%%%%%%%%%%%%%%%
%%%%%%%%%%%%%%%%%%%%%%%%%%%%%%
%%%%%%%%%%%%%%%%%%%%%%%%%%%%%%
\begin{proof}
We preliminarily note that $r_{N',N}$ is even since $\Per_{\gr}(Y_N,W_N)$ is even for every $N\in\N$\,. 
%Moreover, by \eqref{percan}, we have that $\Per_{\gr}(Y_N,W_N)$ is monotonically non-decreasing with respect to $N$.
\vskip5pt

We first prove that if $\delta=0$, then \eqref{form:shell} holds true with $r_{N',N}=0$, i.e., 
we show that if $N=N'+k(\Per_{\gr}(Y_{N'},W_{N'})+4(k+1))$ for some $k\in\N$, then
\begin{equation}\label{barrata}
\Per_{\gr}(Y_{N},W_{N})=\Per_{\gr}(Y_{N'},W_{N'})+8k.
\end{equation}
We proceed by induction on $k$ starting from $k=1$.

%Let $N'\in\N\cup\{0\}$, $k=1$ and $N=N'+\Per_{\gr}(Y_{N'},W_{N'})+8$;
% we have to show that
%\begin{equation}\label{barrata}
%\Per_{\gr}(Y_{N},W_{N})=\Per_{\gr}(Y_{N'},W_{N'})+8.
%\end{equation} 
Let $k=1$.
If $N'=0$ then $N=8$ and, by \eqref{percan}, we get $
\Per_{\gr}(Y_{N},W_{N})=8=\Per_{\gr}(Y_{ N'},W_{ N'})+8$.
If $N'\ge 1$, then $ N'=(l+1)^2+\eta$ for some $l,\eta\in\N\cup\{0\}$ with 
%$(l;\eta)\neq (0,0)$ and 
$0\le \eta\le 2l+2$.
We prove \eqref{barrata} (with $k=1$) only for $\eta=0$, the other cases being fully analogous.
Since $\eta=0$, we have $N'=(l+1)^2$ for some $l\in\N\cup\{0\}$; then, by \eqref{percan}, $\Per_{\gr}(Y_{N'},W_{N'})=4l$ so that
$$
N=N'+\Per_{\gr}(Y_{N'},W_{N'})+8=(l+1)^2+4l+8=(l+3)^2\,,
$$
which, in view of \eqref{percan} yields 
$$
\Per_{\gr}(Y_{N},W_{N})=4(l+2)=4l+8=\Per_{\gr}(Y_{N'},W_{N'})+8,
$$
i.e., \eqref{barrata}. 

%For $k=1$, we are dealing with two integers $\bar N'\in\N\cup\{0\}$ and $\bar N\in\N$ such that $\bar N=\bar N'+\Per_{\gr}(Y_{\bar N'},W_{\bar N'})+8$ and we show that
%\begin{equation}\label{barrata}
%\Per_{\gr}(Y_{\bar N},W_{\bar N})=\Per_{\gr}(Y_{\bar N'},W_{\bar N'})+8.
%\end{equation} 
%If $\bar N'=0$ then $\bar N=8$ and, by \eqref{percan}, we get $
%\Per_{\gr}(Y_{\bar N},W_{\bar N})=8=\Per_{\gr}(Y_{\bar N'},W_{\bar N'})+8$.
%For $\bar N'\ge 1$, we have $\bar N'=(l+1)^2+\eta$ for some $l,\eta\in\N\cup\{0\}$ with $(l;\eta)\neq (0,0)$ and $0\le \eta\le 2l+2$.
%We prove the claim only for $\eta=0$, the other cases being fully analogous.
%Since $\eta=0$, we have $\bar N'=(l+1)^2$ for some $l\in\N\cup\{0\}$; then, by \eqref{percan}, $\Per_{\gr}(Y_{\bar N'},W_{\bar N'})=4l$ so that
%$$
%\bar N=\bar N'+\Per_{\gr}(Y_{\bar N'},W_{\bar N'})+8=(l+1)^2+4l+8=(l+3)^2\,,
%$$
%which, in view of \eqref{percan} yields 
%$$
%\Per_{\gr}(Y_{\bar N},W_{\bar N})=4(l+2)=4l+8=\Per_{\gr}(Y_{\bar N'},W_{\bar N'})+8,
%$$
%i.e., \eqref{barrata}. 

\vskip5pt

Assume now that 
\eqref{barrata} holds true for every $k= 1,\ldots, \hat k$, 
and let us show that it is satisfied also for $k= \hat k+1$.
Let $N'\in\N\cup\{0\}$, $k=\hat k+1$ and, in turn, $N=N'+(\hat k+1)(\Per_{\gr}(Y_{N'},W_{N'})+4(\hat k+2))$. We set $\bar N':=N'+\hat k(\Per_{\gr}(Y_{N'},W_{N'})+4(\hat k+1))$; applying the inductive step to $N'$, $k=\hat k$ and $N=\bar N'$ we have 
\begin{equation}\label{sper}
\Per_{\gr}(Y_{\bar N'},W_{\bar N'})=\Per_{\gr}(Y_{N'},W_{N'})+8\hat k.
\end{equation}
It follows that
\begin{align*}
N=&N'+(\hat k+1)(\Per_{\gr}(Y_{N'},W_{N'})+4(\hat k+2)) \\
=&N'+\hat k(\Per_{\gr}(Y_{N'},W_{N'})+4(\hat k+1))+4\hat k+\Per_{\gr}(Y_{N'},W_{N'})+4(\hat k+2)\\
=&\bar N'+\Per_{\gr}(Y_{N'},W_{N'})+8\hat k+8=\bar N'+\Per_{\gr}(Y_{\bar N'},W_{\bar N'})+8.
\end{align*}
By the step $k=1$ applied to $N'$ replaced by $\bar N'$ and, in turn, $N=N$ and by \eqref{sper} we get 
$$
\Per_{\gr}(Y_N,W_N)=\Per_{\gr}(Y_{\bar N'},W_{\bar N'})+8=\Per_{\gr}(Y_{N'},W_{N'})+8(\hat k+1)\,,
$$
thus implying the claim and concluding the proof of \eqref{form:shell} for $\delta=0$ with $r_{N',N}=0$.
\vskip5pt
Now we prove \eqref{form:shell} in the general case. 
Let $N=N'+k(\Per_{\gr}(Y_{N'},W_{N'})+4(k+1))+\delta$ with $\delta<\Per_{\gr}(Y_{N'},W_{N'})+8(k+1)$.
We set $N_1:=N'+k(\Per_{\gr}(Y_{N'},W_{N'})+4(k+1))$ and $N_2:=N'+(k+1)(\Per_{\gr}(Y_{N'},W_{N'})+4(k+2))$ we have that $N_1\le N< N_2$;
%which by the monotonicity of $\Per_{\gr}(Y_{N},W_{N})$ implies that $\Per_{\gr}(Y_{N_1},W_{N_1})\le \Per_{\gr}(Y_{N},W_{N})\le \Per_{\gr}(Y_{N_2},W_{N_2})$. 
by applying \eqref{barrata}  with $N$ replaced by $N_1$ and $N_2$ respectively and using the monotonicity of $\Per_{\gr}(Y_{N},W_{N})$ with respect to $N$ we obtain
\begin{align*}
\Per_{\gr}(Y_{N'},W_{N'})+8k=&\Per_{\gr}(Y_{N_1},W_{N_1})\\
\le& \Per_{\gr}(Y_{N},W_{N})\le \Per_{\gr}(Y_{N_2},W_{N_2})=\Per_{\gr}(Y_{N'},W_{N'})+8(k+1),
\end{align*}
which concludes the proof of \eqref{form:shell}.
\vskip5pt
Finally we show that $r_{N',N}\le 2\lceil\frac\delta 2\rceil$ if $k\ge 1$. 
Let $N'\in\N\cup\{0\}$ and let $k\ge 1$. 
%For $\delta=0$, we have already shown that $r_{N',N}=0$. Moreover, we need to prove the claim only for $\delta=1,\ldots,7$ since we already know that
%$$
%r_{N',N}\le 8\le 2\left\lceil {\frac \delta 2}\right\rceil\qquad\textrm{for }\delta\ge 8.
%$$
Let $N:=N'+k(\Per_{\gr}(Y_{N'},W_{N'})+4(k+1))+\delta$ 
%where $(k;\delta)$ is the unique pair provided by \eqref{Ndecom} for $P=\Per_{\gr}(Y_{N'},W_{N'})$.
and set 
\begin{equation*}%\label{tildeN}
\tilde N:=N'+k(\Per_{\gr}(Y_{N'},W_{N'})+4(k+1))
\end{equation*}
so that $N=\tilde N+\delta$.
Since we are assuming that $k\ge 1$, we have that $\tilde N\ge 8$. 
By \eqref{form:shell} and by applying \eqref{barrata} with $N'=N'$, $k=k$ and $N=\tilde N$ we have
\begin{equation}\label{percosa}
\Per_{\gr}(Y_N,W_N)=\Per_{\gr}(Y_{N'}, W_{N'})+8k+r_{N',N}=\Per_{\gr}(Y_{\tilde N}, W_{\tilde N})+r_{N',N}.
\end{equation}
For every $\delta\in\N\cup\{0\}$ we set 
\begin{equation}\label{percosa2}
p(\delta):=\Per_{\gr}(Y_{\tilde N+\delta},W_{\tilde N+\delta})-\Per_{\gr}(Y_{\tilde N},W_{\tilde N}).
\end{equation}
Note that $p(\delta)\in 2\N\cup\{0\}$ for every $\delta\in\N\cup\{0\}$ and $p(0)=0$. We denote by $J_{p}:=\{\delta_i\}\subseteq\N$ with $\delta_i<\delta_{i+1}$ the set of the ``jump discontinuities'' of $p$, i.e., such that $p(\delta_i)>p(\delta_i-1)$. Since $\tilde N\ge 8$, in view of \eqref{percan}, for every $\delta_i\in J_p$ we have
$$
p(\delta_i)-p(\delta_i-1)=2,\qquad
|\delta_{i+1}-\delta_{i}|\ge 3.
$$ 
This fact together with $p(0)=0$ yields $p(\delta)\le 2\lceil\frac\delta 2\rceil$ for every $\delta\in\N$ and hence in view of \eqref{percosa} and \eqref{percosa2} we obtain that $r_{N',N}=p(\delta)\le 2\lceil\frac\delta 2\rceil$ for every $\delta\in\N$.

%We will prove in details that the claim is satisfied for $\tilde N=8$, and then we will briefly describe how such a proof can be generalized to the case $\tilde N\ge 9$.
%
%For $\tilde N=8$, we clearly have $k=1$ and $N'=0$ and $N\in\{9,\ldots,15\}$. 
% 
% Consider first the case $N=9$, i.e., $\delta=1$. In view of \eqref{tildeN} we can apply  \eqref{barrata} with $N'=N'$, $k=1$ and $N=\tilde N$ so that,
%recalling also  \eqref{percan}, we have 
% $$
% \Per_{\gr}(Y_{N},W_{N})=8=\Per_{\gr}(Y_{\tilde N},W_{\tilde N})=\Per_{\gr}(Y_{N'},W_{N'})+8
% $$
% so that $r_{N',N}=0$ in \eqref{form:shell}.
% 
%Let now focus on the cases $N=10,11,12$ which correspond to $\delta=2,3,4$ respectively. Again,
%in view of \eqref{tildeN} we can apply  \eqref{barrata} with $N'=N'$, $k=1$ and $N=\tilde N$ so that,
%recalling also  \eqref{percan}, we have 
%\begin{equation}\label{repl}
% \Per_{\gr}(Y_{N},W_{N})=10=\Per_{\gr}(Y_{\tilde N},W_{\tilde N})+2=\Per_{\gr}(Y_{N'},W_{N'})+8+2\,,
% \end{equation}
%which implies $r_{N',N}=2\le 2\lceil\frac\delta 2\rceil$ in \eqref{form:shell}. 
%
%For $N=13,14,15$ which corresponds to $\delta=5,6,7$ respectively, \eqref{repl} holds true with $10$ replaced by $12$ and with $2$ replaced by $4$
%% by using \eqref{barrata} with $N'=N'$, $k=1$ and $N=\tilde N$ andby \eqref{percan} we get
%%  $$
%% \Per_{\gr}(Y_{N},W_{N})=12=\Per_{\gr}(Y_{\tilde N},W_{\tilde N})+4=\Per_{\gr}(Y_{N'},W_{N'})+8+4\,,
%% $$ 
%thus implying $r_{N',N}=4 \le 2\lceil\frac\delta 2\rceil$ in \eqref{form:shell}. This last observation shows that the claim holds true  for $\tilde N=0$, i.e., for $N'=0$ and $k=1$.

\end{proof}
%%%%%%%%%%%%%%%%%%%%%%%%%%%%%%
%%%%%%%%%%%%%%%%%%%%%%%%%%%%%%
%%%%%%%%%%%%%%%%%%%%%%%%%%%%%%
%%%%%%%%%%%%%%%%%%%%%%%%%%%%%%
%%%%%%%%%%%%%%%%%%%%%%%%%%%%%%
%%%%%%%%%%%%%%%%%%%%%%%%%%%%%%
%%%%%%%%%%%%%%%%%%%%%%%%%%%%%%%%%%
%%%%%%%%%%%%%%%%%%%%%%%%%%%%%%%%%%
%%%%%%%%%%%%%%%%%%%%%%%%%%%%%%%%%%
\begin{lemma} \label{puntidibordo}
Let $N\in\N$ with $N\ge 2$ and let $(X,V)\in\AC_N$ be such that $\G(X,V)$ is connected. 
Then
\begin{equation}\label{per>bor}
\Per_{\gr}(X,V)
\ge \sharp\partial X.
\end{equation}
%where the equality holds true if and only if $(X,V)\in\SC$ and $\Ed^{\Wi,\ext}(X,V)=\emptyset$\,.
%
%
%Moreover, if 
%\begin{equation}\label{per=bor+1}
%\Per_{\gr}(X,V)= \sharp\partial X+1\,,
%\end{equation}
%then $O(X,V)$ is connected and either there exists a unique point $x$ lying on four boundary edges or there exists a unique exterior wire edge.
\end{lemma}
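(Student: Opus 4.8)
My plan is to split $\partial X$ according to which edges its points lie on, and to match each piece against the corresponding term in $\Per_{\gr}(X,V)=\sharp\Ed^{\partial}(X,V)+2\sharp\Ed^{\Wi,\ext}(X,V)$ (cf.\ \eqref{graphperG}). First, since $\G(X,V)$ is connected and $N\ge 2$, no vertex of $\X$ is isolated, so by Definition \ref{borX0} we have $\partial X=A\cup B$, where $A$ is the set of points of $\X$ lying on some boundary edge and $B$ the set of points of $\X$ lying on some exterior wire edge. Hence $\sharp\partial X\le\sharp A+\sharp B$, and it suffices to prove $\sharp A\le\sharp\Ed^{\partial}(X,V)$ and $\sharp B\le 2\sharp\Ed^{\Wi,\ext}(X,V)$. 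The second bound is immediate: $B$ is exactly the vertex set of the subgraph $(B,\Ed^{\Wi,\ext}(X,V))$, which has no isolated vertices, so the handshake lemma gives $\sharp B\le\sum_{x\in B}\deg_B(x)=2\sharp\Ed^{\Wi,\ext}(X,V)$.

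For the first bound I would argue similarly, the extra input being geometric. Note that no vertex of $\X$ lies in the relative interior of an edge: if $z\in(x,y)$ with $|x-y|=1$, then $|z-x|+|z-y|=1$, contradicting $|z-x|,|z-y|\ge 1$; so $A$ is exactly the set of endpoints of edges in $\Ed^{\partial}(X,V)$. Granting the key claim that \emph{every $x\in A$ is an endpoint of at least two boundary edges}, the handshake lemma for the graph $(A,\Ed^{\partial}(X,V))$ gives $2\sharp A\le\sum_{x\in A}\deg_A(x)=2\sharp\Ed^{\partial}(X,V)$, which is the desired inequality, and the lemma follows.

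The crux, and the step I expect to be the main obstacle, is the claim about the local geometry of $O(X,V)$ near a boundary vertex. To prove it I would fix $x\in A$, so $x\in\partial O(X,V)$, and pick $r>0$ so small that $D:=D(x,r)$ contains no vertex of $\X$ other than $x$ and meets only edges incident to $x$. Then $O(X,V)\cap D$ is a finite union of closed circular sectors centred at $x$, with bounding radii along edges incident to $x$ (interior wire edges only cut slits strictly inside these sectors). Since $x\in\partial O(X,V)$, this union does not fill a punctured neighbourhood of $x$, so for small $\rho\le r$ the set $O(X,V)\cap\partial D(x,\rho)$ is a nonempty union of finitely many closed arcs, not the whole circle $\partial D(x,\rho)$, and therefore has at least two endpoints. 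Each endpoint lies on a radius bounding the sector union, hence on an edge $e$ incident to $x$; since this radius is neither interior to a sector nor shared by two sectors, a short check shows that one side of $e$ lies in $O(X,V)$ and the other in the unbounded face or a non-simply-connected face (both disjoint from $O(X,V)$), so the relative interior of $e$, and hence $e$ itself because $\partial O(X,V)$ is closed, lies on $\partial O(X,V)$; that is, $e\in\Ed^{\partial}(X,V)$. The two endpoints lie on distinct radii, hence on distinct boundary edges, proving the claim. Everything else is handshake-lemma bookkeeping.
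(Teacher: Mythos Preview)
Your argument is correct, and it proceeds along a genuinely different route from the paper's. The paper does not split $\partial X$ and does not use the local boundary claim at all; instead it forms the subgraph $\tilde\G=(\partial X,\,\Ed^{\partial}(X,V)\cup\Ed^{\Wi,\ext}(X,V))$, notes that connectedness of $\G(X,V)$ forces $\tilde\G$ to be connected, and applies the Euler formula $\sharp\partial X-\sharp\tilde\Ed+\sharp\Fa(\tilde\G)=1$. A two-case analysis on whether $\sharp\Fa(\tilde\G)=0$ (then $\Ed^{\partial}=\emptyset$ and $\sharp\Ed^{\Wi,\ext}\ge 1$) or $\sharp\Fa(\tilde\G)\ge 1$ gives \eqref{per>bor} directly.

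By contrast, your proof is purely local: the handshake bound $\sharp B\le 2\sharp\Ed^{\Wi,\ext}$ is immediate, and the real content is the geometric claim that every vertex on $\partial O(X,V)$ lies on at least two boundary edges, which you extract from the sector picture of $O(X,V)$ near $x$. This avoids Euler's formula entirely and is arguably more elementary, at the price of having to argue the local topology carefully (the ``at least two transitions on the circle'' step). The paper's approach is slicker once one accepts the connectedness of $\tilde\G$, and it gives slightly more, namely the exact identity $\sharp\partial X = 1-\sharp\Fa(\tilde\G)+\sharp\Ed^{\partial}+\sharp\Ed^{\Wi,\ext}$, whereas your method only yields the inequality. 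Either way the lemma follows.
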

%%%%%%%%%%%%%%%%%%%%%%%%%%%%%%%%%%
\begin{proof}
Set $\tilde \X:=\partial X$, $\tilde \Ed:=\Ed^{\partial}(X,V)\cup\Ed^{\Wi,\ext}(X,V)$ and $\tilde\G=(\tilde\X,\tilde\Ed)$. According to Section \ref{prelimgra},  $\Fa(\tilde\G)$ denotes the set of faces of $\tilde\G$. 

Since $\G(X,V)$ is connected, we have that $\tilde\G$ is connected too, so that by Lemma \ref{lm:bouchar} we have
$$
\sharp \tilde \X-\sharp\tilde \Ed+\sharp\Fa(\tilde\G)=1,
$$
i.e.,
\begin{align}\label{base}
\sharp \partial X=1-\sharp\Fa(\tilde\G)+\sharp\Ed^{\partial}(X,V)+\sharp \Ed^{\Wi,\ext}(X,V).
\end{align}
If $\sharp\Fa(\tilde\G)=0$, then $\sharp\Ed^{\partial}(X,V)=0$ and, since $N\ge 2$ and  $\G(X,V)$ is connected, we have $\sharp \Ed^{\Wi,\ext}(X,V)\ge 1$, so that
$$
\sharp \partial X=1+\sharp \Ed^{\Wi,\ext}(X,V)\le 2\sharp \Ed^{\Wi,\ext}(X,V)=\Per_{\gr}(X,V).
$$
If $\sharp\Fa(\tilde\G)\ge 1$, by \eqref{base} we get
$$
\sharp \partial X\le\sharp\Ed^{\partial}(X,V)+\sharp \Ed^{\Wi,\ext}(X,V)\le \sharp\Ed^{\partial}(X,V)+2\sharp \Ed^{\Wi,\ext}(X,V)=\Per_{\gr}(X,V).
$$
In both cases we have proven \eqref{per>bor}.
\end{proof}
%%%%%%%%%%%%%%%%%%%%%%%%%%%%%%%%%%
%%%%%%%%%%%%%%%%%%%%%%%%%%%%%%%%%%
%%%%%%%%%%%%%%%%%%%%%%%%%%%%%%%%%%

\begin{lemma}\label{increase}
Let $N\in\N$ with $N\ge 8$\,.
Let  $(X,V)\in\AC_N$ with $\G(X,V)$ connected.
Then
\begin{equation*}%\label{ineq}
\F(X,V)\ge \F(X',V')+4\,,
\end{equation*}
where $(X', V')$ is as in Definition \ref{borX}.
\end{lemma}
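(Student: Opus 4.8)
The plan is to translate the statement into an edge count and then into a combinatorial estimate localized on the boundary layer of $\G(X,V)$.

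First, by Proposition~\ref{decompprop} we have $\E(X,V)=-2\sharp\X+\F(X,V)$, and by definition $\E(X,V)=-\sharp\Ed(X,V)$; hence $\F(X,V)=2N-\sharp\Ed(X,V)$, and likewise $\F(X',V')=2N'-\sharp\Ed(X',V')$ where $N'=N-\sharp\partial X$ (with the convention $\F(X',V')=0$ if $\X':=\X\setminus\partial X=\emptyset$). By Definition~\ref{borX}, $\Ed(X',V')$ is exactly the set of bonds of $\G(X,V)$ whose two endpoints lie in $\X\setminus\partial X$, so $\sharp\Ed(X,V)-\sharp\Ed(X',V')$ is the number $E_\partial$ of bonds of $\G(X,V)$ incident to at least one point of $\partial X$. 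Therefore
\begin{equation*}
\F(X,V)-\F(X',V')=2\,\sharp\partial X-E_\partial ,
\end{equation*}
and the lemma becomes equivalent to the estimate $E_\partial\le 2\,\sharp\partial X-4$.

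If $\X'=\emptyset$ (so $\sharp\partial X=N$) the estimate is immediate: since $\G(X,V)$ is connected, $\F(X,V)=\frac{1}{2}\defe_{\gr}(X,V)+\frac{1}{2}\Per_{\gr}(X,V)+2\ge\frac{1}{2}\sharp\partial X+2$ by Lemma~\ref{lm:bouchar}, Lemma~\ref{puntidibordo} and $\defe_{\gr}\ge 0$, so $E_\partial=\sharp\Ed(X,V)=2N-\F(X,V)\le 2N-\frac{1}{2}N-2\le 2N-4$ because $N\ge 8$. Assume now $\X'\neq\emptyset$. I would split the bonds counted by $E_\partial$ into: (i) the boundary bonds and the exterior wire bonds (each of these has both endpoints in $\partial X$), whose count is governed by the identity $\Per_{\gr}(X,V)=\sharp\Ed^{\partial}(X,V)+2\sharp\Ed^{\Wi,\ext}(X,V)$ together with the inequality $\Per_{\gr}(X,V)\ge\sharp\partial X$ (Lemma~\ref{puntidibordo}); and (ii) the remaining ``inward'' bonds, each incident to some $x\in\partial X$ and lying, apart from its endpoints, in the interior of $O(X,V)$. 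The crucial ingredient is a local estimate at each $x\in\partial X$: the cyclic family of angular sectors cut out at $x$ by its incident bonds must contain a non-facial sector (on the exterior side of $x$), and by Lemma~\ref{possibleanglesrmk}(i)--(iii) (the convex angle between two bonds at a point is $\frac{\pi}{3}$ or lies in $[\frac{2}{3}\pi,\pi]$, two adjacent such angles are never equal, and a point carries at most four bonds), Lemma~\ref{notri} (no facial sector is a unit triangle), and Lemma~\ref{onlyrhombic} (rigidity of the orientation along rhombic faces), this constrains the number of inward bonds at $x$ relative to the number of boundary/wire bonds at $x$. Summing these local bounds over $x\in\partial X$ and comparing with $\sharp\partial X$ through an Euler-type count on the subgraph $\widetilde\G=(\partial X,\Ed^{\partial}(X,V)\cup\Ed^{\Wi,\ext}(X,V))$ used in the proof of Lemma~\ref{puntidibordo} should yield $E_\partial\le 2\,\sharp\partial X-4$, and hence the lemma.

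The hard part is precisely this last step, i.e.\ controlling boundary vertices of degree four and the effect of (interior or exterior) wire bonds: a priori a boundary vertex could emit all four of its bonds toward the interior, so the bare degree bound $\deg\le 4$ is much too weak to produce the coefficient $2$ in $E_\partial\le 2\,\sharp\partial X-4$. The mechanism that makes the counting balance --- mirroring the boundary analysis of \cite{HeRa, DLF1} --- is that a facial $\frac{\pi}{3}$-sector cannot close up into a unit equilateral triangle (Lemma~\ref{notri}), so a ``saturated'' boundary corner is forced to carry two bonds of $\partial O(X,V)$ and thus propagates along the boundary, while Lemma~\ref{onlyrhombic} pins down the few remaining local configurations. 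Once $E_\partial\le 2\,\sharp\partial X-4$ is established, the lemma follows at once from the identity displayed above.
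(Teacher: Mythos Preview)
Your reduction is correct and matches the paper's starting point: from $\E=-\sharp\Ed=-2\sharp\X+\F$ one gets
\[
\F(X,V)-\F(X',V')=2\,\sharp\partial X-E_\partial,
\]
and the lemma is equivalent to $E_\partial\le 2\,\sharp\partial X-4$. Your treatment of the degenerate case $\X'=\emptyset$ is also fine.

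The genuine gap is that you do not prove the inequality in the main case. You outline a strategy (``I would split\dots'', ``should yield\dots''), name the right ingredients (Lemma~\ref{possibleanglesrmk}, Lemma~\ref{notri}, Lemma~\ref{onlyrhombic}, Euler on the boundary subgraph), and even correctly diagnose the difficulty (degree-four boundary vertices), but you stop short of the actual count. This is precisely where all the work lies. In the paper the proof occupies several pages: one first reduces to the case where $O(X,V)$ has simple closed polygonal boundary and no exterior wires (so $\Per_{\gr}(X,V)=\sharp\partial X$), then classifies each $x\in\partial X$ by its interior angle $\hat x\in\{\tfrac{j\pi}{3}\}_{j=1}^5\cup(\tfrac{2\pi}{3},\pi)\cup(\pi,\tfrac{4\pi}{3})$ and its interior degree $\ii(x)\in\{0,1,2\}$, obtaining finitely many classes $Y^j_k$. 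The constraint comes from two independent global identities: Gauss--Bonnet on $\partial O(X,V)$, which controls $\sharp Y^1-\sharp Y^5$ and $\sharp Y^2-\sharp Y^4$ (with remainders), and a ``following'' relation along $\partial X$ (Lemmas~\ref{A.1} and~\ref{A.2}) showing that between two consecutive ``bad'' corners (in $Y^1_0$ or $Y^4_2$) there must sit a compensating vertex in $Y^3_0\cup Y^{3,4}_0$. Combining these yields $\sum_{x\in\partial X}(1-\ii(x))\ge 4$, which in the simple-boundary case is exactly your $E_\partial\le 2\,\sharp\partial X-4$. The passage to non-simple boundaries and to exterior wires is then done by cutting $O(X,V)$ at its ``bowtie'' points and summing.

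None of this follows from the local angle constraints alone; the Gauss--Bonnet input and the propagation lemma along the boundary are both essential, and your sketch does not supply either. As written, the proposal is a correct reduction together with a plausible plan, not a proof.
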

Before proving Lemma \ref{increase},
%In this section we collect some geometric structures results, associated to some particular boundary particles, which are useful in the proof of Lemma \ref{increase}. 
we need some auxiliary lemmas.
We start by fixing some notations.

Let $N\ge 8$ and let $(X,V)\in\AC_N$ be such that $\G(X,V)$ is connected, $O(X,V)$ has simple and closed polygonal boundary and $\sharp\Ed^{\Wi,\ext}(X,V)=0$\,.
For every $x\in\partial X$, we set
\begin{equation}\label{idef}
\ii(x):=\sharp\{e\in\Ed^{\inte}(X,V)\,:\,x\in e\}.
\end{equation}
Moreover, for every $x\in \partial X$ we denote by $\hat x$ the inner angle spanned by the two boundary edges containing $x$. Here by inner angle we mean the angle that is ``interior'' to $O(X,V)$.
In view of Lemma \ref{possibleanglesrmk} we can classify the points $x\in\partial X$ in the following subclasses:
\begin{equation*}%\label{classbyangles}
\begin{aligned}
Y^{j}:=\Big\{x\in\partial X\,:\, \hat x=j\frac \pi 3\Big\},&\qquad j=1,\ldots,5\,,\\
Y^{j,j+1}:=\Big\{x\in\partial X\,:\, \hat x\in\Big(j\frac\pi 3,(j+1)\frac \pi 3\Big)\Big\},&\qquad j=2,3\,,
\end{aligned}
\end{equation*}
i.e.,
\begin{equation}\label{classification}
\partial X=\bigcup_{j=1}^{5} Y^j\cup Y^{2,3}\cup Y^{3,4}.
\end{equation}
By Lemma \ref{possibleanglesrmk}(iii) every point in $\X$ can lie on at most four bonds, so that $\ii(x)\le 2$ for every $x\in\partial X$\,.
Therefore, for every $k=0,1,2$ we can set 
\begin{equation*}%\label{classbyanglesj}
\begin{aligned}
Y_k^j:=&\{ x\in Y^j\,:\, \ii(x)=k \}\qquad\textrm{ for }j=1,2,3,4,5\,,\\
Y_k^{j,j+1}:=&\{ x\in Y^{j,j+1}\,:\, \ii(x)=k \}\qquad\textrm{ for }j=2,3\,. 
\end{aligned}
\end{equation*}
By Lemma \ref{possibleanglesrmk} we have
\begin{multline}\label{classY}
%\begin{aligned}
Y^1=Y^1_0\,,\quad Y^2=Y^2_0\,,\quad Y^{2,3}=Y^{2,3}_0\,,\quad Y^3=Y_0^3\cup Y_1^3\,, \\
\quad Y^{3,4}=Y_{0}^{3,4}\cup Y_1^{3,4}\,,\quad Y^4=Y_0^4\cup Y_1^4\cup Y_2^4\,,\quad Y^5=Y_0^5\cup Y_1^5\cup Y_2^5\,. 
%\end{aligned}
\end{multline}
\begin{definition}\label{consecutivedef}
Given two boundary particles $x',x''\in Y^1\cup Y^2\cup Y^{2,3}\cup Y^4\cup Y^5=\partial X\setminus(Y^3\cup Y^{3,4})$ we say that $x'$ \textit{follows} $x''$ and we write $x'\to x''$ if there exist $M\in\N$ and a path $x'=z_0,\dots,z_M=x''$ in  $\partial X$,
%(with $z_m\neq z_n$, for every $m\neq n$)
oriented according to the counter-clockwise orientation of $\partial O(X,V)$, such that $\{z_{m-1},z_{m}\}\in\Ed^{\partial}(X,V)$ %for every $m=1,\dots,M-1$ 
and
 $z_m\in Y^3\cup Y^{3,4} $ for every $m=1,\dots,M-1$. 
 \end{definition}
 Note that the case $M=1$ in Definition \ref{consecutivedef} corresponds to $\{x',x''\}\in\Ed^{\partial}(X,V)$.
Notice also that for every point $x'\in\partial X\setminus(Y^3\cup Y^{3,4})$ there exists a unique point $x''\in\partial X\setminus(Y^3\cup Y^{3,4})$ such that $x'\to x''$.
Moreover, we set
\begin{equation}\label{consecutive}
Y_k^j \to Y_a^b  :=\{ x'\in Y_k^j\,:\,x'\to x'' \textrm{ for some } x''\in Y_a^b  \}\,, 
\end{equation}
and $Y^j_k\to \{Y_{a_1}^{b_1},\ldots,Y_{a_L}^{b_L}\}:=\bigcup_{l=1}^{L}Y_k^j \to Y_{a_l}^{b_l} $\,.

%Let $(X,V)\in\AC$ be such that $\Ed^{\Wi,\ext}(X,V)=\emptyset$ and $O(X,V)$ has simply closed polygonal boundary.
For every $x\in\partial X$ and for every $z_1,z_2\in \X$ with $\{x,z_1\}\,\{x,z_2\}\in\Ed(X,V)$\,, we denote by $\widehat{z_1xz_2}$ the angle formed by $[x,z_1]$ and $[x,z_2]$ that is ``interior'' to $O(X,V)$\,. Notice that, if also $z_1,z_2\in\partial X$, then $\{x,z_1\}\,\{x,z_2\}\in\Ed^\partial(X,V)$ and $\hat x=\widehat{z_1xz_2}$\,. 
 
%We use here the notations introduced in the proof of Lemma \ref{increase}. More precisely, using the definitions  \eqref{classbyangles} and \eqref{classbyanglesj},
%we recall that given two boundary particles $y',y''\in Y^1\cup Y^2\cup Y^{2,3}\cup Y^4\cup Y^5$ we say that $y'$ \textit{follows} $y''$ and we write $y'\to y''$ if there exist $M\in\N$ and a counter-clockwise oriented path $y'=z_0,\dots,z_M=y''\in \partial X$, such that $\{z_{m-1},z_{m}\}\in\Ed^{\partial}(X,V)$ %for every $m=1,\dots,M-1$ 
%and
% $z_m\in Y^3\cup Y^{3,4} $ for every $m=1,\dots,M-1$. Note that the case $M=1$ corresponds to $\{y',y''\}\in\Ed^{\partial}(X,V)$.

As a consequence of Lemma \ref{possibleanglesrmk}, we have the following result.

\begin{lemma}\label{A.1}

Let $(X,V)\in\AC$ be such that $\G(X,V)$ is connected, $O(X,V)$ has simple and closed polygonal boundary and $\Ed^{\Wi,\ext}(X,V)=\emptyset$. The following facts hold true.

\begin{enumerate}
%%%%%%%%%%%%%%%%%%%%%%%%%%%%%%%%%%
%\item If $x,w,z_1,z_2\in\X$ are such that $x\in Y^3_1$\,, $\{x,z_1\},\{x,z_2\}\in\Ed^{\partial}(X,V)$ and $\{x,w\}\in\Ed^{\inte}(X,V)$, then either $\widehat{z_1xw}=\frac\pi 3$ and $\widehat{wxz_2}=\frac 2 3\pi$ or  $\widehat{z_1xw}=\frac 2 3\pi$ and $\widehat{wxz_2}=\frac \pi 3$\,. Moreover, $v(x)$ is parallel, in the first case to the bisector of the angle $\widehat{z_1xw}$\,, and in the second one to the bisector of $\widehat{wxz_2}$\,.  
%%%%%%%%%%%%%%%%%%%%%%%%%%%%%%%%%%%
%\item If $x,w,z_1,z_2\in\X$ are such that $x\in Y^{3,4}_1$\,, $\{x,z_1\},\{x,z_2\}\in\Ed^{\partial}(X,V)$ and $\{x,w\}\in\Ed^{\inte}(X,V)$, then either $\widehat{z_1xw}=\frac\pi 3$ and $\widehat{wxz_2}\in (\frac 2 3\pi,\pi)$ or  $\widehat{z_1xw}\in (\frac 2 3\pi,\pi)$ and $\widehat{wxz_2}=\frac \pi 3$\,. Moreover, $v(x)$ is parallel, in the first case to the bisector of the angle $\widehat{z_1xw}$\,, and in second one to the bisector of $\widehat{wxz_2}$\,.
%%%%%%%%%%%%%%%%%%%%%%%%%%%%%%%%%%
\item If $x,y,z_1,z_2\in\X$ are such that $x\in Y^3_1\cup Y^{3,4}_1$\,, $\{x,z_1\},\{x,z_2\}\in\Ed^{\partial}(X,V)$ and $\{x,y\}\in\Ed^{\inte}(X,V)$, then either $\widehat{z_1xy}=\frac\pi 3$  or   $\widehat{yxz_2}=\frac \pi 3$\,. Moreover, $v(x)$ is parallel to  the bisector of the $\frac\pi 3$ angle.
%%%%%%%%%%%%%%%%%%%%%%%%%%%%%%%%%%
\item If $x,y_1,y_2,z_1,z_2\in\X$ are such that $x\in Y^4_2$\,, $\{x,z_1\},\{x,z_2\}\in\Ed^{\partial}(X,V)$\,,\\
 $\{x,y_1\}\,,\{x,y_2\}\in\Ed^{\inte}(X,V)$ and $z_1,z_2,y_2,y_1$ are counter-clockwise oriented, then $\widehat{z_1xy_1}=\frac\pi 3$\,, $\widehat{y_1xy_2}=\frac 2 3\pi$ and  $\widehat{y_2xz_2}=\frac \pi 3$\,.
Moreover, $v(x)$ is parallel to the bisector of the angles $\widehat{z_1xy_1}$ and $\widehat{y_2xz_2}$\,.
%%%%%%%%%%%%%%%%%%%%%%%%%%%%%%%%%%
\item If $x,y_1,y_2,z_1,z_2\in\X$ are such that $x\in Y^5_2$\,, $\{x,z_1\},\{x,z_2\}\in\Ed^{\partial}(X,V)$\,,\\
 $\{x,y_1\}\,,\{x,y_2\}\in\Ed^{\inte}(X,V)$ and and $z_1,z_2,y_2,y_1$ are counter-clockwise oriented, then $\widehat{z_1xy_1}=\frac 2 3\pi $\,, $\widehat{y_1xy_2}=\frac \pi 3$ and  $\widehat{y_2xz_2}=\frac 2 3\pi$\,. Moreover, $v(x)$ is parallel to the bisector of the angle $\widehat{y_1xy_2}$\,.
%%%%%%%%%%%%%%%%%%%%%%%%%%%%%%%%%%
%\item If $y\in Y_1^{3}$ then $v_y=\pm e^{\alpha+ \pi/6}$ with $\alpha$ the angle formed by the edge $[y,w]$ where $w\in\partial X$ is one of the boundary particles bonded to $y$. \\
%Moreover if $x\in \mathring{X}$ is the interior particle bonded to $y$, then the interior angle $\pi$ with vertex in $y$ is divided by the edge $[y,x]$ in $\frac{\pi}{3}$ and $\frac{2\pi}{3}$.\\
%In addition $Y^3=Y_0^3\cup Y_1^3$.
%  \item If $y\in Y^4$ then $v_y=\pm e^{\alpha+ \pi/6}$ with $\alpha$ the angle formed by the edge $[y,w]$ where $w\in\partial X$ is one of the boundary particles bonded to $y$.
%
%\item Let $y\in Y_1^4$. If $x\in \mathring{X}$ is the interior particle bonded to $y$, then the interior angle $\frac{4\pi}{3}$ with vertex in $y$ is divided by the edge $[y,x]$ in $\frac{\pi}{3}$ and $\pi$.
%
%\item  Let $y\in Y_2^4$. Then the interior angle $\frac{4\pi}{3}$ with vertex in $y$ is divided by the interior edges associated to $y$ in $\frac{\pi}{3}$, $\frac{2\pi}{3}$ and $\frac{\pi}{3}$.
\end{enumerate}
\end{lemma}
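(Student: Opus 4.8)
The plan is to reduce the whole statement to a local analysis of the bonds incident to the boundary point $x$, feeding it into Lemma~\ref{possibleanglesrmk}. First I would record the common geometric picture forced by the standing hypotheses ($\G(X,V)$ connected, $O(X,V)$ with simple closed polygonal boundary, $\Ed^{\Wi,\ext}(X,V)=\emptyset$): such an $x$ is incident to exactly the two boundary edges $[x,z_1],[x,z_2]$ together with its $\ii(x)$ interior edges, and every interior edge at $x$ lies inside the inner wedge at $x$, whose aperture is $\hat x=\widehat{z_1xz_2}$. Consequently the interior edges partition the angle $\hat x$ into $\ii(x)+1$ consecutive sub-angles, each of which is the angle between two bonds incident to $x$; whenever such a sub-angle is $\le\pi$ it therefore belongs to $\{\frac\pi3\}\cup[\frac23\pi,\pi]$ by Lemma~\ref{possibleanglesrmk}(i), and if it equals $\frac\pi3$ then $v(x)$ is parallel to its bisector.

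For item~(1) I would use $\ii(x)=1$ and $\hat x\in[\pi,\frac43\pi)$ (which is exactly the condition $x\in Y^3_1\cup Y^{3,4}_1$). The single interior edge $[x,y]$ splits $\hat x$ into $\widehat{z_1xy}$ and $\widehat{yxz_2}$, whose sum is $\hat x<\frac43\pi$. Each of them is $\le\pi$: if one exceeded $\pi$, the other would be a bond-angle of measure $<\frac\pi3$, which is impossible. Hence each lies in $\{\frac\pi3\}\cup[\frac23\pi,\pi]$, and they can neither both lie in $[\frac23\pi,\pi]$ (their sum would be $\ge\frac43\pi$) nor both equal $\frac\pi3$ (their sum would be $\frac23\pi<\pi\le\hat x$). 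Thus exactly one equals $\frac\pi3$, and the claim on $v(x)$ is the last assertion of Lemma~\ref{possibleanglesrmk}(i) applied to that angle.

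For items~(2) and~(3) I would exploit that $\ii(x)=2$ forces $x$ to lie on exactly four bonds, so Lemma~\ref{possibleanglesrmk}(iii) applies and the four consecutive angles between bonds around $x$ are, cyclically, $\frac\pi3,\frac23\pi,\frac\pi3,\frac23\pi$. As the two interior edges lie inside the inner wedge, the boundary edges $[x,z_1],[x,z_2]$ are adjacent among the four bonds, and the angle between them that lies \emph{outside} $O(X,V)$ equals $2\pi-\hat x$ and is one of these four alternating values. In case~(2), $x\in Y^4_2$ gives $\hat x=\frac43\pi$, so the outer angle is $\frac23\pi$ and the three remaining (interior) sub-angles, read off in the order $\widehat{z_1xy_1},\widehat{y_1xy_2},\widehat{y_2xz_2}$ dictated by the counter-clockwise labelling, are forced to be $\frac\pi3,\frac23\pi,\frac\pi3$; in case~(3), $x\in Y^5_2$ gives $\hat x=\frac53\pi$, the outer angle is $\frac\pi3$ and the three interior sub-angles are $\frac23\pi,\frac\pi3,\frac23\pi$. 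In both cases the statements on $v(x)$ then follow by applying Lemma~\ref{possibleanglesrmk}(i) to each interior sub-angle equal to $\frac\pi3$.

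All the arithmetic here is immediate; the one point where I would be careful — the main obstacle — is the bookkeeping in the last paragraph: checking that the two boundary edges are genuinely consecutive in the cyclic order of the four bonds around $x$, so that $2\pi-\hat x$ is a single one of the four alternating angles rather than a sum of several of them, and matching the prescribed counter-clockwise order $z_1,z_2,y_2,y_1$ to the labelling of the three interior sub-angles. Everything else is a direct consequence of Lemma~\ref{possibleanglesrmk}.
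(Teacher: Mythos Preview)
Your proposal is correct and follows exactly the route the paper intends: the paper states Lemma~\ref{A.1} as a direct consequence of Lemma~\ref{possibleanglesrmk} and gives no further proof, and what you have written is precisely the local angle-counting argument that makes this deduction explicit. Your careful handling of the cyclic ordering in items~(2)--(3) is the only nontrivial point, and you treat it correctly.
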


\begin{lemma}\label{A.2}
Let $(X,V)\in\AC$ be such that $\G(X,V)$ is connected, $O(X,V)$ has simple and closed polygonal boundary and $\Ed^{\Wi,\ext}(X,V)=\emptyset$. 
 Let $x,y\in \partial X$.
 
 If $x,y\in Y_0^1$ and $x\to y$, then 
 $M\ge 2$ in Definition \ref{consecutivedef} and there exists $z_m$ as in Definition  \ref{consecutivedef} such that  $z_m\in Y_0^3\cup Y_0^{3,4}$.

The same statement holds true also if $x,y\in Y_2^4$ or if $x\in Y_0^1$\,, $y\in Y_2^4$ or if $x\in Y_2^4$\,, $y\in Y_0^1$.

\end{lemma}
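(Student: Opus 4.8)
The plan is to follow the path $x=z_0,z_1,\dots,z_M=y$ given by Definition \ref{consecutivedef}, to examine the faces of $\G(X,V)$ adjacent to its consecutive (boundary) edges, and to show that if no intermediate vertex of the path had ``free valence'' then one of those faces would be a unit equilateral triangle, which is impossible by Lemma \ref{notri}. For a boundary edge $e$ of $\G(X,V)$ I write $f(e)$ for the unique face bordering $e$ from the side of $O(X,V)$; it is unique because $e$ lies on $\partial O(X,V)$, which is a simple closed polygon by hypothesis, so $O(X,V)$ is simply connected and every bounded component of its complement is a (simply connected) face.

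The argument rests on three preparatory facts, all consequences of Lemmas \ref{possibleanglesrmk}, \ref{notri} and \ref{A.1}. \emph{(a)} No face of $\G(X,V)$ has two consecutive vertices whose interior angles are both $\frac\pi3$: if $u,v$ were such a pair, joined by the edge $[u,v]$ of a face $g$, then $g$ coincides near $u$ with the $\frac\pi3$-wedge bounded by $[u,v]$ and the successor edge $[u,u']$ of $g$ at $u$, and similarly near $v$; since $|u-v|=|u-u'|=|v-v'|=1$, a direct computation shows that $u'$ and $v'$ are both the third vertex of the unit equilateral triangle on $[u,v]$ lying on the side of $g$, so $\partial g$ is that triangle and $\Per(g)=3$, contradicting Lemma \ref{notri}. \emph{(b)} If $w\in Y_0^1\cup Y_2^4$ and $e$ is one of the two boundary edges through $w$, then $f(e)$ has interior angle $\frac\pi3$ at $w$: for $w\in Y_0^1$ this is immediate since $w$ carries no interior edge and has interior angle $\frac\pi3$; for $w\in Y_2^4$ it is exactly Lemma \ref{A.1}(2), according to which the interior wedge at $w$ (of opening $\frac43\pi$) is cut by the two interior edges into sub-wedges of openings $\frac\pi3,\frac23\pi,\frac\pi3$ in this order, so each of the two faces bordering a boundary edge at $w$ has opening $\frac\pi3$ there.

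\emph{(c)} If $z\in Y_1^3\cup Y_1^{3,4}$, then $z$ lies on exactly one interior edge and on no interior wire edge; this is where a little care is needed, but it follows from Lemma \ref{possibleanglesrmk}(i)--(ii), since the interior angle at $z$, being $<\frac43\pi$, cannot contain two edges strictly between its two boundary edges $e^-,e^+$ without forcing two adjacent openings equal to $\frac\pi3$, which (ii) forbids. Therefore the unique interior edge at $z$ divides the interior angle into the two sub-angles which are exactly the interior angles of $f(e^-)$ and $f(e^+)$ at $z$, these two faces are distinct because that edge lies in $\Ed^{\inte}(X,V)$, and by Lemma \ref{A.1}(1) precisely one of the two sub-angles equals $\frac\pi3$. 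In other words, at such a $z$ exactly one of $f(e^-),f(e^+)$ has interior angle $\frac\pi3$ at $z$.

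Finally, the propagation argument. Set $f_m:=f([z_{m-1},z_m])$ for $m=1,\dots,M$; by Definition \ref{consecutivedef} every $[z_{m-1},z_m]$ is a boundary edge, the path runs along $\partial O(X,V)$ (so the two boundary edges at each $z_m$, $1\le m\le M-1$, are $[z_{m-1},z_m]$ and $[z_m,z_{m+1}]$, there being no exterior wire edges), and $z_1,\dots,z_{M-1}\in Y^3\cup Y^{3,4}$. If $M=1$, then by (b) the face $f_1$ has interior angle $\frac\pi3$ at both of its consecutive vertices $x=z_0$ and $y=z_1$, contradicting (a); hence $M\ge2$. Now suppose, for contradiction, that none of $z_1,\dots,z_{M-1}$ lies in $Y_0^3\cup Y_0^{3,4}$; since $Y^3\cup Y^{3,4}=Y_0^3\cup Y_1^3\cup Y_0^{3,4}\cup Y_1^{3,4}$, every $z_m$ with $1\le m\le M-1$ lies in $Y_1^3\cup Y_1^{3,4}$, so (c) applies at each of them. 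I claim $f_{m+1}$ has interior angle $\frac\pi3$ at $z_m$ for every $m=1,\dots,M-1$. For $m=1$: by (b), $f_1$ has interior angle $\frac\pi3$ at $z_0=x$, hence by (a) not at $z_1$; applying (c) at $z_1$ with $(e^-,e^+)=([z_0,z_1],[z_1,z_2])$ forces $f_2$ to have interior angle $\frac\pi3$ at $z_1$. Inductively, if $f_{m+1}$ has interior angle $\frac\pi3$ at $z_m$, then, $z_m$ and $z_{m+1}$ being consecutive vertices of $f_{m+1}$, (a) shows $f_{m+1}$ does not have interior angle $\frac\pi3$ at $z_{m+1}$, so (c) at $z_{m+1}$ forces $f_{m+2}$ to have interior angle $\frac\pi3$ at $z_{m+1}$. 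In particular $f_M$ has interior angle $\frac\pi3$ at $z_{M-1}$; but $f_M$ also has interior angle $\frac\pi3$ at $y=z_M$ by (b), and $z_{M-1},z_M$ are consecutive vertices of $f_M$, contradicting (a). Hence some $z_m$ lies in $Y_0^3\cup Y_0^{3,4}$. Since only $x,y\in Y_0^1\cup Y_2^4$ was used, this settles all four cases. The main obstacle is fact (c): ruling out hidden interior wire edges at a valence-three boundary vertex of this type, and matching the $\frac\pi3$ sub-angle furnished by Lemma \ref{A.1}(1) with an actual interior angle of one of the two faces $f(e^-),f(e^+)$.
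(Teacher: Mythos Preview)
Your argument is correct and is essentially the same as the paper's, only more explicitly phrased in terms of faces: where the paper inductively shows $\widehat{w_m z_m z_{m+1}}=\frac{\pi}{3}$ (using Lemma~\ref{A.1}(i) and Lemma~\ref{notri}) and reaches a triangle at $z_{M-1}z_M$, you track the interior angle of $f_{m+1}$ at $z_m$ and invoke your fact~(a), which is precisely the ``two consecutive $\frac{\pi}{3}$ angles force a triangle'' mechanism the paper uses implicitly. The worry you flag in~(c) about extra interior wire edges is indeed harmless: a vertex in $Y^{3,4}$ cannot carry four bonds (the rigid $\frac{\pi}{3},\frac{2\pi}{3}$ pattern of Lemma~\ref{possibleanglesrmk}(iii) admits no pair of bonds at an angle strictly between $\pi$ and $\frac{4\pi}{3}$), and a vertex in $Y^3$ with four bonds would place one of them in the exterior angle, contradicting $\Ed^{\Wi,\ext}(X,V)=\emptyset$.
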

%%%%%%%%%%%%%%%%%%%%%%%%%%%%%%
%%%%%%%%%%%%%%%%%%%%%%%%%%%%%%
%%%%%%%%%%%%%%%%%%%%%%%%%%%%%%
\begin{proof}
We prove the claim only for $x,y\in Y_0^1$, being the proof in the other cases fully analogous.
%Let $M\in\N$ and let $z_0,\ldots,z_M$ as above. 
By Lemma \ref{notri}, we immediately get
$M\ge 2$.
% thus giving the first part of the statement.
%We preliminarily notice that the fact that $M\ge 2$ is a consequence of Lemma \ref{notri}. Let now $M\ge 2$ and let $z_0=x,\dots,z_M=y\in \partial X$, such that $\{z_{m-1},z_{m}\}\in\Ed^{\partial}(X,V)$ for every $m=1,\dots,M-1$ and
% $z_m\in Y^3\cup Y^{3,4} $ for $m=1,\dots,M-1$\,. 
%We can assume without loss of generality that $v(x)=e^{i\frac\pi 2}$\,. 
 Assume by contradiction that $z_m\in Y_1^3\cup Y_1^{3,4}$ for every $m=1,\ldots,M-1$\,. For every $m=1,\ldots,M-1$ let $w_m$ be the only point in $\X$ such that $\{z_m,w_m\}\in\Ed^{\inte}(X,V)$\,. By Lemma \ref{notri}  we have that the points $w_m$ are all distinct. Moreover, by  Lemma \ref{notri} and by Lemma \ref{A.1}(i) we get $\widehat{w_1z_1z_{2}}=\frac{\pi}{3}$; analogously, it follows that $\widehat{w_2z_2z_{3}}=\frac{\pi}{3}$
 and in fact, by induction, that  $\widehat{w_mz_mz_{m+1}}=\frac{\pi}{3}$ for every $m=1,\ldots, M-1$\,.
  %and by the assumption $v(z_m)\equiv v(z_0)$ 
In particular, $\widehat{w_{M-1}z_{M-1}z_{M}}=\frac{\pi}{3}$, thus contradicting Lemma \ref{notri}.

\end{proof}
%%%%%%%%%%%%%%%%%%%%%%
%%%%%%%%%%%%%%%%%%%%%%
%%%%%%%%%%%%%%%%%%%%%%
%%%%%%%%%%%%%%%%%%%%%%
%%%%%%%%%%%%%%%%%%%%%%
%%%%%%%%%%%%%%%%%%%%%%
\begin{proof}[Proof of Lemma \ref{increase}]
We preliminarily prove the claim under the assumptions that 
%$\chi(X,V)=1$
$O(X,V)$ has simple and closed polygonal boundary and $\sharp\Ed^{\Wi,\ext}(X,V)=0$\,;
 %and $(X,V)$ has simply closed polygonal boundary.
in such a case
\begin{equation}\label{nowires}
\Per_{\gr}(X,V)=\Per(X,V)=\sharp\partial X\,.
\end{equation} 
Set 
$$
I:=\{e\in\Ed^{\inte}(X,V)\,:\, \exists x\in \partial X \textrm{ such that }x\in e\};
$$
recalling \eqref{idef}, we have
%for every $x\in\partial X$, we set
%\begin{equation}\label{idef}
%\ii(x):=\sharp\{e\in\Ed^{\inte}(X,V)\,:\,x\in e\};
%\end{equation}
\begin{equation}\label{iI}
\sharp I\le \sum_{x\in\partial X}\ii(x).
\end{equation}
%\red{where the equality holds true if and only if for every $x\in\partial X$ and for every $e\in\Ed^{\inte}(X,V)$ with $x\in e$\,, it holds that $e=\{x,y\}$ with $y\notin\partial X$\,.}
In view of \eqref{defE0}, \eqref{decompformula}, and \eqref{surface+}, we have
\begin{equation*}
\begin{aligned}
-\Per(X,V)-\sharp I=\E(X,V)-\E(X',V')=&-2\sharp\X+\F(X,V)+2\sharp\X'-\F(X',V')\\
=&-2\sharp\partial X+\F(X,V)-\F(X',V'),
\end{aligned}
\end{equation*}
whence, together with \eqref{nowires} and \eqref{iI},  we deduce 
\begin{equation*}
\F(X,V)-\F(X',V')=\sharp \partial X-\sharp I\ge\sum_{x\in\partial X}(1-\ii(x))\,.
\end{equation*}
Therefore, in order to prove the claim it is enough to show that
\begin{equation}\label{ineqnew}
\sum_{x\in\partial X}(1-\ii(x))\ge 4\,.
\end{equation}
% and $\chi(X',V')=1$\,.
%To this purpose, for every $x\in \partial X$ we denote by $\hat x$ the inner angle spanned by the two boundary edges containing $x$. Here by inner angle we mean the angle that is ``interior'' to the union of all the faces of the graph $\G(X,V)$.
%In view of Remark \ref{possibleanglesrmk} we can classify the points $x\in\partial X$ in the following subclasses:
%\begin{equation}\label{classbyangles}
%\begin{aligned}
%Y^{j}:=\{y\in\partial X\,:\, \hat x=j\frac \pi 3\},&\qquad j=1,\ldots,5\,,\\
%Y^{j,j+1}:=\{y\in\partial X\,:\, \hat x\in\Big(j\frac\pi 3,(j+1)\frac \pi 3\Big)\},&\qquad j=2,3\,,
%\end{aligned}
%\end{equation}
%i.e.,
%\begin{equation}\label{classification}
%\partial X=\bigcup_{j=1}^{5} Y^j\cup Y^{2,3}\cup Y^{3,4}.
%\end{equation}
%By Lemma \ref{possibleanglesrmk}(iii) every point in $\X$ can lie on at most four bonds, so that $\ii(x)\le 2$ for every $x\in\partial X$\,.
%Therefore, for every $k=0,1,2$ we can set 
%\begin{equation}\label{classbyanglesj}
%\begin{aligned}
%Y_k^j:=&\{ y\in Y^j\,:\, \ii(y)=k \}\qquad\textrm{ for }j=1,2,3,4,5\,,\\
%Y_k^{j,j+1}:=&\{ y\in Y^{j,j+1}\,:\, \ii(y)=k \}\qquad\textrm{ for }j=2,3\,. 
%\end{aligned}
%\end{equation}
%By Lemma \ref{possibleanglesrmk}(i) and by Lemma \ref{notri} we have
%\begin{multline}\label{classY}
%%\begin{aligned}
%Y^1=Y^1_0\,,\quad Y^2=Y^2_0\,,\quad Y^{2,3}=Y^{2,3}_0\,,\quad Y^3=Y_0^3\cup Y_1^3\,, \\
%\quad Y^{3,4}=Y_{0}^{3,4}\cup Y_1^{3,4}\,,\quad Y^4=Y_0^4\cup Y_1^4\cup Y_2^4\,,\quad Y^5=Y_0^5\cup Y_1^5\cup Y_2^5\,. 
%%\end{aligned}
%\end{multline}
Recalling \eqref{classification} and \eqref{classY} we have
\begin{equation}\label{starting}
\begin{aligned}
\sum_{x\in\partial X}(1-\ii(x))=&\sharp Y^1_0-\sharp Y^5_2+\sharp Y^2_0-\sharp Y^4_2\\
&+\sharp Y^{2,3}_0+\sharp Y^3_0+\sharp Y^{3,4}_0+\sharp Y^4_0+\sharp Y^5_0\,.
\end{aligned}
\end{equation}
%In order to conclude the proof in this case we will show that there exists $\eta\in\R$ such that
%\begin{equation}\label{quasiclaim}
%\begin{aligned}
%\sharp Y^1_0-\sharp Y^5_2+\sharp Y^2_0-\sharp Y^4_2&\ge 4-\eta\,,\\
%\sharp Y^{2,3}_0+\sharp Y^3_0+\sharp Y^{3,4}_0+\sharp Y^4_0+\sharp Y^5_0&\ge \eta.
%\end{aligned}
%\end{equation}
Since  $O(X,V)$ is connected, the Gauss-Bonnet formula gives $2\pi=\sum_{x\in\partial X}(\pi-\hat x)$\,, where $\hat x$ denotes the angle at $x$ interior to $O(X,V)$;
therefore,
%(recalling that $O(X,V)$ is connected) we have
\begin{equation*}%\label{gb0}
\begin{aligned}
2\pi=&\sum_{x\in\partial X}(\pi-\hat x)\le \frac{2}{3}\pi \sharp Y^1+\frac{\pi}{3} \sharp Y^2+\frac{\pi}{3} \sharp Y^{2,3}-\frac{\pi}{3}\sharp Y^{4}-\frac{2}{3}\pi\sharp Y^5,
\end{aligned}
\end{equation*}
%where the equality holds true if and only if $\sharp Y^{2,3}=\sharp Y^{3,4}=0$\,;
which, in view of \eqref{classY}, yields
\begin{equation}\label{gb1}
\begin{aligned}
6\le& 2(\sharp Y^1_0-\sharp Y^5_2)+\sharp Y_0^2-\sharp Y^4_2-r
\end{aligned}
\end{equation}
where 
\begin{equation}\label{defr}
r:=2\sharp Y^5_1+2\sharp Y^5_0+\sharp Y_1^4+\sharp Y^4_0-\sharp Y^{2,3}_0\,.
\end{equation}
We point out that we are making no claims on the  on the sign of $r$.
%Notice that  no information on the sign of $r$ is a priori known.

%Let $\{y_1,\ldots, y_K\}=Y^1\cup Y^2\cup Y^{2,3}\cup Y^4\cup Y^5$\,. We can assume without loss of generality that the points $y_k$ are counter-clockwise oriented. 

%Given two boundary particles $y',y''\in Y^1\cup Y^2\cup Y^{2,3}\cup Y^4\cup Y^5$ we say that $y' $\textit{follows} $y''$ and we write $y'\to y''$ if there exist $M\in\N$ and a counter-clockwise oriented path $y'=z_0,\dots,z_M=y''\in \partial X$, such that $\{z_{m-1},z_{m}\}\in\Ed^{\partial}(X,V)$ %for every $m=1,\dots,M-1$ 
%and
% $z_m\in Y^3\cup Y^{3,4} $ for every $m=1,\dots,M-1$. Note that the case $M=1$ corresponds to $\{y',y''\}\in\Ed^{\partial}(X,V)$.

Recalling the notation introduced in \eqref{consecutive},
for every $(j;k)\in\{(1;0),(2;0),(4;2),(5;2)\}$, we set
\begin{equation*}%\label{ABC}
\begin{aligned}
A_k^j:=Y_k^j\to \{Y_0^2, Y^5_2\},\quad
B_k^j:=Y_k^j\to \{Y_0^1, Y^4_2\},\quad
C_k^j:=Y_k^j\to \{Y_0^{2,3},Y_0^4, Y_1^4, Y_0^5, Y_1^5\}\,,
\end{aligned}
\end{equation*}
so that 
\begin{equation}\label{classABC}
\sharp Y_k^j= \sharp A_k^j + \sharp  B_k^j +\sharp C_k^j\qquad\textrm{for every }(j;k)\in\{(1;0),(2;0),(4;2),(5;2)\}.
\end{equation}
By construction
\begin{equation}\label{Y2052}
\sharp A_0^1+\sharp A_0^2+ \sharp A_2^4 +\sharp A_2^5\le \sharp Y_0^2+\sharp Y^5_2\,,
\end{equation}
\begin{equation}\label{Csum}
\sharp C^1_0+\sharp C^2_0+ \sharp C^4_2+\sharp C^5_2\le \sharp Y_0^{2,3}+\sharp Y_0^4+ \sharp Y_1^4+\sharp Y_0^5+ \sharp Y^5_1\,,
\end{equation}
and, by Lemma \ref{A.2},
\begin{equation}\label{paying}
\sharp B_0^1+\sharp B_2^4\le \sharp Y_0^3+\sharp Y_0^{3,4}\,.
\end{equation}
By applying \eqref{classABC} with $(j;k)=(1;0)$ and $(j;k)=(4;2)$, summing such identities, and using  \eqref{Y2052}, we have
\begin{equation}\label{ineq0}
\begin{aligned}
\sharp Y_0^1+\sharp Y_2^4-(\sharp B_0^1+\sharp C_0^1+\sharp B_2^4+\sharp C_2^4)=&\sharp A_0^1+\sharp A_2^4\\
\le&\sharp A_0^1+\sharp A_2^4+ \sharp A_0^2 +\sharp A_2^5 \le \sharp Y_0^2+\sharp Y_2^5.
\end{aligned}
\end{equation}
%By applying \eqref{classABC} with $(j,k)=(1,0)$ and $(j,k)=(4,2)$ and summing such identities we have
%$$
%\sharp A_0^1+\sharp A_2^4=\sharp Y_0^1+\sharp Y_2^4-(\sharp B_0^1+\sharp C_0^1+\sharp B_2^4+\sharp C_2^4),
%$$
%whence, by \eqref{Y2052}, we get
%\begin{equation}\label{ineq0}
%\begin{aligned}
%\phantom{\le}&\sharp Y_0^1+\sharp Y_2^4 -(\sharp B_0^1+\sharp C_0^1+\sharp B_2^4+\sharp C_2^4)\\
% \le&\sharp Y_0^1+\sharp Y_2^4 +\sharp A_0^2+\sharp A_2^5 -(\sharp B_0^1+\sharp C_0^1+\sharp B_2^4+\sharp C_2^4)\le \sharp Y_0^2+\sharp Y_2^5\,.
%\end{aligned}
%\end{equation}
Moreover, 
in view of \eqref{Csum} and \eqref{paying}, we have
\begin{equation}\label{bardelta}
\begin{aligned}
&-(\sharp B_0^1+\sharp C_0^1+\sharp B_2^4+\sharp C_2^4)\\
\ge& -(\sharp B_0^1+\sharp B_2^4+\sharp C^1_0+\sharp C^2_0+ \sharp C^4_2+\sharp C^5_2)\\
\ge&-\sharp Y_0^{2,3}-\sharp Y_0^4- \sharp Y_1^4-\sharp Y_0^5-\sharp Y^5_1- \sharp Y_0^3-\sharp Y_0^{3,4}=:\bar\delta\,,
\end{aligned}
\end{equation}
which, by \eqref{ineq0}, yields 
\begin{equation*}%\label{ineq1}
\sharp Y_0^1+\sharp Y_2^4 +\bar\delta \le \sharp Y_0^2+\sharp Y_2^5\,.
\end{equation*}%\le \sharp Y_0^1+\sharp Y_2^4 +\sharp A_0^2+\sharp A_2^5+\bar\delta
%with 
%\begin{equation}\label{bardelta}
%\bar\delta:=-\sharp Y_0^{2,3}-\sharp Y_0^4- \sharp Y_1^4-\sharp Y_0^5-\sharp Y^1_5- \sharp Y_0^3-\sharp Y_0^{3,4}\,.
%\end{equation}
Therefore
\begin{equation*}%\label{equ}
\sharp Y_0^2+\sharp Y_2^5=\sharp Y_0^1+\sharp Y_2^4 +\delta\qquad\textrm{for some }\delta\ge\bar\delta\,,
\end{equation*}
or, equivalently,
\begin{equation}\label{equequ}
\sharp Y_0^2-\sharp Y_2^4=\sharp Y_0^1-\sharp Y_2^5 +\delta\qquad\textrm{for some }\delta\ge\bar\delta\,.
\end{equation}
By \eqref{equequ} and by \eqref{gb1}, we deduce that
\begin{equation*}
%\begin{aligned}
6\le 2(\sharp Y_0^1-\sharp Y_2^5)+\sharp Y_0^1-\sharp Y_2^5+\delta-r
%\end{aligned}
\end{equation*}
so that
\begin{equation}\label{comb}
\sharp Y_0^1-\sharp Y_2^5 \ge 2+\frac{r}{3}-\frac{\delta}{3}\,,
\end{equation}
and hence, by \eqref{equequ}, we get
\begin{equation}\label{comb2}
\quad \sharp Y_0^2-\sharp Y_2^4\ge 2+\frac{r}{3}+\frac{2}{3}\delta\,.
\end{equation}
By \eqref{starting}, \eqref{comb}, \eqref{comb2}, \eqref{defr} and \eqref{bardelta} we obtain
\begin{equation*}%\label{almconcl}
\begin{aligned}
\sum_{x\in\partial X}(1-\ii(x))\ge &4+\frac{2}{3}r+\frac{\bar\delta}{3}+\sharp Y^{2,3}_0+\sharp Y^3_0+\sharp Y^{3,4}_0+\sharp Y^4_0+\sharp Y^5_0\\
\ge & 4+\frac 4 3\sharp Y^5_1+\frac 4 3\sharp Y^5_0+\frac 2 3\sharp Y_1^4+\frac 2 3\sharp Y^4_0-\frac 2 3\sharp Y^{2,3}_0\\
&-\frac 1 3\sharp Y_0^{2,3}-\frac 1 3\sharp Y_0^4-\frac 1 3 \sharp Y_1^4-\frac 1 3\sharp Y_0^5-\frac 1 3\sharp Y^5_1-\frac 1 3 \sharp Y_0^3-\frac 1 3\sharp Y_0^{3,4}\\
&+\sharp Y^{2,3}_0+\sharp Y^3_0+\sharp Y^{3,4}_0+\sharp Y^4_0+\sharp Y^5_0\\
=&4+\frac{2}{3}\sharp Y_0^3+\frac{2}{3} \sharp Y_0^{3,4}+\frac{4}{3}\sharp Y_0^4+2\sharp Y_0^5+\frac 1 3 \sharp Y_1^4+\sharp Y_1^5\\
\ge &4\,,
\end{aligned}
\end{equation*}
%, in view of \red{Corollary \ref{A.3}}, implies that
%is equivalent  to the fact that all the faces touching the boundary are rhombuses. 
which implies \eqref{ineqnew} and hence the claim in the case that $O(X,V)$ has simple and closed polygonal boundary and $\sharp\Ed^{\Wi,\ext}(X,V)=0$\,.

\vskip5pt

If  $O(X,V)$ is connected and  $\Ed^{\Wi,\ext}(X,V)=\emptyset$ (without assuming that $O(X,V)$ has simple and closed polygonal boundary), we argue in the following manner.
%Let $N\in\N$ be such that $(X,V)\in\AC_N$\,.
%Since $\Ed^{\Wi,\ext}(X,V)=\emptyset$\,, b
By Lemma \ref{possibleanglesrmk}(iii), every point $x\in\partial X$ can lie either on two boundary edges or on four boundary edges. 
We set
$$
\X_{\bow}:=\{x\in\partial X\,:\, x\textrm{ lies on four boundary edges}\}.
$$
Let $\hat O_1(X,V),\ldots,\hat O_K(X,V)$ be the connected components of $O(X,V)\setminus\X_{\bow}$. 
%and let $\tilde O_1(X,V),\ldots,\tilde O_K(X,V)$ be their closure.
For every $k=1,\ldots,K$  we set $\tilde O_k(X,V):=\clos(\hat O_k(X,V))$, $\tilde \X_k:=\X\cap \tilde O_k(X,V)$ and we denote by  $(\tilde X_k,\tilde V_k)$ the corresponding configuration. Clearly, $O(\tilde X_k,\tilde V_k)=\tilde O_k(X,V)$ has simple and closed polygonal boundary, $\G(\tilde X_k,\tilde V_k)$ is connected and $\Ed^{\Wi,\ext}(\tilde X_k,\tilde V_k)=\emptyset$; therefore, by the above proven result, we get
\begin{equation}\label{comeprima}
\F(\tilde X_k,\tilde V_k)\ge \F(\tilde X'_k,\tilde V'_k)+4\qquad\textrm{for every }k=1,\ldots,K.
\end{equation}
Since $\sum_{k=1}^K\F(\tilde X'_k,\tilde V'_k)=\F(X',V')$, by \eqref{comeprima} we get
\begin{equation}\label{bows}
\begin{aligned}
\F(X,V)=&\frac 1 2 \Per_{\gr}(X,V)+\frac 1 2 \defe_\gr(X,V)+2\\
=&\sum_{k=1}^K\Big(\frac 1 2 \Per_{\gr}(\tilde X_k,\tilde V_k)+\frac 1 2 \defe_\gr(\tilde X_k,\tilde V_k)\Big)+2\\
=&\sum_{k=1}^K\F(\tilde X_k,\tilde V_k)+2(1-K)
\ge 
\sum_{k=1}^K \Big(\F(\tilde X'_k,\tilde V'_k)+4\Big)+2(1-K)\\
=& \F(X',V')+2K+2
\ge\F(X',V')+4.
\end{aligned}
\end{equation}

\vskip5pt

Finally we treat the general case.
Let $O_1(X,V),\ldots,O_K(X,V)$ be the connected components of $O(X,V)$. 
%and let $\tilde O_1(X,V),\ldots,\tilde O_K(X,V)$ be their closure.
For every $k=1,\ldots,K$  we set $\X_k:=\X\cap O_k(X,V)$ and we denote by  $(X_k,V_k)$ the corresponding configuration. Clearly, $O(X_k,V_k)=O_k(X,V)$ and $\Ed^{\Wi,\ext}(X_k, V_k)=\emptyset$, so that, by \eqref{bows}, we have
\begin{equation}\label{comeprima2}
\F(X_k,V_k)\ge \F(X'_k,V'_k)+4\qquad\textrm{for every }k=1,\ldots,K.
\end{equation}
Now, since $\sum_{k=1}^K\F(X'_k,V'_k)=\F(X',V')$, by \eqref{comeprima2} we obtain
\begin{equation*}
\begin{aligned}
\F(X,V)=&\frac 1 2 \Per_{\gr}(X,V)+\frac 1 2 \defe_\gr(X,V)+2\ge\sum_{k=1}^K\Big(\frac 1 2 \Per_{\gr}(X_k,V_k)+\frac 1 2 \defe_\gr(X_k,V_k)\Big)+2\\
=&\sum_{k=1}^K\F(X_k,V_k)+2(1-K)
\ge 
\sum_{k=1}^K \Big(\F(X'_k,V'_k)+4\Big)+2(1-K)\ge\F(X',V')+4,
\end{aligned}
\end{equation*}
which concludes the proof of the Lemma.

\end{proof}

\end{document}